\newcolumntype{P}[1]{>{\centering\arraybackslash}p{#1}} 
\newtheorem{thm}{Theorem}
\newtheorem{rem}{Remark}
\newtheorem{lem}{Lemma}
\newtheorem{cor}{Corollary}
\newtheorem{ex}{Example}
\newcommand{\im}[1]{{\operatorname{im} \, #1}}
\newcommand{\Ann}[1]{{\operatorname{Ann} \, #1}}
\newcolumntype{C}{>{$}c<{$}}
\begin{document}

	\title{Anyon Theory and Topological Frustration of High-Efficiency
		\\Quantum Low-Density Parity-Check  Codes}

	\author{Keyang Chen}
	\affiliation{Institute of Theoretical Physics, Chinese Academy of Sciences, Beijing 100190, China}
	\affiliation{Hefei National Research Center for Physical Sciences at the Microscale and School of Physical Sciences, University of Science and Technology of China, Hefei 230026, China}
	\affiliation{School of Physical Sciences, University of Chinese Academy of Sciences, Beijing 100049, China.}
	
	\author{Yuanting Liu}
	\affiliation{Institute of Theoretical Physics, Chinese Academy of Sciences, Beijing 100190, China}
	\affiliation{School of Physical Sciences, University of Chinese Academy of Sciences, Beijing 100049, China.}
	
	\author{Yiming Zhang}
	\affiliation{Hefei National Research Center for Physical Sciences at the Microscale and School of Physical Sciences, University of Science and Technology of China, Hefei 230026, China}
	\affiliation{Shanghai Research Center for Quantum Science and CAS Center for Excellence in Quantum Information and Quantum Physics, University of Science and Technology of China, Shanghai 201315, China}

	\author{Zijian Liang}
	\affiliation{International Center for Quantum Materials, School of Physics, Peking University, Beijing 100871, China}
	
	\author{Yu-An Chen}
	\affiliation{International Center for Quantum Materials, School of Physics, Peking University, Beijing 100871, China}
	
	\author{Ke Liu}
	\email{ke.liu@ustc.edu.cn}
	\affiliation{Hefei National Research Center for Physical Sciences at the Microscale and School of Physical Sciences, University of Science and Technology of China, Hefei 230026, China}
	\affiliation{Shanghai Research Center for Quantum Science and CAS Center for Excellence in Quantum Information and Quantum Physics, University of Science and Technology of China, Shanghai 201315, China}

	\author{Hao Song}
	\email{songhao@itp.ac.cn}
	\affiliation{Institute of Theoretical Physics, Chinese Academy of Sciences, Beijing 100190, China}
	
	\date{\today}
	
	\begin{abstract}
		Quantum low-density parity-check (QLDPC) codes offer a promising path to low-overhead fault-tolerant quantum computation but lack systematic strategies for exploration.
		In this Letter, we establish a topological framework for studying the bivariate-bicycle codes, a prominent class of QLDPC codes tailored for real-world quantum hardware.
		Our framework enables the investigation of these codes through universal properties of topological orders.
		In addition to efficient characterizations using Gr\"obner bases, we also introduce a novel algebraic-geometric approach based on the Bernstein--Khovanskii--Kushnirenko theorem.
		This approach allows us to analytically determine how the topological order varies with the generic choices of bivariate-bicycle codes under toric layouts. 
		Novel phenomena are unveiled, including \emph{topological frustration}, where ground-state degeneracy on a torus deviates from the total anyon number, and \emph{quasi-fractonic mobility}, where anyon movement violates energy conservation.
		We demonstrate their intrinsic link to symmetry-enriched topological orders and derive an efficient method for generating finite-size codes. 
		Furthermore, we extend the connection between anyons and logical operators using Koszul complex theory. 
		Our Letter provides a rigorous theoretical basis for exploring the fault tolerance of QLDPC codes and deepens the interplay among topological order, quantum error correction, and advanced algebraic structures.
	\end{abstract}
	
	\maketitle
	
	\emph{Introduction}---Quantum error correction (QEC) is essential for building scalable quantum computers on noisy hardware~\cite{Shor95,Steane96,Knill97, Kitaev03,Gottesman97,Dennis02}.
	The standard QEC methods have predominantly relied on topological codes~\cite{Dennis02,Terhal15,Campbell17,Fowler12}, such as surface code~\cite{Kitaev03, Bravyi98} and color code~\cite{Bombin06, Bombin07}, which have been successfully demonstrated on diverse quantum platforms~\cite{Satzinger21,Krinner22,USTC22,USTC23,Google24a,Quantinuum24,QuEra24,Monz21,Blatt14}.
	Nevertheless, these approaches require millions of physical qubits to tackle problems of practical interest~\cite{Litinski19,Gidney21,Alexeev21,Beverland21,Beverland22}, primarily due to the low encoding efficiency.
	This challenge has fueled the exploration of quantum low-density parity-check (QLDPC) codes for low-overhead quantum computation~\cite{Gottesman14,MacKay04,Kovalev13,Tillich14,Bravyi14}.
	The promising recent advances in the asymptotic regimes~\cite{Hastings20,Panteleev21,Panteleev22,Panteleev22b,Breuckmann21a,Breuckmann21b,Leverrier22} has spurred enormous research activities to probe their physical relevance~\cite{Rakovszky23,Rakovszky24,DeRoeck24,Yin24,liang2024extracting, liang2024operator,Zhu25} and possible hardware implementations~\cite{Tremblay22,Bravyi24,Xu24,Nguyen24,Tamiya24}.
	In particular, growing efforts are devoted to developing QLDPC codes that are implementable on existing and near-term quantum architectures~\cite{Tremblay22, Wang22, Bravyi24, Xu24, Hong24, Lin24, Pecorari25, Ruiz25, Scruby24, Zhang25}. 
	Among them, the bivariate-bicycle (BB) codes~\cite{Bravyi24} introduced by IBM have gained significant prominence and sparked intense interest~\cite{Bravyi22, Berthusen25, Poole24, Gong24, Shaw24, Wang24, Sayginel24, eberhardt2024pruning,Cross24,Cowtan24,Eberhardt24,Williamson24}.
	These codes offer the potential for a ten-fold reduction in qubit overheads compared to standard methods, while demanding only moderate long-range resources.
	
	However, unlike topological codes, 
	the exploration of systematic frameworks for universal fault tolerance of BB codes remains in early stages~\cite{Cross24,Cowtan24,He25Extractors,Eberhardt24,Williamson24}.
	Extending the scope of topological order to QLDPC codes may bridge this gap and unlock opportunities to optimize both resource efficiency and fault tolerance.
	Owing to their diverse forms and unclear scalability, it is nevertheless uncertain whether a generic topological correspondence for BB codes can be established. 
	Even if achievable, it would prompt further crucial questions regarding their nature and the applicability of conventional topological principles.

	In this Letter, we develop a topological framework to explore the universal properties of BB codes.
	Going beyond case-specific bases, we prove that, except for a single pathological case, \emph{all} BB codes with the toric layout are topological, and establish a global characterization of their topological orders.
	Moreover, we uncover a distinctive \emph{topological frustration} that induces a strong size-dependency in ground-state degeneracy (GSD) and logical dimensions.
	It deviates sharply from the general expectation of constant GSD on a torus, underscoring a key departure from standard topological codes.
	Finally, we show that the full characterization of topological orders in BB codes leads to a symmetry-enriched topological perspective,  deepening the connection between quantum codes and topological order.

	In addition, we also make notable methodological advancements.
	We creatively integrate the Bernstein--Khovanskii--Kushnirenko theorem from algebraic geometry and a dual property of Koszul complexes in homological algebra in the study of QLDPC codes.
	Efficient algorithms for characterizing topological structures are also introduced.
	Our Letter represents a fruitful interplay between various research areas and holds broad multi-disciplinary interest.

	\emph{Code Hamiltonian}---The BB codes with the toric layout can be defined on a square lattice with qubits living on links.
	Each of their stabilizer generators (checks) involves six qubits, including two remote ones that lead to long-range connections in syndrome measurements~\cite{Bravyi24}.
	Similar to the toric code, the $X$ and $Z$ checks can be ambiguously labeled by vertices ($v$) and plaquettes ($p$) of the lattice, respectively.
	The code Hamiltonian is given by summing over these checks,
	\begin{align}\label{eq:Hamiltonian}
		H = -\sum_{v} h_{X,v} - \sum_{p} h_{Z,p},
	\end{align}
	where each $h_{X,v}$ ($h_{Z,p}$) is a product of six Pauli $X$ ($Z$) operators. Lattice translation symmetry is implied.
	
	\begin{figure}[t]
		\includegraphics[width=1\columnwidth]{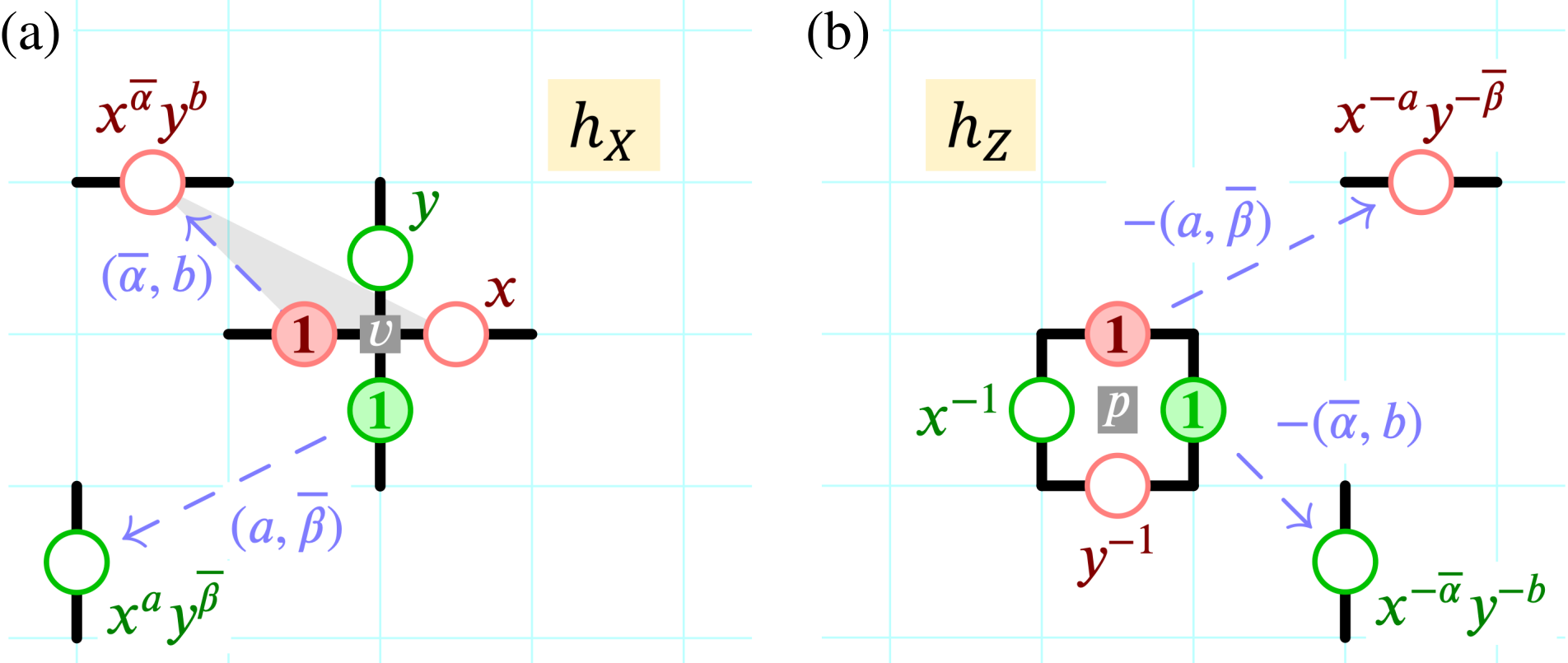}
		\caption{Code Hamiltonian of BB codes with the toric layout.
			Each stabilizer generator, $h_{X}$ (a) or $h_{Z}$ (b), involves six Pauli $X$ or $Z$ operators, respectively, including two remote ones compared to the toric code.
			Locations of the involved qubits are explicitly labeled; e.g.,
			$x^{\overline{\alpha}} y^b$ labels a qubit on the link at coordinates $(\overline{\alpha}, b)$ relative to the reference link $x^0y^0=1$, where horizontal and vertical links are viewed as two sublattices, indicated by red and green cycles, respectively.
			The dashed arrows indicate the relative positions of remote qubits with respect to their corresponding origins (filled circles).
			The shaded triangular region in (a) represents a Newton polytope later used for computing $Q$.}
		\label{fig:model}
	\end{figure}
	
	Explicit forms of $h_X$ and $h_Z$ can be conveniently specified using a polynomial representation~\cite{Haah13}.
	Concretely, we associate a lattice site at $(i,j)$ with a monomial $x^i y^j$, and use the Laurent polynomial $f(x,y) = \sum_{\{i, j\}} x^i y^j$ to denote collections of sites.
	The $h_X$ and $h_Z$ terms are expressed as
	\begin{align}\label{eq:checks}
		h_X = \begin{pmatrix} 
			f \\  g
		\end{pmatrix}, 
		\quad
		h_Z = \begin{pmatrix} 
			g^\ast \\  f^\ast
		\end{pmatrix},
	\end{align}
	with
	\begin{gather}\label{eq:generator}
		\begin{aligned}
			f = 1+x + x^{\overline{\alpha}} y^b, \\
			g = 1+y + y^{\overline{\beta}} x^a.
		\end{aligned}
	\end{gather}
	Here, the first (second) component of 
	$\left(\begin{smallmatrix}
		\square \\ \square 
	\end{smallmatrix}\right)$
	corresponds to qubits living on horizontal (vertical) links of the lattice.
	Polynomials $f$ and $g$ specify the six qubits involved in an $h_X$ check.
	Their spatial inversions, denoted as  $f^\ast (x,y) \coloneqq f(x^{-1}, y^{-1})$ and 
	$g^\ast (x,y) \coloneqq g(x^{-1}, y^{-1})$, accordingly define an $h_Z$ check.
	See Fig.~\ref{fig:model} for an illustration.

	The $x^{\overline{\alpha}} y^b$ and $y^{\overline{\beta}} x^a$ terms in Eq.~\eqref{eq:generator} specify the locations of remote qubits in the interactions.
	Each combination $(\overline{\alpha}, \overline{\beta}, a, b)$ corresponds to a different Hamiltonian.
	
	We denote the model as $\mathbb{BB}(\overline{\alpha}, \overline{\beta}, a, b)$ code. Without loss of generality, we set $\overline{\alpha}\equiv-\alpha$ and $\overline{\beta}\equiv-\beta$ with $\alpha \geq \beta\geq0$. 
	Other parameter ranges can be obtained by redefining the coordinate system~\cite{SM1};  
	all codes highlighted in Ref.~\cite{Bravyi24} are encompassed.

	\emph{Topological condition and topological order of BB codes}---A code is topological if its thermodynamic limit exhibits a gapped topological order.
	Formally, this requires the following chain complex to be \emph{exact}:
	\begin{equation}\label{eq:chain}
		R\xrightarrow{\; h_{X}={f \choose g}\;}R^{2}\xrightarrow{h_{Z}^{\dagger}=\left(g\ f\right)}R,
	\end{equation}
	namely, $\text{im}\,h_{X}=\ker h_{Z}^{\dagger}$.
	This is known as the topological condition and guarantees that no finite product of Pauli operators forms a nontrivial logical operator, hence the code distance is macroscopic~\cite{Haah13,Haah13_thesis}.
	Here, $R\coloneqq\mathbb{F}_{2}[x^{\pm},y^{\pm}]$ is the Laurent polynomial ring over $\mathbb{F}_{2}=\left\{ 0,1\right\} $, while $\text{im}$ and $\ker$ denote the image and kernel of a map, respectively.

	Justifying whether the topological condition generically holds for BB codes is a challenging task due to the infinite number of possible combinations of $(\overline{\alpha}, \overline{\beta}, a, b)$.
	Despite this complexity, we transcend case-specific verification and establish a general proof demonstrating that all BB codes defined in Eq.~\eqref{eq:generator} are topological for arbitrary $\alpha,\beta\in\mathbb{Z}_{\geq0}$ and $a,b\in\mathbb{Z}$, with the sole exception of $(\overline{\alpha}, \overline{\beta}, a, b) = (0,0,1,1)$. 
	
	Remarkably, we identify an equivalent relation between the exactness criterion and the vanishing of annihilators $\mathrm{Ann}_{R/\left(g\right)}\left(f\right)$ and $\mathrm{Ann}_{R/\left(f\right)}\left(g\right)$ for $f,g \neq 0$.
	This means $f$ and $g$ should share no common factor other than a monomial.
	Then, by generalized Eisenstein's criterion, we can show $f$ and $g$ with a toric layout are always irreducible.
	Finally, using this irreducibility, we determine those $\left(\alpha,\beta,a,b\right)$ values satisfying $\mathrm{Ann}_{R/\left(g\right)}\left(f\right)=\mathrm{Ann}_{R/\left(f\right)}\left(g\right)=0$ and conclude the topological condition is met for all $(\overline{\alpha}, \overline{\beta}, a, b) \neq (0,0,1,1)$.
	See SM~\cite{SM1} for the formal proof.

	Having the topological nature established, the next essential question is to determine the specific topological orders that BB codes can realize. This is achieved by identifying and enumerating the distinct anyon types they support. Specifically, all the anyon types in a BB code can be labeled by
	\begin{equation} \label{eq:C}
		\mathscr{C}\coloneqq\frac{R}{\left(f,g\right)}\oplus\frac{R}{(f^\ast,g^\ast)}.
	\end{equation}
	Here, the quotient spaces $R/(f,g)$ and $R/(f^\ast,g^\ast)$ correspond to the equivalence classes of $h_Z$ and $h_X$ excitations, respectively, defined modulo all possible local moves. They can be interpreted as electric ($e$) and magnetic ($m$) charges in the emergent gauge theory.

	The total number of anyon types is then given by
	\begin{equation}\label{eq:GSD}
		\left|\mathscr{C}\right|=2^{2Q} = 2^{k_{\max}},
		\qquad\text{with}\;Q \coloneqq \dim_{\mathbb{F}_{2}}\frac{R}{\left(f,g\right)}.
	\end{equation}
	using the $e$-$m$ duality $\dim_{\mathbb{F}_2}R/(f,g) = \dim_{\mathbb{F}_2} R/(f^\ast,g^\ast)$.
	Thus, the topological order of a BB code corresponds to a $\mathbb{Z}_2^Q$ gauge theory, with $Q$ being the \emph{topological index}.
	It will be clear that $k_{\max} =2Q$ gives the \emph{maximal} number of logical qubits a BB code can encode.
	
	A standard approach to proceed would be computing $Q$ case by case using the Gr\"obner-basis technique~\cite{Haah13}, which we shall adopt later when  examining specific examples.
	Nevertheless, this method is insufficient and impractical for revealing the generic behavior of $Q$, as the family of BB codes is unbounded.
	Instead, we introduce a more elegant and efficient approach to proceed.
	By leveraging the Bernstein--Khovanskii--Kushnirenko (BKK) theorem from algebraic geometry, we relate the topological index $Q$ to the mixed volume of Newton polytopes~\cite{SM1}.
	This enables us to derive analytical expressions for the generic dependence of $Q$ on all possible combinations of $(\overline{\alpha}, \overline{\beta}, a, b)$.

	\begin{figure}[t]
		\includegraphics[width=1\columnwidth]{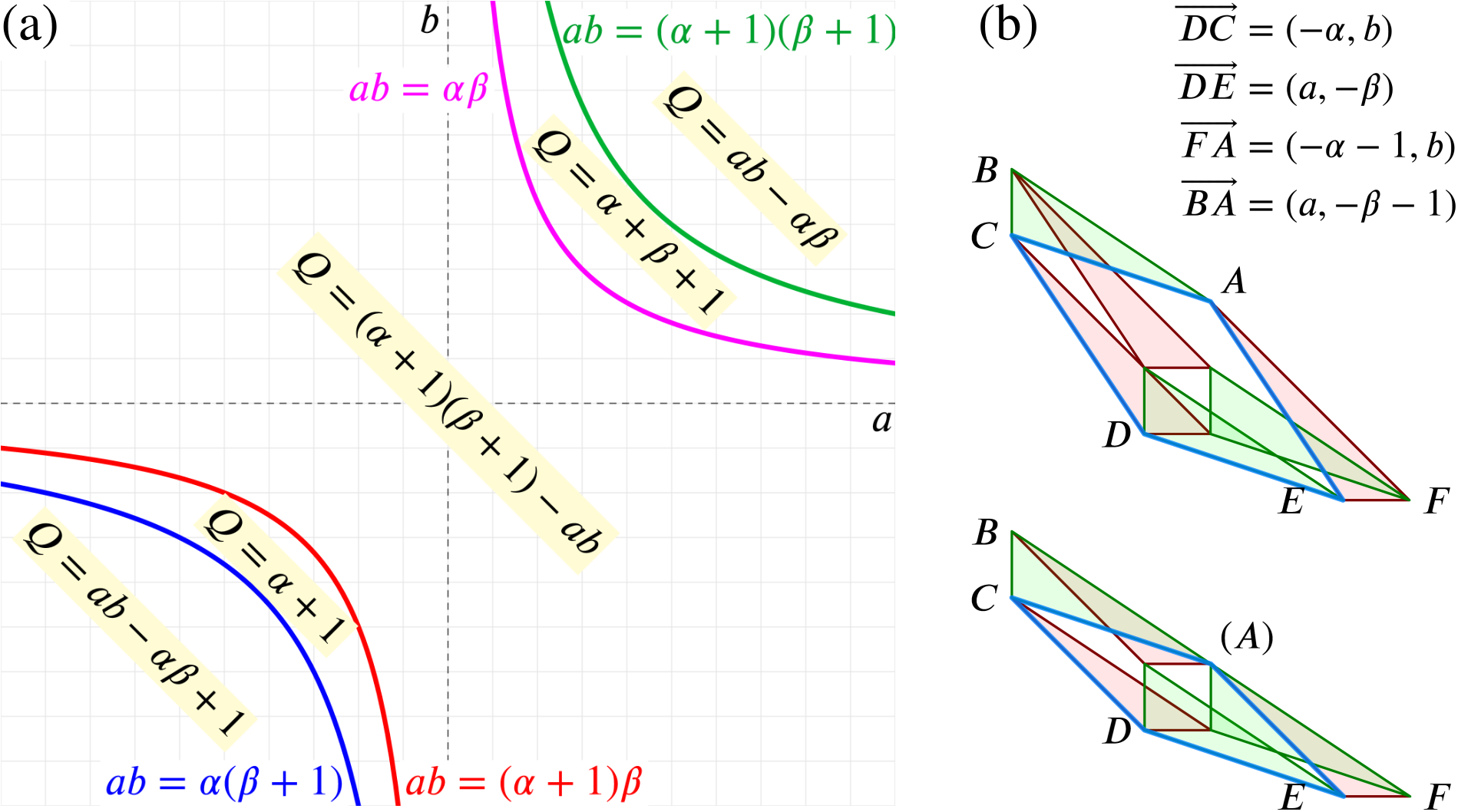}
		\caption{(a) The value of topological index $Q$ for $\mathbb{BB}(\overline{\alpha},\overline{\beta},a,b)$ codes.
			There are five distinct regimes separated by four hyperbolic curves.  The expression of $Q$ for each regime and separating boundary is indicated.
			(b) Illustration of the mixed volume.
			The top-right regime (top) and its boundary (bottom) in (a) are shown for example.
			Each red (cyan) triangle represents, up to translation, the Newton polytope associated with $f$ (respectively, $g$). 
			The mixed volume is given by hexagon $ABCDEF$ subtracting triangles $ABC$ and $AEF$, resulting in the parallelogram outlined in blue. 
		}
		\label{fig:anyonnum}
	\end{figure}
	
	The results of generic BB codes under the toric layout can be distilled into a single plot, as presented in Fig.~\ref{fig:anyonnum}.
	We organize them into five regimes, separated by four hyperbolic curves.
	A neat formula of $Q(\overline{\alpha}, \overline{\beta}, a, b)$ is discovered for each regime and each separating boundary.
	The value of $Q$, so the topological order, of an associated BB code can then be determined conveniently.
	For instance, the $\mathbb{BB}(\overline{1},\overline{1},3,3)$ code belongs to the middle regime of Fig.~\ref{fig:anyonnum}, and one reads out its $Q=8$.
	
	These results also prove that BB codes can encode arbitrarily many logical qubits despite moderate resources of long-range connectivity.
	The analytical expressions in Fig.~\ref{fig:anyonnum} manifest that $k_{\max}=2Q$ grows either linearly or quadratically with the distance of the remote interactions in $f$ and $g$.

	\emph{Topological frustration and quasi-fractonic mobility}---In practical QEC experiments, codes are inherently finite in size.
	It is not \textit{a priori} clear whether or how their finite-size properties relate to the thermodynamic limit.
	Remarkably, we show that the topological framework remains capable of capturing code properties for all sizes.
	Nevertheless, the relationship between finite-size features and the anyon perspective reveals an unconventional and intriguing behavior.

	Specifically, consider codes living on a torus of size $\ell \times m$, where $x^\ell=y^m=1$.
	In analogy to Eq.~\eqref{eq:GSD}, the size-dependent logical dimension is given by
	\begin{equation}\label{eq:GSDtorus}
		k(\ell, m)\coloneqq 2\dim_{\mathbb{F}_{2}}\frac{R}{\mathcal{I}_{\ell m}} \leq k_{\max},
	\end{equation}
	with PBCs encoded into the ideal $\mathcal{I}_{\ell m} \coloneqq \big(f,g,x^\ell-1,y^m-1\big)$ of $R$.
	A rigorous proof of this formula is provided in SM~\cite{SM1}. For each $\mathbb{BB}(\overline{\alpha}, \overline{\beta}, a, b)$ code, one can determine $k(\ell, m)$ by computing the Gr\"obner bases $\mathcal{G}_{\ell m}$ of the ideal $\mathcal{I}_{\ell m}$.
	Gr\"obner bases are special generating sets of ideals and can be found efficiently utilizing Buchberger's algorithm~\cite{SM1}.

	For concreteness, we again take the $\mathbb{BB}(\overline{1},\overline{1},3,3)$ code as an example, while the analysis applies to generic BB codes.
	We compute $k(\ell, m)$ for \emph{all} finite lattices $(l, m)$, with details kept in SM~\cite{SM1}, and find their values can be subsumed into four classes,
	\begin{align}\label{eq:k_33}
		k(\ell, m) = \begin{cases}
			16, & \gcd(\ell,m) \in 12\mathbb{Z}, \\
			12, & \gcd(\ell,m) \in 6\mathbb{Z} -12\mathbb{Z}, \\
			8, & \gcd(\ell,m) \in 3\mathbb{Z} - 6\mathbb{Z},\\
			0, & \text{others}. 
		\end{cases}
	\end{align}

	Notably, $k(\ell, m)$ matches the anyon count $k_{\max} = \log\left|\mathscr{C}\right|=16$ for system sizes where the greatest common divisor $\gcd(\ell,m) \in 12\mathbb{Z}$. 
	However, it does not saturate $k_{\max}$ in most cases, but instead strongly oscillates with $(l, m)$.
	This sharply violates the general notion that $\text{GSD}=2^k$ on a torus should be constant and equal to the number of anyon types.
	We therefore refer to this unsaturation of logical dimension as \emph{topological frustration}.
	
	In fact, several frequently referenced examples of BB codes, including the $[[72,12,6]]$, $[[108,8,10]]$, $[[144,12,12]]$ codes~\cite{Bravyi24,Poole24,Berthusen25,Cross24,Gong24,Shaw24}, are just different finite-size instantiations of the $\mathbb{BB}(\overline{1},\overline{1},3,3)$ code Hamiltonian.
	The first and the third one, with $k=12$, fall into the same class in Eq.~\eqref{eq:k_33}, while the second one, with $k=8$, belongs to a different class.
	The considerable gaps between their $k$'s values and $k_{\max} = 16$ are precisely due to the topological frustration.

	\begin{figure}[t]
		\includegraphics[width=1\columnwidth]{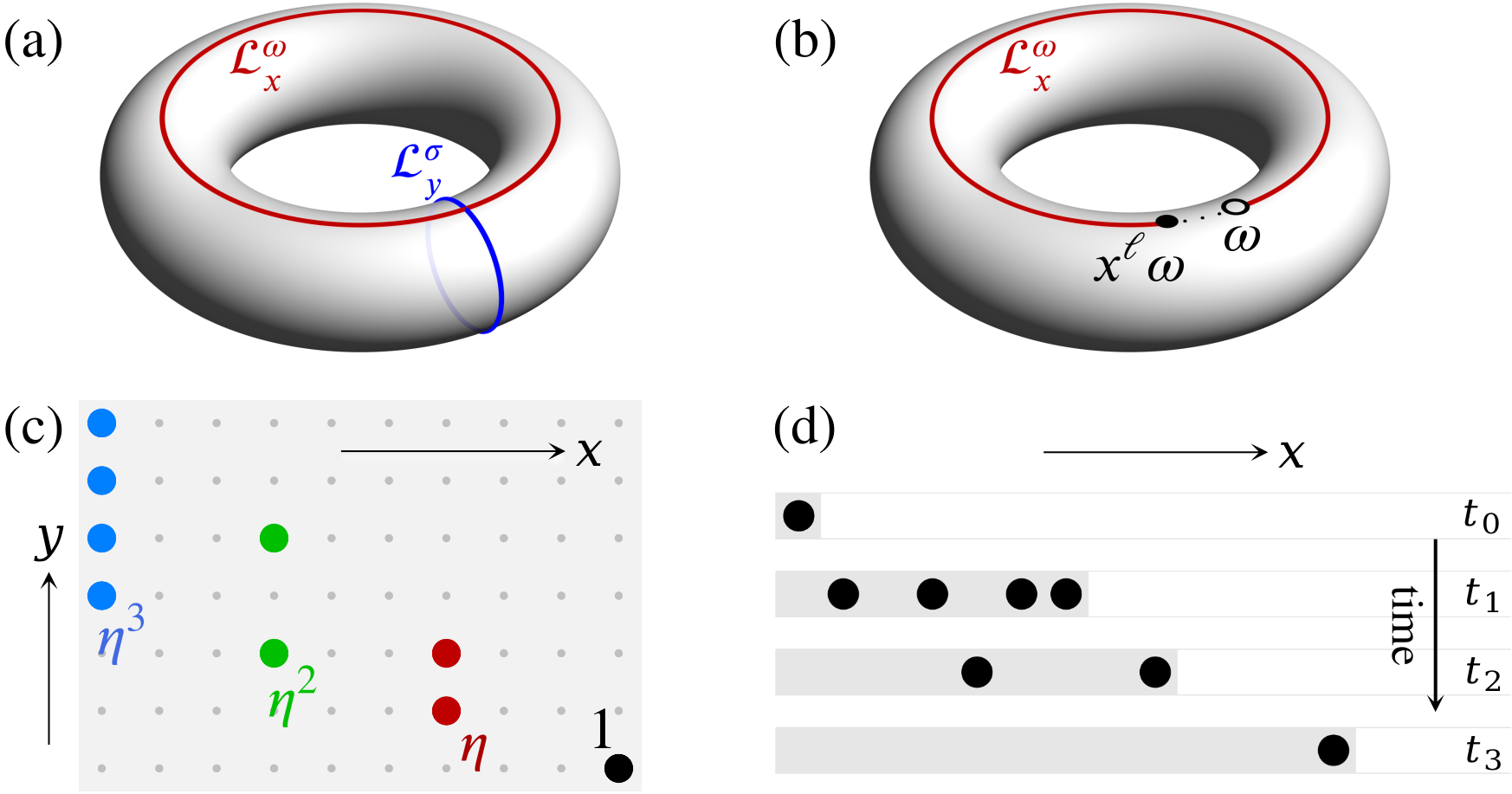}
		\caption{(a) Illustration of non-contractible loop operators $\mathcal{L}_{x}^{\omega}$
			and $\mathcal{L}_{y}^{\sigma}$ on a torus for anyons $\omega$ and $\sigma$. 
			(b) Loop operator $\mathcal{L}_{x}^{\omega}$
			corresponds to the propagation of $\omega$ around the system in the
			$x$ direction. (c $\&$ d) Anyon movements and excitation patterns of the $\mathbb{BB}(\overline{1},\overline{1},3,3)$ code.
			(c) Small gray dots denote $h_{Z}$-checks, while large colored dots (black, red, green, blue) represent $h_{Z}$-excitations.
			The excitation patterns illustrate a sequence of local moves $1\rightarrow\eta\rightarrow\eta^{2}\rightarrow\eta^{3}$
			(right to left), with $\eta=x^{-3}(y+y^{2})$. 
			(d) Hopping process of $h_{Z}$-excitations (black dots) along the $x$ axis and at four time moments.
			The single anyon at $t_{0}$ is mapped to its original form at $t_{3}$ after a finite number of hops.
			The bars indicate the hopping distance.}
		\label{fig:anyon_hop}
	\end{figure}

	Topological frustration reflects the breakdown of the conventional notion that non-contractible loop operators exist for all anyon types (Fig.~\ref{fig:anyon_hop}(a)). 
	In BB codes, this breakdown arises because anyon propagation deforms their shape, potentially violating the scenario in Fig.~\ref{fig:anyon_hop}(b), which is required for the existence of such logical loop operators.

	For an explicit demonstration on how anyons move here, we revisit the $\mathbb{BB}(\overline{1},\overline{1},3,3)$ code in the infinite-size setting, and focus on its $h_Z$-excitations (additional examples are provided in SM~\cite{SM1}).
	Allowed local moves correspond to elements of $\im{h_Z^\dagger}$.
	An illustrative process, generated by $g=1+y+x^{3}y^{-1} \in \im{h_Z^\dagger}$, is shown in Fig.~\ref{fig:anyon_hop}(c). Note that $1\rightarrow \eta=x^{-3}(y+y^{2})$ and its translations are permitted because $1+\eta\propto g$.
	This “one-to-many” hopping, which violates energy conservation, differs from toric code anyons and resembles fracton behavior in Haah’s code~\cite{Haah11,Vijay16,Song24}.
	
	However, unlike fractons, excitations in BB codes can still move along lines and permit ``one-to-one'' hopping over large enough distances, as depicted in Fig.~\ref{fig:anyon_hop}(d). Detailed analysis can be made again with the aid of Gr\"obner basis. The topological charges of $h_Z$ excitations are labeled by elements of $R/\im h_Z^\dagger = R/(f,g)$.
	As a useful trick, we identify $R/(f,g)$ with $\mathbb{F}_2[x,y,\overline{x},\overline{y}]/\mathcal{I}_\infty$, where $\mathcal{I}_{\infty} \coloneqq \big(f,g,x\overline{x}-1, y\overline{y}-1\big)$. Here, Laurent polynomials are treated as ordinary polynomials in variables $x,y,\overline{x},\overline{y}$, with $\overline{x}$ and $\overline{y}$ corresponding to $x^{-1}$ and $y^{-1}$, respectively. Using Buchberger's algorithm, we compute the Gr\"obner basis $\mathcal{G}_{\infty}$ of the ideal $\mathcal{I}_\infty$ for the $\mathbb{BB}(\overline{1},\overline{1},3,3)$ code:
	\begin{multline}\label{eq:GB_33}
		\mathcal{G}_\infty=\{x^{6}+x^{5}+x^{3}+x+1,\;x^{2}y+xy+y+x^{5}+x+1,\;y^{2}+\\
		y+x^{3},\;\overline{x}+x^{5}+x^{4}+x^{2}+1,\;\overline{y}+y+x^{5}+x^{4}+x^{3}+1\}
	\end{multline}
	with respect to the lexicographic order $\overline{y}>\overline{x}>y>x$. 
	The first polynomial in $\mathcal{G}_\infty$ generates an anyon hopping process along the 
	$x$-direction (Fig.~\ref{fig:anyon_hop}(d)): an anyon initialized at $t_0$ evolves into multiple excitations, but eventually recombines into its original form at time $t_3$; see SM~\cite{SM1} for details.

	These mobility features can be understood as a manifestation of nontrivial action of lattice translations on anyon types. 
	Specifically, monomials of order below the leading terms of the polynomials in $\mathcal{G}_\infty$ represent a linear basis for topological charges of $h_Z$ excitations in the  $\mathbb{BB}(\overline{1},\overline{1},3,3)$ code.
	There are eight such monomials $\{1,x,x^{2},x^{3},x^{4},x^{5},y,xy\}$,
	reconfirming $Q=8$ from the BKK calculation. 
	This basis reveals a nontrivial translation action: for instance, the charge represented by 
	``1'' and its $x$-translate ``$x$'' are distinct, forbidding the smooth move $1\rightarrow x$.
	Yet, due to the finiteness of $R/(f,g)$, anyons return to their initial types after a finite number of translations, enabling “one-to-one” hopping at large distances.

	We refer to this behavior as \textit{quasi-fractonic mobility}, in contrast to the smooth motion of toric code anyons and the persistent immobility of fractons. Notably, it implies that not all anyons return to their original types after traversing non-contractible loops on a torus, thereby reducing the number of realizable logical qubits.
	
	\begin{figure}[t]
		\includegraphics[width=1\columnwidth]{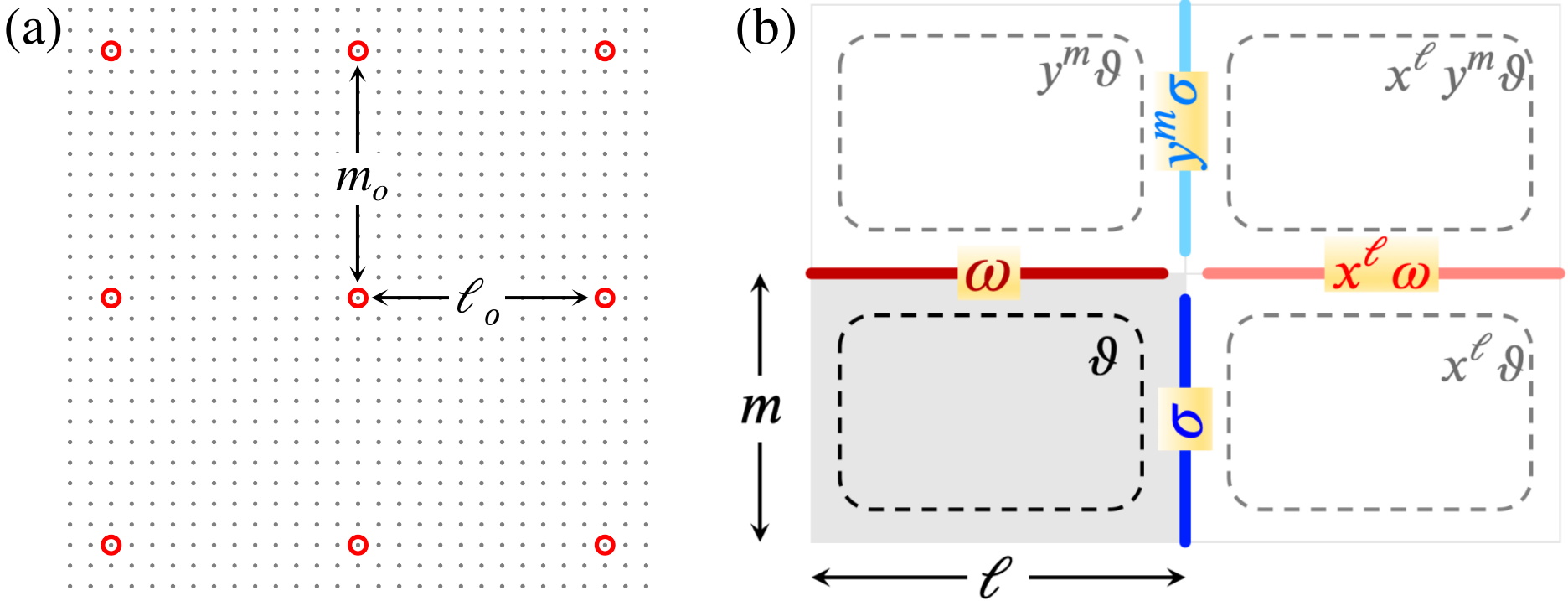}
		\caption{(a) Mobility sublattice $\Lambda_{\mathscr{C}}$ for the $\mathbb{BB}(\overline{1},\overline{1},3,3)$ code. Gray dots denote the original lattice $\Lambda$; red circles mark $\Lambda_{\mathscr{C}}$, with lattice spacings $\ell_o$ and $m_o$ defining the anyon periods. 
			(b) Generic correspondence between anyons and logical operators. The gray
			rectangle represents one copy of the system, with three replicas to illustrate
			PBCs. The $\omega$ and $\sigma$ strings correspond to $\mathcal{L}_{x}^{\omega}$
			and $\mathcal{L}_{y}^{\sigma}$ in Fig.~\ref{fig:anyon_hop}(a). These strings, along
			with their translations $x^{\ell}\omega$ and $y^{m}\sigma$, must
			be joined without leaving excitations to form a logical operator.
			Such a logical operator may still be trivial if the strings can be canceled
			by the four dashed loops, which represent the propagation of anyons $\vartheta$, $x^{\ell}\vartheta$, $y^{m}\vartheta$, and
			$x^{\ell}y^{m}\vartheta$, respectively. }
		\label{fig:periodicity}
	\end{figure}

	\emph{SET structure and generic anyon-logic duality}---The interplay between anyon types and lattice translations means that BB codes essentially feature symmetry-enriched topological (SET) orders~\cite{Wen02, Essin2013, Mesaros2013, Lu16, Barkeshli19, Barkeshli2022Classification, Delfino23}.
	Therefore, their full characterization requires not only identifying the emergent $\mathbb{Z}_2^Q$ gauge theory but also understanding how symmetries act on anyons.

	For systematic analysis, we introduce the notion of \emph{mobility sublattice} $\Lambda_{\mathscr{C}} = \{\lambda\}$, defined as the subgroup of translations that preserve all anyon types.
	For BB codes, it is determined by $\lambda- 1 \in \left(f,g\right)_R$, where  $ \left(f,g\right)_R$ denotes the ideal of $R$ generated by $f$ and $g$.
	The lattice spacings of $\Lambda_{\mathscr{C}}$ set emergent length scales, in units of which anyons move freely and within which anyons exhibit fracton-like behavior.
	
	We refer to the lattice spacings of $\Lambda_{\mathscr{C}}$ in the $x$ and $y$ directions as \textit{anyon periods} $\ell_o$ and $m_o$ (Fig.~\ref{fig:periodicity}(a)), which govern the code behavior under PBCs $x^\ell = y^m=1$. Topological frustration is absent, namely, $k(\ell, m)$ reaches its maximum $k_{\mathrm{max}} = 2Q$, if and only if $\ell_o$ divides $\ell$ and $m_o$ divides $m$.

	More remarkably, we prove a universal relation controlling the size dependence of $k(\ell,m)$~\cite{SM1},
	\begin{equation}\label{eq:k_relation}
		k\left(\ell,m\right)=k\left(\gcd\left(\ell,\ell_o\right),\gcd(m,m_o)
		\right).
	\end{equation}
	It reveals that $k$ depends only on the greatest common divisors
	$\gcd\left(\ell,\ell_o\right)$ and $\gcd(m,m_o)$, and periodicity follows naturally as $k\left(\ell,m\right)=k\left(\ell+\ell_o,m\right)=k(\ell,m+m_o)$.

	This relation provides an efficient way to search for finite-size codes.
	Clearly, for a code Hamiltonian $\mathbb{BB}(\overline{\alpha}, \overline{\beta}, a, b)$, its finite-size instantiations can accommodate logical qubits only when $(\ell, m)$ has non-trivial gcds with $(\ell_o, m_o)$.
	Thus, once the underlying anyon periods are known, we can predict the non-trivial sizes and organize them into appropriate sequences, avoiding brute-force examination of all $k(\ell, m)$.
	Moreover, it is sufficient to compute only the smallest size in each sequence since $k\left(\ell,m\right)$ is periodic.

	In general, anyon periods $(\ell_o, m_o)$ can be very large, but we develop an efficient algorithm capable of finding billion-sized periods within seconds.
	The key idea is to reduce the problem to computing polynomial periods, which can be done efficiently~\cite{SM1}.
	Take the $\mathbb{BB}(\overline{1},\overline{1},3,\overline{3})$ code, for instance.
	We find it has five non-trivial size sequences, with $k_{\max} = 26$ and $(\ell_o, m_o) = (762,762)$.
	The $[[288,12,18]]$ code with $k=12$ highlighted in Ref.~\cite{Bravyi24} is one of its finite-size instantiations.
	See SM~\cite{SM1} for details with various other examples.

	Another important connection yet to be established is the generic correspondence between the logical space and the anyon space, which rationalizes the anyon interpretation of logical operators. 
	Although such correspondence is intuitive for conventional topological codes, justifying it for BB codes is non-trivial due to the presence of topological frustration.
	
	Nevertheless, we demonstrate that such a correspondence exists and can be explicitly constructed. It can be viewed as a duality property between two Koszul complexes.
	The first complex is $R_{\ell, m} \longrightarrow R_{\ell, m}^2 \longrightarrow R_{\ell, m}$, corresponding to the BB codes  (\ref{eq:chain})
	with PBCs imposed (i.e., $R$ replaced by $R_{\ell, m}\coloneqq R/(x^{\ell}-1,y^{m}-1)$). 
	Its homology defines logical operators.
	The second complex is 
	\begin{equation}\label{eq:koszul}
		\frac{R}{\left(f,g\right)}\xrightarrow{\partial_{2}={y^{m}-1 \choose x^{\ell}-1}}\frac{R}{\left(f,g\right)}\oplus\frac{R}{\left(f,g\right)}\xrightarrow{\partial_{1}=\left(x^{\ell}-1\;\;y^{m}-1\right)}\frac{R}{\left(f,g\right)},
	\end{equation}
	where $\ker \partial_1$ represents the condition that anyon-labeled string operators can be connected to produce logical operators, and $\im{\partial_2}$ describes those equivalent to contractible loops, as visualized in  Fig.~\ref{fig:periodicity}(b). 
	We prove that the homologies of these two Koszul complexes are isomorphic, providing a systematic way to label logical operators by anyons, even in cases with topological frustration; see SM~\cite{SM1} for details.

	\emph{Summary and Outlook}---In this Letter, we established a topological framework that unifies generic BB codes under toric layouts to explore their universal properties.
	It offered systematic procedures to analyze the anyon structures and logical qubits, and revealed that BB codes are generically frustrated.
	We demonstrated the SET origin of the topological frustration and extended the scope of logical-anyon correspondence. These novel perspectives and algorithms are elaborated through five theorems in SM~\cite{SM1}.
	
	Our framework and findings rigorously underpin the topological theory of BB codes and provide a solid foundation for further exploring their fault tolerance.
	For instance, we can then apply anyon condensation theory to design fault-tolerant logical operations of BB codes, akin to the surface code and color code~\cite{Bombin09,Krishna21,Kesselring24}.
	Moreover, it enables the application of statistical-mechanical approaches and duality techniques to investigate their fault-tolerance thresholds~\cite{Dennis02,Wang03,Chubb21,Song22,Nishimori07}.
	Aside from BB codes, the topological framework and the algebraic tools developed here may extend to other implementation oriented QLDPC codes like hypergraph product (HGP) codes~\cite{Tillich14}, generalized bicycle (GB) codes~\cite{Kovalev13}, and more general product codes~\cite{Breuckmann21a,Rakovszky24}.

	\begin{acknowledgments}
		\emph{Acknowledgments}---This work is supported by the National Natural Science Foundation of China (Grants No.~12474145, No.~12522502, No.~12447101, and No.~12474491), the New Cornerstone Science Foundation through the XPLORER PRIZE, Anhui Initiative in Quantum Information Technologies, Shanghai Municipal Science and Technology Major Project (Grant No.~2019SHZDZX01), the Fundamental Research Funds for the Central Universities, Peking University.
	\end{acknowledgments}

	\bibliography{qec}
	
	\onecolumngrid
\clearpage
\makeatletter
\begin{center}
	\textbf{\large --- Supplemental Material ---\\[0.5em]Anyon Theory and Topological Frustration of High-Efficiency 
		\\Quantum Low-Density Parity-Check Codes}\\[1em]
	
	Keyang Chen$^{1,2,3}$, Yuanting Liu$^{1,3}$, Yiming Zhang$^{2,4}$, Zijian Liang$^{5}$, Yu-An Chen$^{5}$, Ke Liu$^{2,4}$ and Hao Song$^{1}$
	
	\vspace{5pt}
	
	{\centering \emph{$^1$Institute of Theoretical Physics, Chinese Academy of Sciences, Beijing 100190, China}}
	
	{\centering \emph{$^2$Hefei National Research Center for Physical Sciences at the Microscale and School of Physical Sciences, \\ University of Science and Technology of China, Hefei 230026, China}}
	
	{\centering \emph{$^3$School of Physical Sciences, University of Chinese Academy of Sciences, Beijing 100049, China.}}
	
	{\centering \emph{$^4$Shanghai Research Center for Quantum Science and CAS Center for Excellence in Quantum Information and Quantum Physics, University of Science and Technology of China, Shanghai 201315, China}}
	
	{\centering \emph{$^5$International Center for Quantum Materials, School of Physics, Peking University, Beijing 100871, China}}

	\thispagestyle{titlepage}
\end{center}
\setcounter{equation}{0}
\setcounter{figure}{0}
\setcounter{table}{0}
\setcounter{page}{1}
\setcounter{section}{0}
\renewcommand{\theequation}{S\arabic{equation}}
\renewcommand{\thefigure}{S\arabic{figure}}
\renewcommand{\thetable}{S\arabic{table}}
\renewcommand{\thesection}{S.\Roman{section}}

\section{Polynomial formalism for BB codes}\label{app:ring}

Additional details about the polynomial formalism, including (\ref{subsec:Para}) parameter reduction for BB codes via coordinate transformations,
(\ref{subsec:GSDtorus}) derivation of Eq.~\eqref{eq:GSDtorus}, and (\ref{subsec:ibm_code}) transformation of codes from Ref.~\cite{Bravyi24} to the convention of Eq.~\eqref{eq:generator}. 

\subsection{Parameter Reductions via Coordinate Transformations}\label{subsec:Para}

Here, we justify that the BB codes $\mathbb{BB}(\overline{\alpha},\overline{\beta},a,b)$
can be restricted to $\overline{\alpha}=-\alpha$ and $\overline{\beta}=-\beta$
with $\alpha\geq\beta\geq0$, without loss of generality, by showing that other parameter ranges can
be mapped from the current one by combining the following coordinate transformations. 

\paragraph{(1) Shifting coordinate origins. }

Shifting the choice of coordinate origins can lead to all transformations
of the form
\begin{equation}
	\mathbb{BB}\left(f,g\right)\rightarrow\mathbb{BB}\left(\xi f,\zeta g\right),\label{eq:shift_origin}
\end{equation}
where $\xi$ and $\zeta$ are mononials. Here and throughout, $\mathbb{BB}\left(f,g\right)$
denotes the code Hamiltonian specified by polynomials $f$ and $g$.

For simplicity, we demonstrate this by the toric code on a square lattice
(depicted in Fig.~\ref{fig:S1}). Each unit cell contains \emph{two}
qubits, corresponding to the horizontal and vertical links. Thus,
the system can be viewed as a composite of \emph{two} lattices of
qubits (illustrated as red and green circles, respectively). For each
qubit lattice, a monomial $x^{i}y^{j}$ denotes the vertex with position
$\left(i,j\right)$ relative to its origin. With respective to the
choice of origins (indicated by filled circles) in Fig.~\ref{fig:S1}~(a)
and (b),
\begin{equation}
	h_{X}={1+x \choose 1+y}\quad\text{and}\quad h_{Z}={1+y^{-1} \choose 1+x^{-1}}
\end{equation}
represent the vertex and plaquette operators. A different choice of
origins is used in Fig.~\ref{fig:S1}(c), with the origin of the
green lattice shifted by $\lambda$, such as $\lambda=x^{2}$ for
illustration. The representation of vertex and plaquette operators
turns into
\begin{equation}
	h_{X}^{\prime}={\lambda\left(1+x\right) \choose 1+y}\quad\text{and}\quad h_{Z}^{\prime}={1+y^{-1} \choose \lambda^{-1}\left(1+x^{-1}\right)}.
\end{equation}
The absolute positions of $h_{X}^{\prime}$ and $h_{Z}^{\prime}$
can also be shifted without affecting the stabilizer group; in particular,
\begin{equation}
	h_{X}^{\prime}\rightarrow\zeta h_{X}^{\prime}={\xi\left(1+x\right) \choose \zeta\left(1+y\right)}\quad\text{and}\quad h_{Z}^{\prime}\rightarrow\zeta^{-1}h_{Z}^{\prime}={\xi^{-1}(1+y^{-1}) \choose \zeta^{-1}\left(1+x^{-1}\right)},
\end{equation}
where $\xi=\lambda\zeta$ and $\zeta$ are monomials. The discussion
applies to all BB codes, leading to transformation in Eq.~\eqref{eq:shift_origin}. 

\begin{figure}[h]
	\includegraphics[width=1\columnwidth]{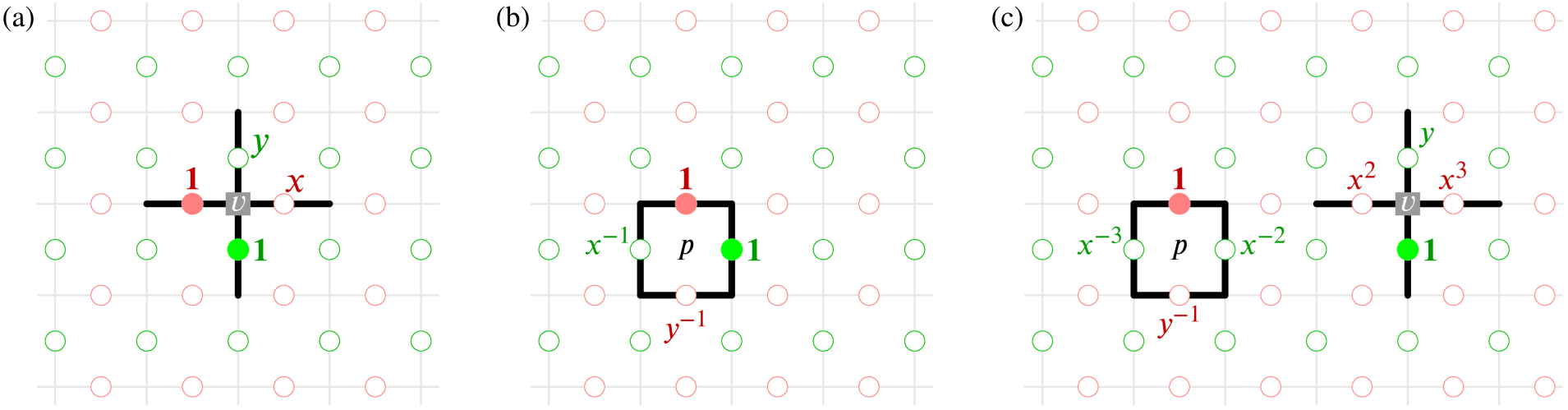}
	
	\caption{Transformation of the polynomial representation for the toric code
		under a coordinate origin shift. In (a) and (b), the toric code is
		represented by polynomials $f=1+x$ and $g=1+y$. In (c), shifting
		the origin of the green lattice by $\lambda=x^{2}$ transforms the
		ring representation from $\left(f,g\right)$ to $\left(\lambda f,g\right)$.}
	
	\label{fig:S1}
\end{figure}

\paragraph{(2) Redefining axis directions.}

Reversing the choices of positive $x$ and $y$ directions induces
the transformations
\begin{equation}
	\mathbb{BB}\left(f(x,y),g(x,y)\right)\rightarrow\mathbb{BB}(f(x^{-1},y),g(x^{-1},y))\quad\text{and}\quad\mathbb{BB}\left(f(x,y),g(x,y)\right)\rightarrow\mathbb{BB}(f(x,y^{-1}),g(x,y^{-1}))\label{eq:axis}
\end{equation}
in the polynomial representation, respectively.

These transformations, together with proper origin shifts, relates
parameters $-\alpha$ and $-\beta$ with $1+\alpha$ and $1+\beta$,
respectively: 
\[
\mathbb{BB}(-\alpha,-\beta,a,b)\sim\mathbb{BB}\left(1+\alpha,-\beta,-a,b\right)\sim\mathbb{BB}\left(-\alpha,1+\beta,a,-b\right)\sim\mathbb{BB}\left(1+\alpha,1+\beta,-a,-b\right).
\]
Eventually, combining the freedom to interchange which axis is labeled
$x$ and which is labeled $y$ allows us to require $\alpha\geq\beta\ge0$
without loss of generality. 

\subsection{Proof of Eq.~\eqref{eq:GSDtorus} } \label{subsec:GSDtorus}

In the main text, the expression for $k$ given in Eq.~\eqref{eq:GSDtorus} is introduced
by analogy, which does not constitute a formal proof. In this subsection,
we provide a rigorous derivation to establish its validity.

The ground state degeneracy (GSD) of the code Hamiltonian on a torus
can be calculated directly using the formula
\begin{equation}
	\text{GSD}=\text{tr}\left(\prod_{v}\frac{1+\hat{h}_{X,v}}{2}\prod_{p}\frac{1+\hat{h}_{Z,p}}{2}\right),
\end{equation}
which evaluates the trace of the projection operator that selects
the ground state Hilbert subspace. For computational purposes, we
use the basis of the full Hilbert space defined by $\left|\psi\right\rangle \coloneqq X\left(\psi\right)\left|0\right\rangle $,
where $\psi\in R_{\ell,m}^{2}$ with $R_{\ell,m}=R/(x^{\ell}-1,y^{m}-1)$
labels any collection of qubits in the system. Here, $X\left(\psi\right)$
represents the product of Pauli operators associated with these qubits,
and $\left|0\right\rangle $ denotes the tensor product of the $Z=+1$
states for all qubits. 

The factor $\prod_{p}\frac{1}{2}(1+\hat{h}_{Z,p})$ ensures that only
states $\left|\psi\right\rangle $ with $\psi\in\ker h_{Z}^{\dagger}$
contribute to the GSD, simplifying the expression to 
\begin{equation}
	\text{GSD}=\sum_{\psi\in\ker h_{Z}^{\dagger}}\left\langle \psi\right|\prod_{v}\frac{1+\hat{h}_{X,v}}{2}\left|\psi\right\rangle .
\end{equation}
For notational clarity, here we use $h_{Z}$ and $h_{X}$ to denote the linear maps from $R_{\ell,m}$
to $R_{\ell,m}^{2}$, and we use $\hat{h}_{Z,p}$ and $\hat{h}_{X,v}$ to denote the corresponding check operators.

Next, we observe that 
\begin{equation}
	\prod_{v}\frac{1}{2}(1+\hat{h}_{X,v})=\frac{1}{|R_{\ell,m}|}\sum_{\omega\in R_{\ell,m}}X\left(h_{X}\omega\right),
\end{equation}
where $|R_{\ell,m}|$ relates to the number of vertices $\left|V\right|$
by $|R_{\ell,m}|=2^{|V|}$. Only terms with $h_{X}\omega=0$ contribute
to the GSD, leading to
\begin{equation}
	\text{GSD}=\sum_{\psi\in\ker h_{Z}^{\dagger}}\frac{1}{\left|R\right|}\left\langle \psi\right|\sum_{\omega\in R_{\ell,m}}X\left(h_{X}\omega\right)\left|\psi\right\rangle =\frac{|\ker h_{Z}^{\dagger}|\cdot\left|\ker h_{X}\right|}{\left|R_{\ell,m}\right|}.
\end{equation}

Moreover, the linear maps $h_{X}:R_{\ell,m}\rightarrow R_{\ell,m}^{2}$
and $h_{Z}^{\dagger}:R_{\ell,m}^{2}\rightarrow R_{\ell,m}$ give rise
to ismorphisms 
\begin{equation}
	\text{im}\,h_{X}\cong\frac{R_{\ell,m}}{\ker h_{X}}\quad\text{and}\quad\text{im}\,h_{Z}^{\dagger}\cong\frac{R_{\ell,m}^{2}}{\ker h_{Z}^{\dagger}}.
\end{equation}
Consequently, we have $\left|\ker h_{X}\right|=|R_{\ell,m}|/\left|\text{im}\,h_{X}\right|$
and $|\ker h_{Z}^{\dagger}|=|R_{\ell,m}|^{2}/|\text{im}\,h_{Z}^{\dagger}|$.
Substituting these into the GSD expression yields
\begin{equation}
	\text{GSD}=\frac{|R_{\ell,m}|^{2}}{\left|\text{im}\,h_{X}\right||\text{im}\,h_{Z}^{\dagger}|}.
\end{equation}

Finally, we note that the matrices $h_{X}={f \choose g}$ and its
transpose $h_{X}^{T}=\left(f\ g\right)$ have the same rank, namely,
\begin{equation}
	\dim_{\mathbb{F}_{2}}\text{im}\,h_{X}=\dim_{\mathbb{F}_{2}}\text{im}\,h_{X}^{T}.
\end{equation}
Since the images of $h_{X}^{T}$ and $h_{Z}^{\dagger}=\left(g\ f\right)$
both result in the ideal of $R_{\ell,m}$ generated by $f$ and $g$,
we obtain
\begin{equation}
	\text{GSD}=\left|\frac{R_{\ell,m}}{\left(f,g\right)}\right|^{2}.
\end{equation}
Expressed in terms of the number of logical qubits, this gives 
\begin{equation}
	k(\ell,m)=\log_{2}\mathrm{GSD}=2\dim_{\mathbb{F}_{2}}\frac{R_{\ell,m}}{\left(f,g\right)}=2\dim_{\mathbb{F}_{2}}\frac{R}{\left(f,g,x^{\ell}-1,x^{m}-1\right)},
\end{equation}
which corresponds to Eq.~\eqref{eq:GSDtorus} in the main text.

\subsection{Relation to the codes in Ref.~\cite{Bravyi24}}\label{subsec:ibm_code}
The authors of Ref.~\cite{Bravyi24} examine BB codes with toric layouts, highlighting their suitability for implementation on superconducting hardware. They establish a sufficient condition for a code to admit a toric layout, formalized in the following lemma:
\begin{lem}[Lemma 4 in Ref.~\cite{Bravyi24}]
	Consider a code defined by three-term generating polynomials:
	\begin{equation}
		f = f_1+f_2+f_3, \quad g = g_1+g_2+g_3.
	\end{equation}
	where each $f_i$, $g_i$ is a power of $x$ or $y$. This code possesses a toric layout if there exist indices $i$, $j$, $r$, $s\in \{1,2,3\}$, such that
	\begin{enumerate}
		\item $\langle f_if_j^{-1}, g_rg_s^{-1} \rangle = \mathbb{Z}_\ell \otimes \mathbb{Z}_m$ and 
		\item ${\rm ord}(f_if_j^{-1}){\rm ord}(g_rg_s^{-1})=\ell m$.
	\end{enumerate}
\end{lem}

One can easily verify that the ansatz in Eq.~\eqref{eq:generator} in the main text satisfies these conditions by choosing $f_i = x$, $f_j = 1$, $g_r = y$, and $g_s = 1$.

As an illustrative example, consider the $[[288,12,18]]$ code from Ref.~\cite{Bravyi24}, generated by $A = x^3+y^{2}+y^{7}$, $B = y^3+x+x^{2}$ with $(\ell,m)=(12,12)$. Applying the group automorphism $y \mapsto y^{7}$, these two polynomials transform as
\begin{align}
	A &\mapsto x^{3} + y^2 + y = y\left(1+y + y^{-1}x^{3}\right), \\
	B &\mapsto y^{-3} + x + x^{2} = x\left(1+x + x^{-1}y^{-3}\right),
\end{align}
where the boundary conditions $x^{12}=y^{12}=1$ and the origin shift defined in Eq.~\eqref{eq:shift_origin} are applied. So, this code corresponds to the $\mathbb{BB}(\overline{1},\overline{1},3,\overline{3})$ code family in our convention.

Notably, all codes proposed in Ref.~\cite{Bravyi24} can be systematically mapped to the form of Eq.~\eqref{eq:generator} via an appropriate automorphism and origin shift. The specific transformations for these codes are detailed in Table~\ref{tab:ibm_code}.

\begin{table}[h]
	\begin{tabular}{|c|c|c|c|c|c|c|}
		\hline
		[[$n,k,d$]] & $\ell, m$ & $A$ & $B$ & automorphism & $f$ & $g$ \\ \hline
		$[[72,12,6]]$ & $6,6$ & $x^3+y+y^2$ & $y^3+x+x^2$ & * &$1+x+x^{-1}y^3$ & $1+y+y^{-1}x^3$ \\ \hline
		$[[90,8,10]]$ & $15,3$ & $x^9+y+y^2$ & $1+x^2+x^7$ & $x\mapsto x^8$ & $1+x+x^{-4}$ & $1+y+y^{-1}x^{-3}$\\ \hline
		$[[108,8,10]]$ & $9,6$ & \multirow{2}{*}{$x^3+y+y^2$} & \multirow{2}{*}{$y^3+x+x^2$} & \multirow{2}{*}{*} & \multirow{2}{*}{$1+x+x^{-1}y^3$} & \multirow{2}{*}{$1+y+y^{-1}x^3$} \\ \cline{1-2}
		$[[144,12,12]]$ & $12,6$ &  &  &  &  &  \\ \hline
		$[[288,12,18]]$& $12,12$ & $x^3+y^2+y^7$& $y^3+x+x^2$ & $y \mapsto y^7$ & $1+x + x^{-1}y^{-3}$ & $1+y + y^{-1}x^{3}$\\ \hline
		$[[360,12,\leq 24]]$& 30,6 & $x^9+y+y^2$ & $y^3+x^{25}+x^{26}$ & * & $1+x+x^{-25}y^3$ & $1+y+y^{-1}x^9$\\ \hline
		$[[756,16,\leq 34]]$& 21,18 & $x^3+y^{10}+y^{17}$ & $y^5+x^3+x^{19}$ & $x\mapsto x^4$, $y \mapsto y^{13}$ & $1+x + x^{-12}y^{11}$ & $1+y + y^{-4}x^{12}$\\ \hline
	\end{tabular}
	\caption{Transformations for BB codes with three-term generators proposed in Ref.~\cite{Bravyi24}. The table summarizes automorphisms and origin shifts required to map each code to the form in Eq.~\eqref{eq:generator}.}\label{tab:ibm_code}
\end{table}

\section{Gr\"{o}bner basis and their applications}
A Gr\"obner Basis is a fundamental concept in computational algebraic geometry and commutative algebra, serving as a powerful tool for solving systems of polynomial equations and performing algebraic computations in multivariate polynomial rings. This mathematical framework is instrumental in various fields, including optimization, cryptography, and automated theorem proving. 

To determine whether a given polynomial $f$ belongs to the ideal $\mathcal{I}=(f_1,f_2,\cdots, f_n)$ (we restrict our discussion to cases where the coefficient ring is a field), the task can be effectively addressed in the single-variable case. Specifically, one can divide $f$ by the greatest common divisor (gcd) of the polynomials $f_1,f_2,\cdots, f_n$. If the remainder is found to be zero, this indicates that $f$ is indeed an element of the ideal. The computation of the $\gcd$ is efficiently facilitated by the well-established Euclidean algorithm. 

In contrast, the situation becomes more complex when extending to multivariate polynomials. Unlike univariate polynomial rings, multivariate polynomial rings are not \textit{principal ideal domains}, which means that an ideal $\mathcal{I}$ cannot always be generated by a single element. Furthermore, even upon identifying a set of generators $\mathcal{G}$ for $\mathcal{I}$, the remainder of a polynomial $f$ with respect to $\mathcal{G}$ may not be unique and can vary depending on the division process employed. For instance, consider the ideal $\mathcal{I}=(yx+1, y^2+x)$ in the polynomial ring $\mathbb{F}_2[x,y]$, and let $f=y^2x+yx+y+x^3$, with lexicographic order $y>x$. If we first divide $f$ by $yx+1$ and then by $y^2+x$, the remainder obtained is $r_1=x^3+1$. Conversely, if we divide $f$ first by $y^2+x$ and then by $yx+1$, the remainder is $r_2 = y + x^3 + x^2 + 1$, with $r_1 \neq r_2$. 

The Gr\"obner basis in the context of multivariate polynomials serves a role analogous to that of the gcd in the univariate case. In brief, given a non-zero polynomial ideal $\mathcal{I}$, there always exists a Gr\"obner basis $\mathcal{G}=\{g_1,g_2,\cdots,g_m\}$ with respect to a certain monomial order, such that the remainder obtained from the division of $f$ by $\mathcal{G}$ is unique. Moreover, the remainder is zero if and only if $f$ is an element of $\mathcal{I}$. For the case above, it is noteworthy that $f$ is indeed an element of the ideal $\mathcal{I}$. To verify this, we calculate the Gr\"obner basis for $\mathcal{I}$, which is given by $\mathcal{G} = \{x^3+1, y+x^2\}$. Upon dividing $f$  by the Gr\"obner basis $\mathcal{G}$, we find that $f = y \cdot (x^3+1)+(yx+x) \cdot (y+x^2)$, which yields a remainder of zero. For a rigorous description, we refer the reader to \cite{adams94}.

The Gr\"obner basis is unique under certain conditions and can be computed using the celebrated Buchberger's algorithm, as well as several of its improvements. This algorithm is implemented in many computer algebra systems, including $\texttt{Macaulay2}$ \cite{M2}, $\texttt{Singular}$ \cite{DGPS}, and Mathematica, among others.

In exploring the application of Gr\"obner bases to characterize topological orders in BB codes, it is essential to consider both the construction of the codes and the boundary conditions within the context of polynomial rings and ideals. Specifically, for a system with infinitely open boundary conditions, the base ring is the Laurent polynomial ring in polynomial formalism. However, a direct computation of a Gr\"obner basis on this ring is not feasible. We use the trick $\mathbb{F}_2[x^\pm, y^\pm] \cong \mathbb{F}_2[x,\overline{x}, y, \overline{y}] / (x\overline{x}-1, y\overline{y}-1)$,  introducing two auxiliary variables $\overline{x}$ and $\overline{y}$ to replace $x^{-1}$ and $y^{-1}$. This substitution is subject to constraints $x\overline{x}=1$ and $y\overline{y}=1$, respectively, resulting in the ideals $\mathcal{I}_\infty = (f,g,x\overline{x}-1, y\overline{y}-1)$. Here, $f$ and $g$ should be interpreted as projections onto the ordinary polynomial ring.  For systems with periodic boundary conditions of sizes $\ell\times m$, periodic constraints are imposed such that $x^\ell = 1$ and $y^m=1$, leading to the ideals $\mathcal{I}_{\ell,m} = (f,g,x^\ell-1, y^m-1)$. We present the Gr\"obner basis for the toric code, the $\mathbb{BB}(\overline{1},\overline{1},3,3)$ code and the $\mathbb{BB}(\overline{1},\overline{1},\overline{3},3)$ code in Table.~\ref{table:groebner_basis}. For finite sizes,  non-trivial Gr\"obner basis arises only for specific values of $\ell$ and $m$. The non-trivial size sequences are discussed in detail in references \ref{app:anyon_periods} and \ref{app:logical_qubit}.

To further demonstrate the applications, we examine quasi-fractonic hopping dynamics in one-dimensional. In the case of $\mathbb{BB}(\overline{1}, \overline{1}, 3,\overline{3})$ code, the motion of $h_Z$-excitations
along the $x$ direction is governed by the first Gr\"obner basis $g = x^6+x^5+x^3+x+1$. Two $h_Z$-excitation configurations $\xi_1$ and $\xi_2$ are said to be related by a hopping process if
\begin{equation}
	\xi_1 \equiv \xi_2 \pmod g.
\end{equation}
To illustrate the hopping dynamics, we iteratively generate a sequence of patterns via the recurrence relation 
\begin{equation}
	\xi(t_{n+1}) = \xi(t_n) + \text{lowest\_term}(\xi(t_n))\cdot g.
\end{equation} 
This relation ensures that all $\xi(t_n)$ share the same charge. Starting from an initial configuration $\xi(t_0)=1$ (a single $h_Z$-excitation), the iterative computation yields the following configurations:
\begin{align}
	\xi(t_0) &= 1, \\
	\xi(t_1) &= x^6+x^5+x^3+x, \\
	\xi(t_2) &= x^7+x^5+x^4+x^3+x^2, \\
	\xi(t_3) &= x^8+x^4, \\
	\xi(t_4) &= x^{10} + x^9 + x^8 + x^7+x^5, \\
	\xi(t_5) &= x^{11} + x^9 + x^7 + x^6, \\
	\xi(t_6) &= x^{12}.
\end{align}
After six iterations, the anyon charge returns to its initial state, having translated 12 lattice units along the $x$-direction. Configurations $\xi(t_0)$, $\xi(t_1)$, $\xi(t_3)$, and $\xi(t_6)$ are illustrated in Fig.~3(d) to visualize the spatiotemporal evolution of the excitation patterns.

\begin{table}[]
	\begin{tabular}{clcl|c|l|}
		\hline
		\multicolumn{1}{|c|}{$\quad\gcd(\ell,m)\quad$} & \multicolumn{1}{c|}{Toric Code} & \multicolumn{1}{c|}{$\quad\gcd(\ell,m)\quad$} & \multicolumn{1}{c|}{$\mathbb{BB}(\overline{1}, \overline{1},3,3)$} & $\gcd(\ell,m)$ & \multicolumn{1}{c|}{$\mathbb{BB}(\overline{1},\overline{1}, 3, \overline{3})$} \\ \hline
		\multicolumn{1}{|c|}{\multirow{5}{*}{$\infty$}} & \multicolumn{1}{l|}{$x+1$} & \multicolumn{1}{c|}{\multirow{5}{*}{$\infty$}} & $x^6+x^5+x^3+x+1$ & \multirow{5}{*}{$\infty$} & $x^{11}+x^{10}+x^9+x^7+x^6+x^4+1$ \\
		\multicolumn{1}{|c|}{} & \multicolumn{1}{l|}{$y+1$} & \multicolumn{1}{c|}{} & $x^2y+xy+y+x^5+x+1$ &  & $x^2y+xy+y+x^8+x^7+x^5+x^4$ \\
		\multicolumn{1}{|c|}{} & \multicolumn{1}{l|}{$\overline{x}+1$} & \multicolumn{1}{c|}{} & $y^2+y+x^3$ &  & $y^2+y+x^3$ \\
		\multicolumn{1}{|c|}{} & \multicolumn{1}{l|}{$\overline{y}+1$} & \multicolumn{1}{c|}{} & $\overline{x}+x^5+x^4+x^2+1$ &  & $\overline{x}+x^{10}+x^9+x^8+x^6+x^5+x^3$ \\
		\multicolumn{1}{|c|}{} & \multicolumn{1}{l|}{} & \multicolumn{1}{c|}{} & $\overline{y}+y+x^5+x^4+x^3+1$ &  & $\overline{y}+y+x^8+x^7$ \\ \hline
		\multicolumn{1}{|c|}{\multirow{3}{*}{$\mathbb{Z}$}} & \multicolumn{1}{l|}{$x+1$} & \multicolumn{1}{c|}{\multirow{3}{*}{$12\mathbb{Z}$}} & $x^6+x^5+x^3+x+1$ & \multirow{3}{*}{$762\mathbb{Z}$} & $x^{11}+x^{10}+x^9+x^7+x^6+x^4+1$ \\
		\multicolumn{1}{|c|}{} & \multicolumn{1}{l|}{$y+1$} & \multicolumn{1}{c|}{} & $x^2y+xy+y+x^5+x+1$ &  & $x^2y+xy+y+x^8+x^7+x^5+x^4$ \\
		\multicolumn{1}{|c|}{} & \multicolumn{1}{l|}{} & \multicolumn{1}{c|}{} & $y^2+y+x^3$ &  & $y^2+y+x^3$ \\ \hline
		& \multicolumn{1}{l|}{} & \multicolumn{1}{c|}{\multirow{3}{*}{$6\mathbb{Z}-12\mathbb{Z}$}} & $x^4+x^2+1$ & \multirow{3}{*}{$381\mathbb{Z}-762\mathbb{Z}$} & $x^9+x^5+x^3+x+1$ \\
		& \multicolumn{1}{l|}{} & \multicolumn{1}{c|}{} & $x^2y+xy+y+x^3+1$ &  & $x^2y+xy+y+x^8+x^7+x^5+x^4$ \\
		& \multicolumn{1}{l|}{} & \multicolumn{1}{c|}{} & $y^2+y+x^3$ &  & $y^2+y+x^3$ \\ \cline{3-6} 
		& \multicolumn{1}{l|}{} & \multicolumn{1}{c|}{\multirow{2}{*}{$3\mathbb{Z}-6\mathbb{Z}$}} & $x^2+x+1$ & \multirow{2}{*}{$127\mathbb{Z}-381\mathbb{Z}$} & $x^7+x^6+x^4+x^2+1$ \\
		& \multicolumn{1}{l|}{} & \multicolumn{1}{c|}{} & $y^2+y+1$ &  & $y+x^6+x^4$ \\ \cline{3-6} 
		&  &  &  & \multirow{3}{*}{$6\mathbb{Z}-762\mathbb{Z}$} & $x^4+x^2+1$ \\
		&  &  &  &  & $x^2y+xy+y+x^3+1$ \\
		&  &  &  &  & $y^2+y+x^3$ \\ \cline{5-6} 
		&  &  &  & \multirow{2}{*}{$3\mathbb{Z}-6\mathbb{Z}-381\mathbb{Z}$} & $x^2+x+1$ \\
		&  &  &  &  & $y^2+y+1$ \\ \cline{5-6} 
	\end{tabular}
	\caption{Gr\"obner basis for toric code, $\mathbb{BB}(\overline{1},\overline{1},3,3)$ and $\mathbb{BB}(\overline{1},\overline{1},3,\overline{3})$ codes on infinite plane and non-trivial size sequences. The monomial orders are lexicographic order with $\overline{y}>\overline{x}>y>x$ for infinite case and $y>x$ for finite case.}\label{table:groebner_basis}
\end{table}

\section{Topological condition for BB codes}
Let $R$ be a commutative ring. A pair $(f,g)$ of elements in $R$ constitutes a \textit{regular sequence} if $\mathrm{Ann}_R(f)=0$ and $\mathrm{Ann}_{R/(f)}(g)=0$. Our initial aim is to establish the correspondence between the regular sequence and exactness. 
\begin{lem}
	Consider the Koszul complex related to the CSS code given by
	\begin{equation}
		R\xrightarrow{\;h_X={f \choose -g}\;}R^{2}\xrightarrow{h_Z^\dagger=\left(g,f\right)}R.
	\end{equation}
	There is a vector space isomorphism of the first homology group
	\begin{equation}
		H := \frac{\ker h_Z^\dagger}{\im h_X} \cong \frac{\mathrm{Ann}_R(f)}{(g)} \oplus \mathrm{Ann}_{\frac{R}{(f)}}(g).
	\end{equation}
\end{lem}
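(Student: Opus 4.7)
The plan is to construct a natural surjection $\pi \colon H \to \mathrm{Ann}_{R/(f)}(g)$, identify its kernel with $\mathrm{Ann}(f)/(g)$, and split the resulting short exact sequence. First I would define $\pi$ by sending $[(a,b)] \mapsto \overline{a}\in R/(f)$. The defining relation $ga+fb=0$ forces $ga\in(f)$, so $g\overline{a}=0$ in $R/(f)$; hence the image lies in $\mathrm{Ann}_{R/(f)}(g)$. The map descends from $\ker h_Z^\dagger$ to $H$ because $h_X(c)=(fc,-gc)$ has vanishing first-coordinate residue modulo $f$. Surjectivity follows by lifting any $\overline{a_0}\in\mathrm{Ann}_{R/(f)}(g)$ to $a_0\in R$, writing $ga_0=-fb_0$ for some $b_0$, and noting that $(a_0,b_0)\in\ker h_Z^\dagger$ projects to $\overline{a_0}$.

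Next I would analyze $\ker\pi$. A class lies in $\ker\pi$ iff $a\in(f)$, i.e., $a=fc$; subtracting $h_X(c)$ replaces $(a,b)$ by $(0,r)$ with $r:=b+gc$, and the Koszul relation $ga+fb=0$ forces $fr=0$, so $r\in\mathrm{Ann}_R(f)$. Two such representatives $(0,r)$ and $(0,r')$ give the same class in $H$ exactly when $(0,r-r')\in\mathrm{im}\,h_X$, which forces $r-r'\in g\cdot\mathrm{Ann}_R(f)$. Hence $\ker\pi\cong\mathrm{Ann}_R(f)/(g\cdot\mathrm{Ann}_R(f))$, which I interpret as the first summand $\mathrm{Ann}(f)/(g)$ of the statement.

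Finally, the resulting short exact sequence
\[
0\to\frac{\mathrm{Ann}(f)}{(g)}\to H\to\mathrm{Ann}_{R/(f)}(g)\to 0
\]
of $\mathbb{F}_2$-vector spaces splits automatically: choosing a basis of $\mathrm{Ann}_{R/(f)}(g)$ and picking a preimage in $H$ for each basis element yields a linear section of $\pi$, producing the claimed direct-sum decomposition. The main obstacle throughout is the bookkeeping of representatives, since each step (well-definedness of $\pi$, surjectivity, and the kernel identification) must be checked invariant under the freedom $(a,b)\mapsto(a,b)+h_X(c)$ in $\ker h_Z^\dagger$. As a sanity check, in the BB-code setting $R=\mathbb{F}_2[x^{\pm},y^{\pm}]$ is a domain, so $\mathrm{Ann}_R(f)=0$ for $f\neq 0$; the first summand collapses and the lemma reduces to $H\cong\mathrm{Ann}_{R/(f)}(g)$, matching the exactness criterion quoted in the main text.
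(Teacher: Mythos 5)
Your proof is correct and takes essentially the same route as the paper: both define the map $[(a,b)]\mapsto\overline a\in R/(f)$ (the paper calls it $\varphi$, you call it $\pi$, with the minor difference that you target $\mathrm{Ann}_{R/(f)}(g)$ directly whereas the paper lands in $R/(f)$ and then computes the image), identify the kernel with classes of the form $[(0,r)]$, and invoke the vector-space splitting. Your kernel computation is in fact more careful than the paper's terse version: the paper's condition ``$vg=0$'' is a typo for ``$vf=0$'', and its notation $\mathrm{Ann}(f)/(g)$ is shorthand for the quotient $\mathrm{Ann}_R(f)/\bigl(g\cdot\mathrm{Ann}_R(f)\bigr)$ that you derive explicitly; you also flag the representative-independence that the paper leaves implicit.
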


\begin{proof}
	Consider the map
	\begin{equation}
		\varphi: \quad  H \to \frac{R}{(f)}, \quad \;{u \choose v}\; + \;{f \choose -g}R\; \mapsto u + fR.
	\end{equation}
	It is well-defined. An element $\;{u \choose v}\;\in R^2$ belongs to $\ker h_Z^\dagger$ if and only if $ug+vf=0$, which implies $u \in \mathrm{Ann}_{R/(f)}(g)$. Consequently, the image of $\varphi$ is $\mathrm{Ann}_{R/(f)}(g)$. Moreover, elements of $\ker\varphi$ can be expressed as $\;{0 \choose v}\;+\;{f \choose -g}R$. Therefore, we have \begin{equation}
		\ker\varphi = \left\{\;{0 \choose v}+{f \choose -g}R\; \bigg|vg=0\right\} \cong \frac{\mathrm{Ann}(f)}{(g)}.
	\end{equation} 
	Consequently, we establish the vector space isomorphism:
	\begin{equation}
		H \cong \frac{\mathrm{Ann}_R(f)}{(g)} \oplus \mathrm{Ann}_{\frac{R}{(f)}}(g).
	\end{equation}
\end{proof}
We can conclude with the corollary as follows:
\begin{cor}\label{thm:topo_condition}
	If $(f,g)$ is a regular sequence, then this chain complex is exact, and hence the associated CSS code is topological.
\end{cor}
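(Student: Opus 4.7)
The plan is to derive the corollary as an immediate specialization of the preceding Lemma, rather than reproving anything from scratch. The Lemma already supplies the key homological identification
\[
H \;=\; \frac{\ker h_Z^\dagger}{\im h_X} \;\cong\; \frac{\mathrm{Ann}(f)}{(g)} \,\oplus\, \mathrm{Ann}_{R/(f)}(g),
\]
so my job is only to feed the regular-sequence hypothesis into the right-hand side and read off the vanishing.

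First I would unpack the definition: $(f,g)$ being a regular sequence means $\mathrm{Ann}_R(f)=0$ and $\mathrm{Ann}_{R/(f)}(g)=0$. Substituting the first equality into the first summand gives $\mathrm{Ann}(f)/(g) = 0/(g) = 0$; the second summand vanishes directly by hypothesis. Hence both direct summands on the right-hand side of the Lemma are zero, forcing $H=0$, which is precisely the statement $\im h_X = \ker h_Z^\dagger$, i.e.\ exactness at the middle term of
\[
R \xrightarrow{\,h_X\,} R^2 \xrightarrow{\,h_Z^\dagger\,} R.
\]

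Next I would observe that the composite $h_Z^\dagger \circ h_X$ equals $gf + f(-g) = 0$ automatically in the commutative ring $R$, so there is nothing to check at the level of being a chain complex; the content is solely the middle-term exactness just established. Together with the topological condition stated below Eq.~\eqref{eq:chain} in the main text---namely that $\im h_X = \ker h_Z^\dagger$ is equivalent to the code being topological---this yields the final assertion that the associated CSS code is topological. If one wishes to additionally record injectivity of $h_X$ (useful for later applications like Koszul duality), it also follows from $\mathrm{Ann}_R(f)=0$: any $r$ with $h_X(r)=0$ satisfies $rf=0$ and hence $r=0$.

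There is essentially no obstacle here; the content of the corollary is entirely packaged in the Lemma, and the present step is a one-line specialization. The only ``risk'' is a notational one: making sure the signs in $h_X = \binom{f}{-g}$ versus the main-text convention $h_X=\binom{f}{g}$ are treated consistently, which is automatic over $\mathbb{F}_2$ where the BB codes live, so no subtlety arises.
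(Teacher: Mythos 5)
Your argument is correct and is precisely the intended one: the paper states the corollary as an immediate consequence of the Lemma without further proof, and your substitution $\mathrm{Ann}_R(f)=0$ and $\mathrm{Ann}_{R/(f)}(g)=0$ into the Lemma's direct-sum decomposition to force $H=0$ is exactly that deduction. The additional observations (that $h_Z^\dagger\circ h_X=0$ holds automatically, and that injectivity of $h_X$ also follows) are accurate and harmless, though not required for the corollary.
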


In the context of our BB code setting presented in the main text, it is evident that $\mathrm{Ann}_R(f)=0$. For any $[v] \in \mathrm{\Ann}_{R/(f)}(g)$, the relation $vg = uf$ holds for some $u\in R$. Assuming $[v]\neq [0]$, it follows that $f$ and $g$ share common factors. From this observation, it is clear that in order to show $\mathrm{\Ann}_{R/(f)}(g)=0$, it suffices to demonstrate that $f$ and $g$ share no common factors.
\begin{lem}\label{lem:Eisenstein}
	Let $R$ be an integral domain and consider $f(x)=a_n x^n+\cdots a_1 x + a_0 \in R[x]$. Suppose there exists a prime ideal $\mathfrak{p}$ such that 
	\begin{enumerate}
		\item $a_0, a_1, \cdots, a_{n-1} \in \mathfrak{p}$;
		\item $a_n \notin \mathfrak{p}$;
		\item $a_0 \notin \mathfrak{p}^2$.
	\end{enumerate}
	Then $f(x)$ is irreducible. Here, irreducible means that $f(x)$ itself is not a unit in $R[x]$ and $f(x)$ cannot be factored into the product of two non-unit polynomials in $R[x]$. 
\end{lem}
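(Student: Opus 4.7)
The plan is to mimic the classical Eisenstein argument with $\mathfrak{p}$ playing the role of a prime. Assume for contradiction that $f = g h$ in $R[x]$ with $g = \sum_{i=0}^{r} b_i x^i$ and $h = \sum_{j=0}^{s} c_j x^j$ both non-units. Because $\mathfrak{p}$ is prime, $D := R/\mathfrak{p}$ is an integral domain, so $D[x]$ is too. Hypotheses (1) and (2) force the reduction $\bar{f} = \bar{a}_n x^n$ in $D[x]$ with $\bar{a}_n \ne 0$; in particular $\bar{g}, \bar{h} \ne 0$, since otherwise all coefficients of $f$ would lie in $\mathfrak{p}$, contradicting (2).

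The core step is to analyze $\bar{g}\,\bar{h} = \bar{a}_n x^n$ inside the integral domain $D[x]$. Because $D$ has no zero divisors, the lowest and highest nonzero degrees of $\bar{g}\bar{h}$ equal the respective sums of those of $\bar{g}$ and $\bar{h}$; matching both with the single nonzero term on the right, I conclude that $\bar{g}$ and $\bar{h}$ are pure monomials, $\bar{g} = \bar{b}_{r'} x^{r'}$ and $\bar{h} = \bar{c}_{s'} x^{s'}$, with $r' + s' = n$. Lifting back to $R$, this means $b_i \in \mathfrak{p}$ for every $i \ne r'$ and $c_j \in \mathfrak{p}$ for every $j \ne s'$. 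If $r' \ge 1$ and $s' \ge 1$, then $b_0, c_0 \in \mathfrak{p}$, and hence $a_0 = b_0 c_0 \in \mathfrak{p}^2$, directly contradicting (3).

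The step I expect to require the most care is the residual sub-case of a constant factor modulo $\mathfrak{p}$, say $r' = 0$ and $s' = n$, in which $\bar{g}$ is a nonzero constant in $D[x]$ and the $\mathfrak{p}^2$ contradiction no longer triggers. Conditions (1)--(3) alone do not exclude this over a general commutative ring, so the plan is to exploit the intended application: in the BB-code setting the polynomials $f, g$ in Eq.~\eqref{eq:generator} can be put, after multiplication by a Laurent-monomial unit of $\mathbb{F}_2[x^{\pm}, y^{\pm}]$, into a form with leading coefficient $1$ in the chosen variable. Any constant divisor of such a polynomial in $R[x]$ must then divide $1$ and is therefore a unit, which eliminates the remaining sub-case and completes the proof by contradiction.
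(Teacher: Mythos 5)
Your main argument — reduction modulo $\mathfrak{p}$, the integral-domain observation that $\bar{g}\,\bar{h}=\bar{a}_n x^n$ forces both $\bar{g}$ and $\bar{h}$ to be pure monomials, and the $\mathfrak{p}^2$ contradiction — is exactly the route the paper takes, though the paper leaves the pure-monomial step implicit (it jumps from ``the leading term gives $a_n x^n = b_r x^r \cdot c_s x^s$'' directly to ``$b_0$ and $c_0$ are both in $\mathfrak{p}$'').

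More significantly, the residual case you flag is a genuine gap, and it is present in the paper's own argument and in the lemma as stated. The inference that $b_0, c_0 \in \mathfrak{p}$ holds only when both factors have degree at least $1$; the hypothesis that they be non-units does not force this, since a non-unit constant of $R$ is a perfectly good non-unit of $R[x]$. In fact the lemma as stated is false. Take $R=\mathbb{Z}$, $f(x)=2x+6$, and $\mathfrak{p}=(3)$: conditions (1)--(3) hold ($6\in(3)$, $2\notin(3)$, $6\notin(9)$), yet $f=2\cdot(x+3)$ factors into two non-units of $\mathbb{Z}[x]$. The missing hypothesis is a primitivity condition — most cleanly, that $a_n$ be a unit of $R$ — which is precisely what you invoke by passing to the Laurent setting, where after multiplication by a monomial unit (and, for negative exponents, an inversion $y\mapsto y^{-1}$) the relevant leading coefficient of the BB-code polynomial is a unit, so any degree-zero factor divides a unit and is itself a unit. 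The paper's Remark~2, observing that $y$ is a unit in $\mathbb{F}_2[y,y^{-1}]$ so that $y(x+y)$ becomes irreducible there, shows the authors are aware of this phenomenon, but they do not carry it into the lemma statement or its proof. Your treatment is therefore strictly more careful than the paper's; the cleanest self-contained fix is to add the hypothesis ``$a_n$ is a unit of $R$'' to the lemma, after which your monomial argument forces $\deg g,\deg h\geq 1$ and the $\mathfrak{p}^2$ contradiction applies.
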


Before proving the lemma, we state some preliminary observations:
\begin{rem}
	This is a generalization of Eisenstein's criterion, extending the coefficient ring from the integers to a  integral domain.
\end{rem}
\begin{rem}
	An irreducible polynomial $f(x)$ is defined as one that is neither a unit in $R[x]$ nor expressible as a product of two non-unit polynomials within $R[x]$. A polynomial is irreducible in $R[x,x^{-1}]$ if it is irreducible in $R[x]$, but a polynomial that is reducible in $R[x]$ may not be reducible in $R[x, x^{-1}]$. As an illustration, consider the polynomial $f(x) = yx+x^2=x(x+y)$.  It is reducible in the polynomial ring $\mathbb{F}_2[y,y^{-1}][x]$, yet it remains irreducible in the Laurent polynomial ring $\mathbb{F}_2[y,y^{-1}][x,x^{-1}]$ because $x$ is a unit in the latter.
\end{rem}
\begin{rem}
	When considering elements in Laurent polynomial rings $R[x,x^{-1}]$, it is always feasible to multiply by a factor $x^i$ to convert the expression into polynomial form.
\end{rem}
\begin{proof}
	Prove by contradiction. Assume the existence of a factorization $f(x) = b(x)c(x)$, where $b(x) = b_r x^r + \cdots + b_1x + b_0$ and $c(x)=c_s x^s + \cdots c_1 x + c_0$. Modulo $\mathfrak{p}$ on both sides given $\bar{a}_n x^n = (\bar{b}_r x^r+\cdots+\bar{b}_0)(\bar{c}_s x^s + \cdots +\bar{c}_0)$, where the bar denotes the equivalence class in $R/\mathfrak{p}$. Since the leading term gives $a_nx^n=b_rx^r\cdot c_s x^s$, implies that $b_0$ and $c_0$ are both in $\mathfrak{p}$, so $a_0=b_0c_0 \in \mathfrak{p}^2$. This contradicts the third condition, asserting that $f(x)$ cannot be factored as proposed.
\end{proof}

For BB codes under the toric layout, we can verify that $f =1+x+ x^{\overline{\alpha}}y^b \in \mathbb{F}_2[x,x^{-1}][y,y^{-1}]$ is irreducible by choosing the prime ideal $\mathfrak{p}=(x+1)$. Similarly, $g$ is also irreducible. Thus, if $\mathrm{Ann}_{R/(f)}(g)\ne 0$, the only possibility is $f = x^iy^jg$ for some monomial $x^iy^j$, given the parameters $(\overline{\alpha},\overline{\beta},a,b)=(0,0,1,1)$.  For all other parameter configurations, we have $\mathrm{Ann}_{R/(f)}(g)=0$, indicating that $(f,g)$ forms a regular sequence. Consequently, according to Corollary~\ref{thm:topo_condition}, we arrive at the core theorem.

\begin{thm}
	Following the convention in the main text, all BB codes are topological for arbitrary $\alpha,\beta\in\mathbb{Z}_{\geq0}$ and $a,b\in\mathbb{Z}$, with the single exception of $(\overline{\alpha}, \overline{\beta}, a, b) = (0,0,1,1)$.
\end{thm}

\section{Topological index and the BKK theorem}
The quantity $Q$ enumerates the zeros, accounting for their multiplicities, of the system $f=g=0$ over an algebraic closure of $\mathbb{F}_2$ in the framework of Laurent polynomial rings \cite{Haah13}. This can be analyzed effectively using the Bernstein-Khovanskii-Kushnirenko (BKK) theorem, which establishes that the number of solutions in Laurent polynomial rings is bounded by the mixed volume of certain convex polytopes. In this section, we (1) introduce the BKK theorem and its application in calculating the topological index and (2) derive explicit expressions for the topological index when the BKK conditions fail.

\subsection{BKK Theorem and BKK Condition}
Consider a Laurent polynomial $h$ in variables $x_1,\dots,x_d$, given by $h=\sum_{a\in\Lambda}c_a x^a$, where $\Lambda\subset \mathbb{R}^d$ and $x^a$ stands for monomial $\prod_{i=1}^d x_i^{a_i}$. The set $\Lambda$ is referred to as the $\textit{support}$ of $h$, denoted as $\mathop{supp}(h)$. The \textit{Newton polytope} of $h$, denoted $\mathop{NP}(h)$,  is the convex hull of the support of $h$ in $\mathbb{R}^d$. For example, the Newton polytope of $f=1+x+x^{\overline{\alpha}}y^b$ is a triangle with vertices at $(0,0)$, $(1,0)$ and $(-\alpha,b)$, while the Newton polytope of $g=1+y+x^a y^{\overline{\beta}}$ is a triangle with vertices at $(0,0)$, $(0,1)$ and $(a,-\beta)$.

Let $\mathcal{P}_1,\dots,\mathcal{P}_n$ be convex polytopes in $\mathbb{R}^d$. The \textit{Minkowski sum} of these polytopes is defined as
\begin{equation}
	\mathcal{P} = \sum_{i\in I}\mathcal{P}_i = \left\{\sum_{i\in I} v_i \big| v_i \in \mathcal{P}_i\right\} \subset \mathbb{R}^d.
\end{equation}

It can be shown that $\mathcal{P}$ is a convex set. For each surface $\mathcal{S}$ of $\mathcal{P}$, there are unique faces $\mathcal{S}_i$ of $\mathcal{P}_i$ such that $\mathcal{S} = \sum_{i=1}^n\mathcal{S}_j$. Let $h_{i,\mathcal{S}}=\sum_{a\in\mathcal{S}_i}c_{i,a}x^a$ be the component of $h_i$ supported at $\mathcal{S}_i$.

The \textit{mixed volume} of the polytopes is given by
\begin{equation}
	\mathop{MV}(\mathcal{P}_1,\dots,\mathcal{P}_n) = \sum_{I \subset \{1,2,\dots,n\}}(-1)^{n-|I|}\mathop{Vol_d}\left(\sum_{i\in I}P_i\right).
\end{equation}
Here, $\mathop{Vol_d}(\cdot)$ denotes the Lebesgue measure in $\mathbb{R}^d$. Specifically, for $n=2$ and $d=2$, we obtain
\begin{equation}\label{eq:mixed_volume}
	\mathop{MV}(\mathcal{P}_1,\mathcal{P}_2) = \mathop{Area}(\mathcal{P}_1+\mathcal{P}_2) - \mathop{Area}(\mathcal{P}_1)-\mathop{Area}(\mathcal{P}_2).
\end{equation}

\begin{thm}[BKK theorem]
	Let $\mathbb{K}$ be an algebraically closed field, and let $h_i \in \mathbb{K}[x_1,x_1^{-1}, \dots, x_d,x_d^{-1}]$ for $i=1,2,\dots,n$. Suppose that $Q = \dim_{\mathbb{K}}\frac{\mathbb{K}[x_1,x_1^{-1}, \dots, x_d,x_d^{-1}]}{(h_1,\dots,h_n)}$ is finite. Then
	\begin{equation}\label{eq:BBK}
		Q \leq \mathop{MV}\left(\mathop{NP}(h_1), \dots,\mathop{NP}(h_n)\right).
	\end{equation}
	Moreover, if the right-hand side is not zero, this bound is saturated if and only if, for each face $\mathcal{S}$ of dimension less than $n$, the Laurent polynomials $h_{i,\mathcal{S}}$ for $i=1,\dots,n$  have no common root. 
\end{thm}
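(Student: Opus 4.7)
The plan is to prove the Bernstein--Khovanskii--Kushnirenko bound by compactifying the algebraic torus $T=(\overline{\mathbb{K}^*})^{d}$ (over an algebraic closure of the base field) to a smooth projective toric variety on which the mixed volume appears naturally as an intersection number, and then identifying $Q$ with the contribution to that intersection coming from $T$ alone.

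First I would reinterpret $Q$ geometrically. Since $Q$ is finite, extending scalars to $\overline{\mathbb{K}^*}$ (which leaves the quotient dimension invariant) shows that $V(h_1,\ldots,h_n)\subset T$ is a finite scheme and that $Q$ equals the sum of local intersection multiplicities $\sum_{p}\dim_{\overline{\mathbb{K}^*}}\mathcal{O}_{T,p}/(h_1,\ldots,h_n)$. For this quotient to be finite we must be in the complete-intersection regime $n=d$, which is the case of interest for BB codes, and the argument below focuses on this case.

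Next I would build a smooth projective toric variety $X_{\Sigma}$ whose fan $\Sigma$ refines the normal fan of the Minkowski sum $\Delta\coloneqq\Delta_{1}+\cdots+\Delta_{d}$ with $\Delta_{i}\coloneqq\mathrm{NP}(h_{i})$. Each $\Delta_{i}$ determines a nef line bundle $L_{i}$ on $X_{\Sigma}$ whose global sections are Laurent polynomials supported in $\Delta_{i}$; in particular $h_{i}$ lifts to a section $s_{i}\in H^{0}(X_{\Sigma},L_{i})$. The pivotal ingredient is the classical mixed-volume--intersection identity
\[
L_{1}\cdot L_{2}\cdots L_{d}=\mathrm{MV}(\Delta_{1},\ldots,\Delta_{d}),
\]
which follows from multilinearity and symmetry in the $\Delta_{i}$ together with the diagonal case $L^{d}=d!\,\mathrm{Vol}(\Delta)$ for an ample toric divisor and the polarization identity, matching the normalization in Eq.~\eqref{eq:mixed_volume}. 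Because each $L_{i}$ is nef, the total intersection number decomposes into non-negative local contributions from common zeros of $s_{1},\ldots,s_{d}$ on $X_{\Sigma}$; separating the open torus from the toric boundary divisors then yields $Q\leq\mathrm{MV}(\Delta_{1},\ldots,\Delta_{d})$.

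For the saturation criterion I would analyze the boundary contributions orbit by orbit. Every proper face $\mathcal{S}$ of $\Delta$ admits a unique decomposition $\mathcal{S}=\mathcal{S}_{1}+\cdots+\mathcal{S}_{d}$ with $\mathcal{S}_{i}$ a face of $\Delta_{i}$, and the associated toric orbit $O_{\mathcal{S}}\subset X_{\Sigma}$ carries the restriction of $s_{i}$ which, after a canonical monomial trivialization of $L_{i}|_{O_{\mathcal{S}}}$, coincides with the face polynomial $h_{i,\mathcal{S}}$. Hence equality $Q=\mathrm{MV}(\Delta_{1},\ldots,\Delta_{d})$ is equivalent to the absence of common zeros on every such $O_{\mathcal{S}}$, which is exactly the stated face-polynomial condition. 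The main obstacle is the mixed-volume--intersection identity $L_{1}\cdots L_{d}=\mathrm{MV}$: proving it rigorously requires either invoking the volume/degree formula for ample toric divisors on a sufficiently fine smooth refinement $\Sigma$ and extending by multilinearity, or running Bernstein's original deformation argument via Puiseux series, each of which is technically involved. A secondary subtlety is verifying that $s_{i}|_{O_{\mathcal{S}}}$ really matches $h_{i,\mathcal{S}}$ under trivializations compatible across all boundary strata of $X_{\Sigma}$; this step is what converts the algebraic-geometric condition ``no zeros on the boundary'' into the combinatorial BKK condition appearing in the statement.
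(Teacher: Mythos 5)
You are proving a theorem that the paper does not itself prove: in the supplementary material the authors simply state the BKK theorem and refer the reader to \cite{mondal21} for a proof. So there is no "paper's argument" to compare against; what I can assess is whether your sketch is a viable route to the result.

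Your outline is the classical toric-geometric proof of BKK (Khovanskii--Kushnirenko style), and the main ingredients are assembled in the right order: extend scalars and identify $Q$ with a sum of local intersection multiplicities, compactify $T$ inside a smooth projective toric variety $X_{\Sigma}$ refining the normal fan of $\Delta_1+\cdots+\Delta_d$, lift each $h_i$ to a section $s_i$ of the nef line bundle $L_i$ attached to $\Delta_i$, invoke the identity $L_1\cdots L_d=\mathrm{MV}(\Delta_1,\dots,\Delta_d)$, and locate the missing intersection on toric boundary orbits, where the sections restrict to the face polynomials $h_{i,\mathcal{S}}$. You also correctly single out $n=d$ as the regime where a finite quotient and a mixed volume are both meaningful, which is the implicit hypothesis of the statement. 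All of this is sound.

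The one place where I think you have a genuine gap that you do not flag is the sentence "Because each $L_i$ is nef, the total intersection number decomposes into non-negative local contributions from common zeros of $s_1,\dots,s_d$ on $X_{\Sigma}$." Finiteness of $Q$ only controls $V(s_1,\dots,s_d)\cap T$; the closure of the zero scheme in $X_\Sigma$ can, and in degenerate cases does, acquire positive-dimensional components along the toric boundary. Once that happens, $L_1\cdots L_d$ is no longer a naive sum of point multiplicities, and extracting "$\ge Q$" requires either (i) Fulton's refined intersection product and a positivity statement for the excess contributions of the boundary components under nefness, or (ii) a deformation/semi-continuity argument: perturb $(h_1,\dots,h_d)$ within the fixed Newton polytopes to a generic tuple whose zero scheme is reduced, finite, and contained in $T$, and argue that the number of torus solutions counted with multiplicity can only drop in the limit. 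Either route works, but neither is automatic, and your current phrasing reads as if the decomposition were formal. You do flag the mixed-volume identity $L_1\cdots L_d=\mathrm{MV}$ and the compatibility of trivializations on boundary orbits as the technical pressure points, and those are real, but I would promote the positive-dimensional boundary issue to the same status: without it, even granting the mixed-volume identity, the inequality $Q\le\mathrm{MV}$ does not yet follow.

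Finally, on the saturation criterion, the orbit-by-orbit analysis is the right idea, but note that the "if and only if" in the statement also presupposes $\mathrm{MV}\neq 0$, and in the degenerate direction one must rule out both boundary zeros and excess-dimensional boundary components; your wording ("absence of common zeros on every such $O_{\mathcal{S}}$") covers the former but should be understood to encompass the latter as well.
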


For the proof, we refer interested readers to Ref.~\cite{mondal21} for interested readers. We examine a specific case where $a,b \geq 0$ and $ab>(\alpha+1)(\beta+1)$ to demonstrate the utility of the BKK theorem in calculating the topological index, as illustrated in Table~\ref{table:newton_polytope}. The red triangles represent the Newton polytopes of $f$, and the green triangles represent the Newton polytopes of $g$. The hexagon $ABCDEF$ represents the Minkowski sum of these polytopes, by definition. The mixed volume is derived from the hexagon substrates two triangles $ABC$ and $AEF$ (Eq.~\eqref{eq:mixed_volume}), yielding the area of the parallelogram $ACDF$, which is determined as $|\overrightarrow{DC}\times \overrightarrow{DE}|=ab-\alpha\beta$. It can be confirmed that the conditions of the BKK theorem are satisfied in this particular case.

When the condition $ab=(\alpha+1)(\beta+1)$ holds, the lines $BA$ and $AF$ in Table~\ref{table:newton_polytope} coincide. Under these parameters, the mixed volume remains $ab-\alpha\beta$; however, the equality in Eq.~\eqref{eq:BBK} may not hold. Let us investigate the BKK condition on the one-dimensional surface $BF$, whose defining components are $f_{BF}=x+x^{-\alpha}y^b$ and $g_{BF} = y+x^a y^{-\beta}$. Depending on the specific values of $a,b,\alpha$, and $\beta$, the Laurent polynomials $f_{BF}$ and $g_{BF}$ may share common roots.
\begin{table}
	\begin{tabular}{cccc}
		\toprule[1.5pt]
		\multicolumn{4}{c}{$a\geq 0$, $b\geq 0$} \\
		\hline
		$ab>(\alpha+1)(\beta+1)$ & $ab=(\alpha+1)(\beta+1)$ & $\alpha\beta<ab<(\alpha+1)(\beta+1)$ & $ab=\alpha\beta$ \\
		$\includegraphics[width=.21\columnwidth]{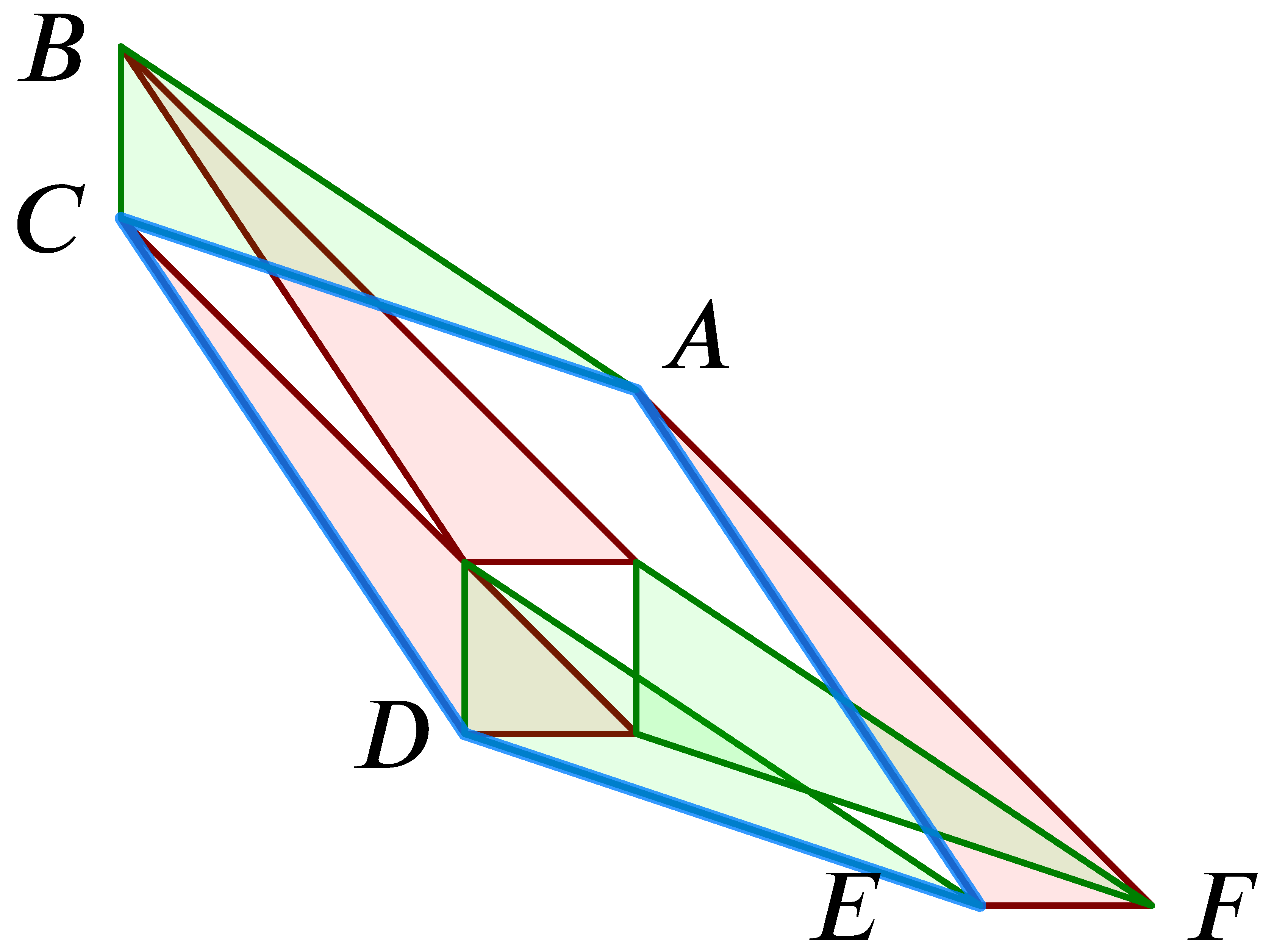}$ &
		$\includegraphics[width=.24\columnwidth]{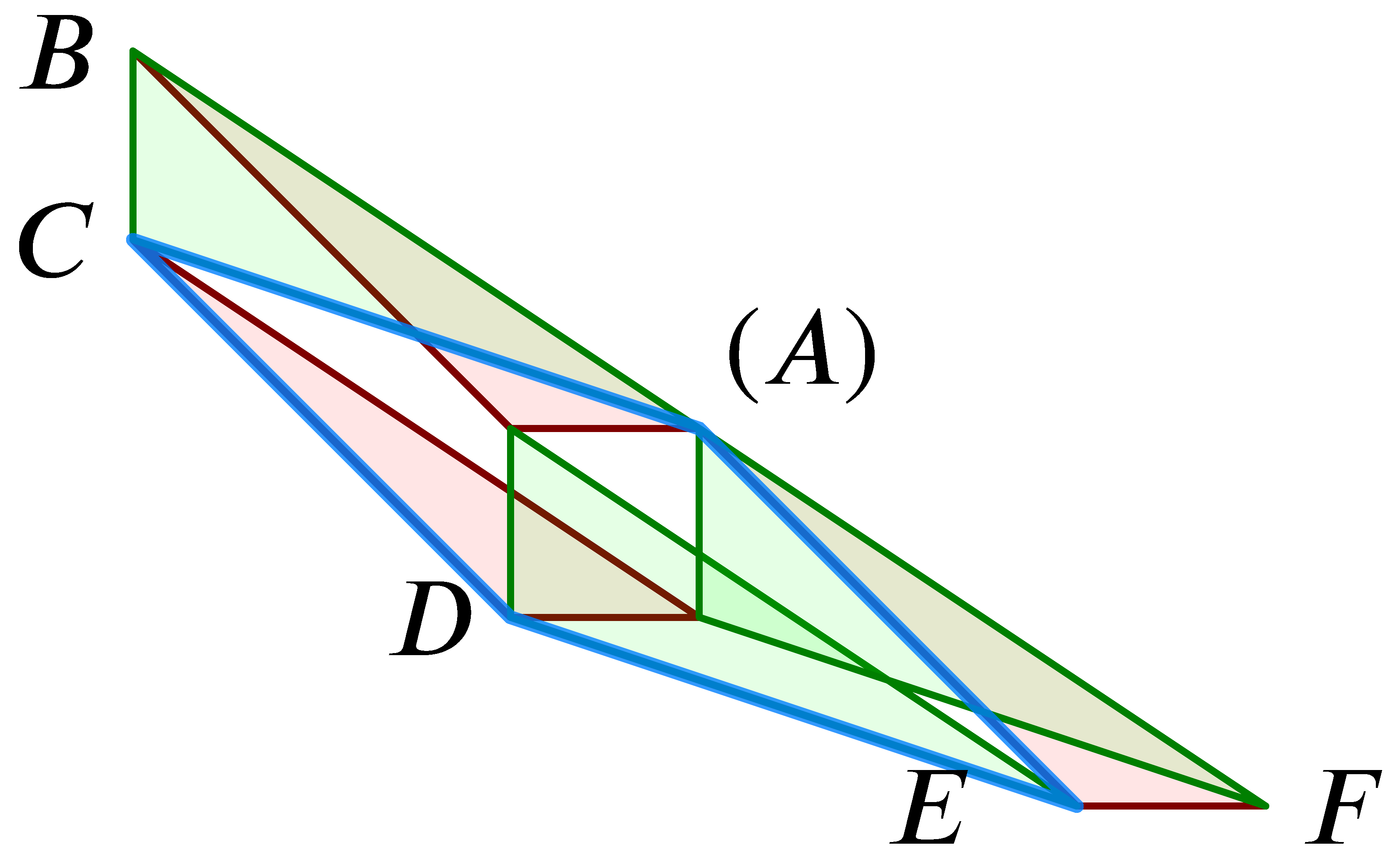}$ &
		$\includegraphics[width=.21\columnwidth]{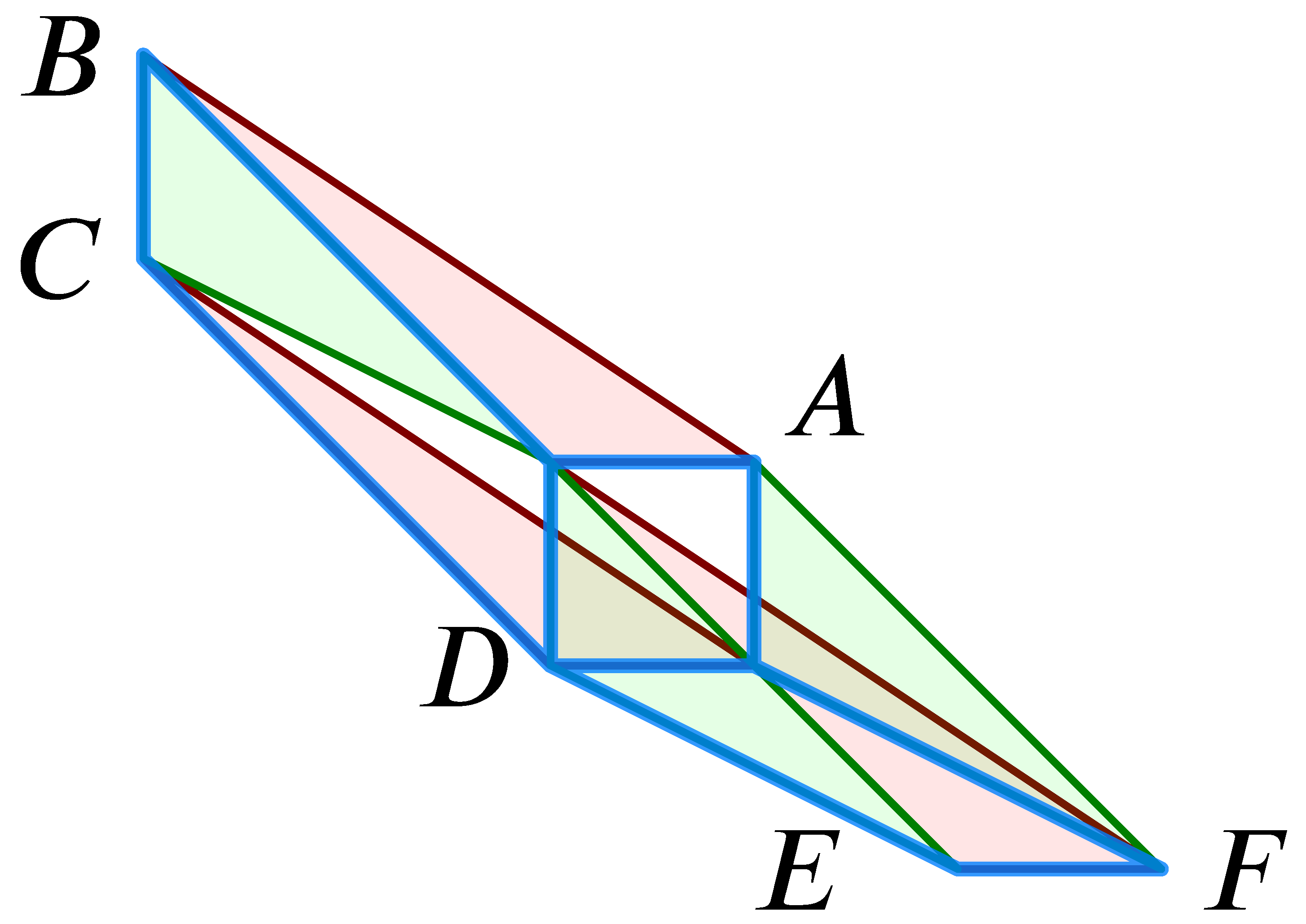}$ &
		$\includegraphics[width=.21\columnwidth]{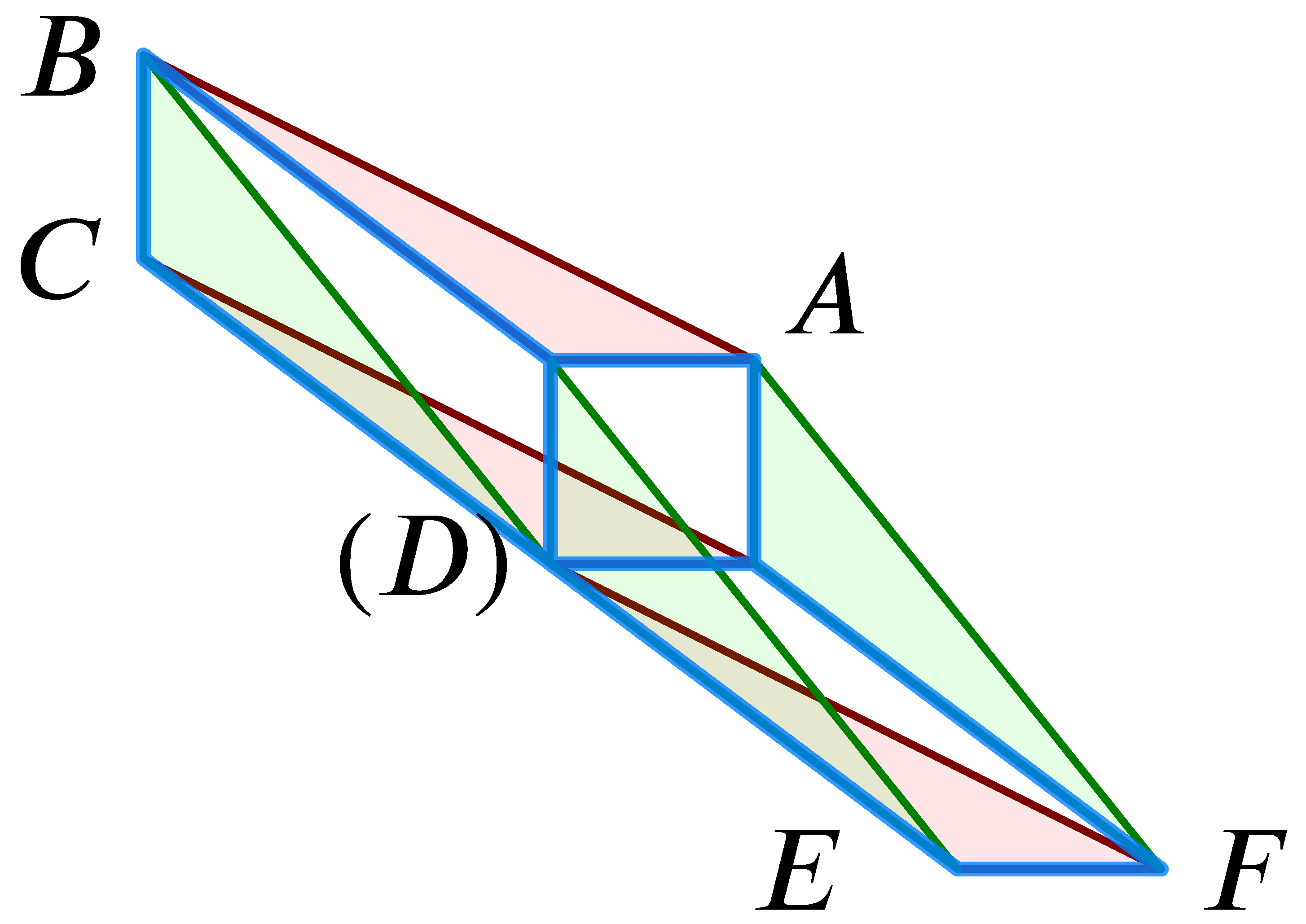}$ \\
		$ab-\alpha\beta$ & $\text{Deg}\left\{\varphi_{\alpha+1,a}(u)+(1+u)^\alpha\varphi_{\beta+1,b}(u)\right\}$ & $\alpha+\beta+1$ & $\text{Deg}\left\{\varphi_{\alpha,a}(u)+(1+u)^{\alpha+1}\varphi_{\beta,b}(u)\right\}$ \\
		\hline\hline
		$a\geq 0,\ b\geq0$ & $a>0, b<0$ & $a<0, b>0$ & $a \leq 0,\ b\leq 0$ \\
		\hline
		$ab<\alpha\beta$ & $ab<0$ & $ab<0$ & $ab<(\alpha+1)\beta$ \\
		$\includegraphics[width=.22\columnwidth]{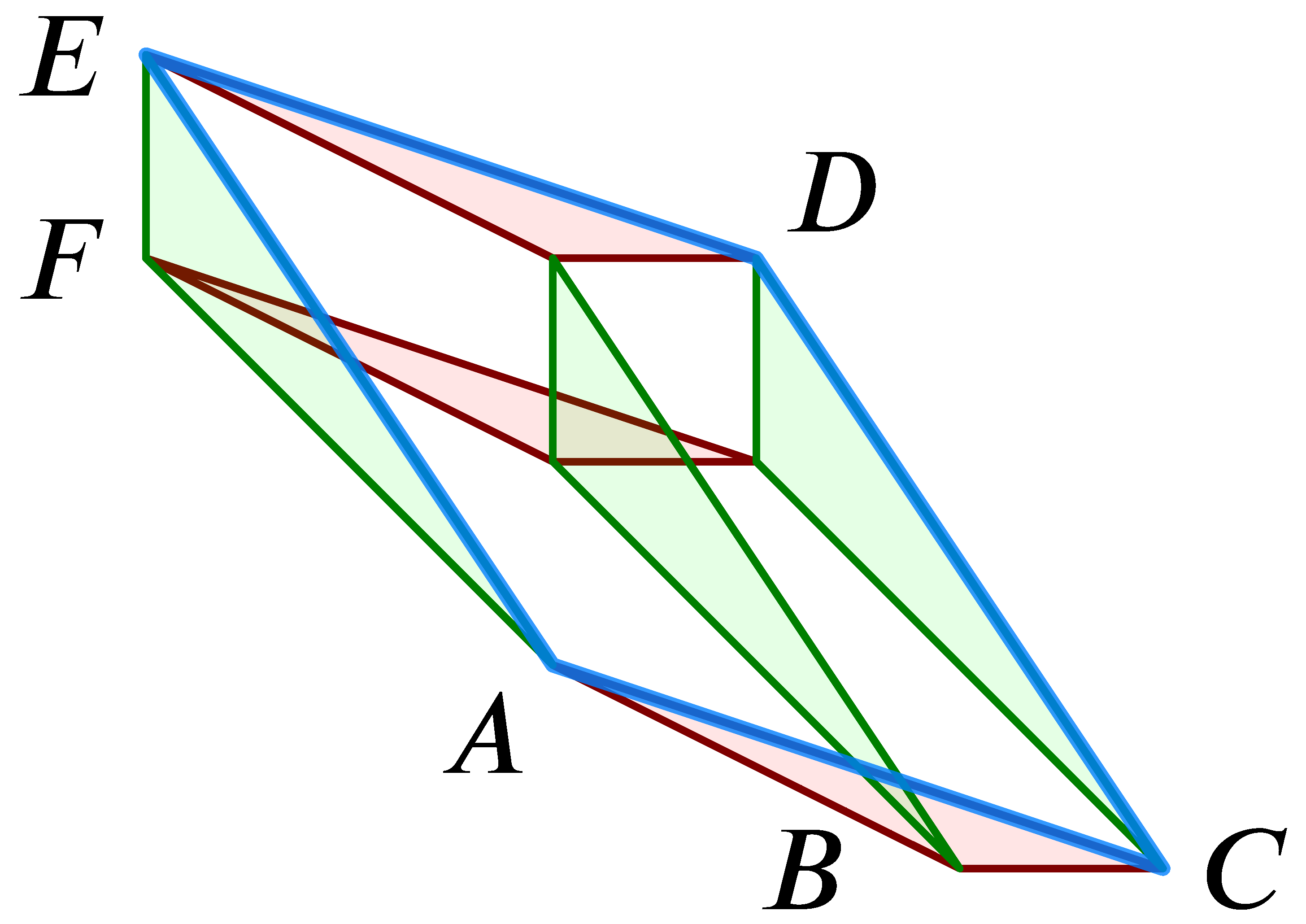}$ &
		$\includegraphics[width=.22\columnwidth]{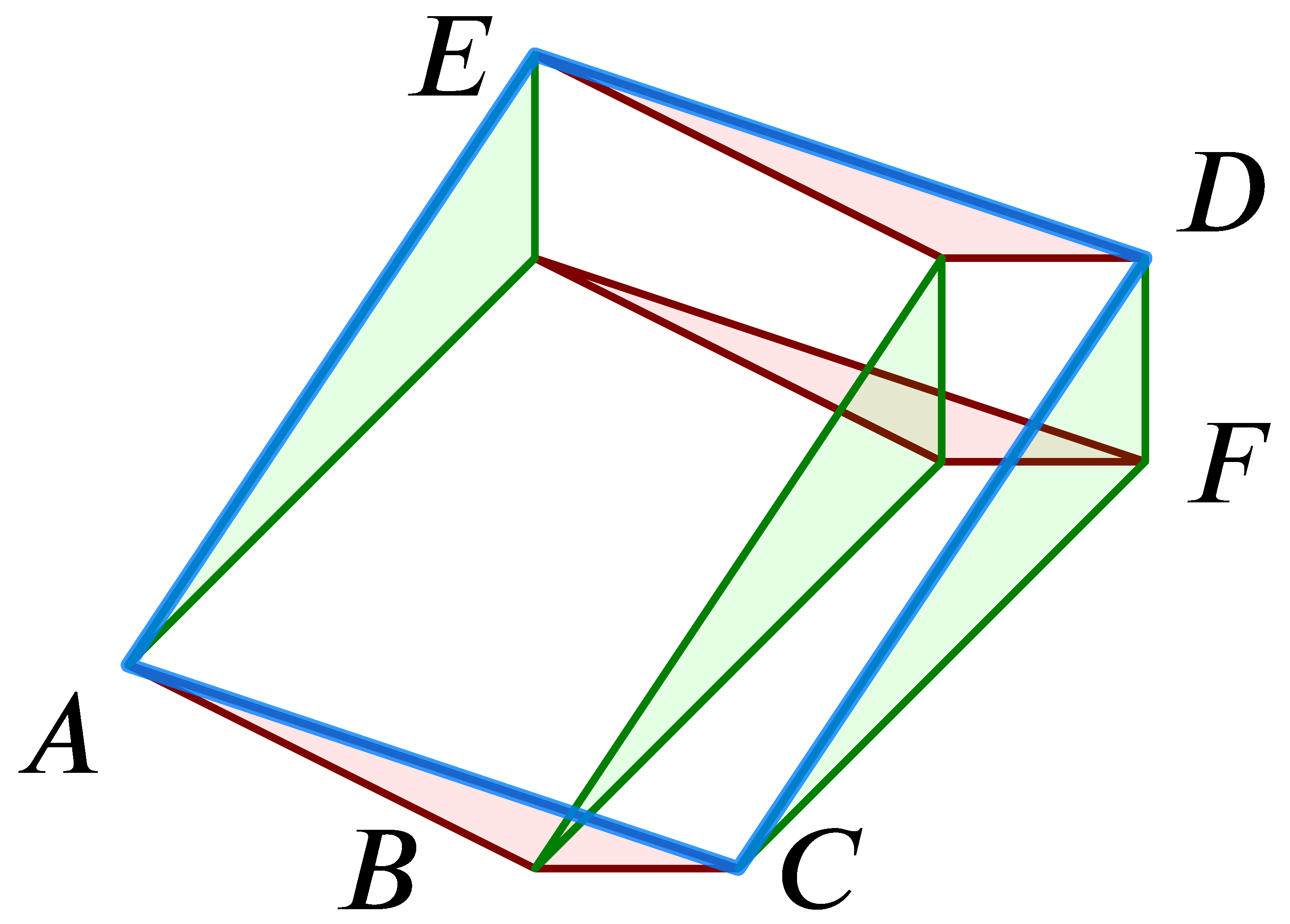}$ &
		$\includegraphics[width=.22\columnwidth]{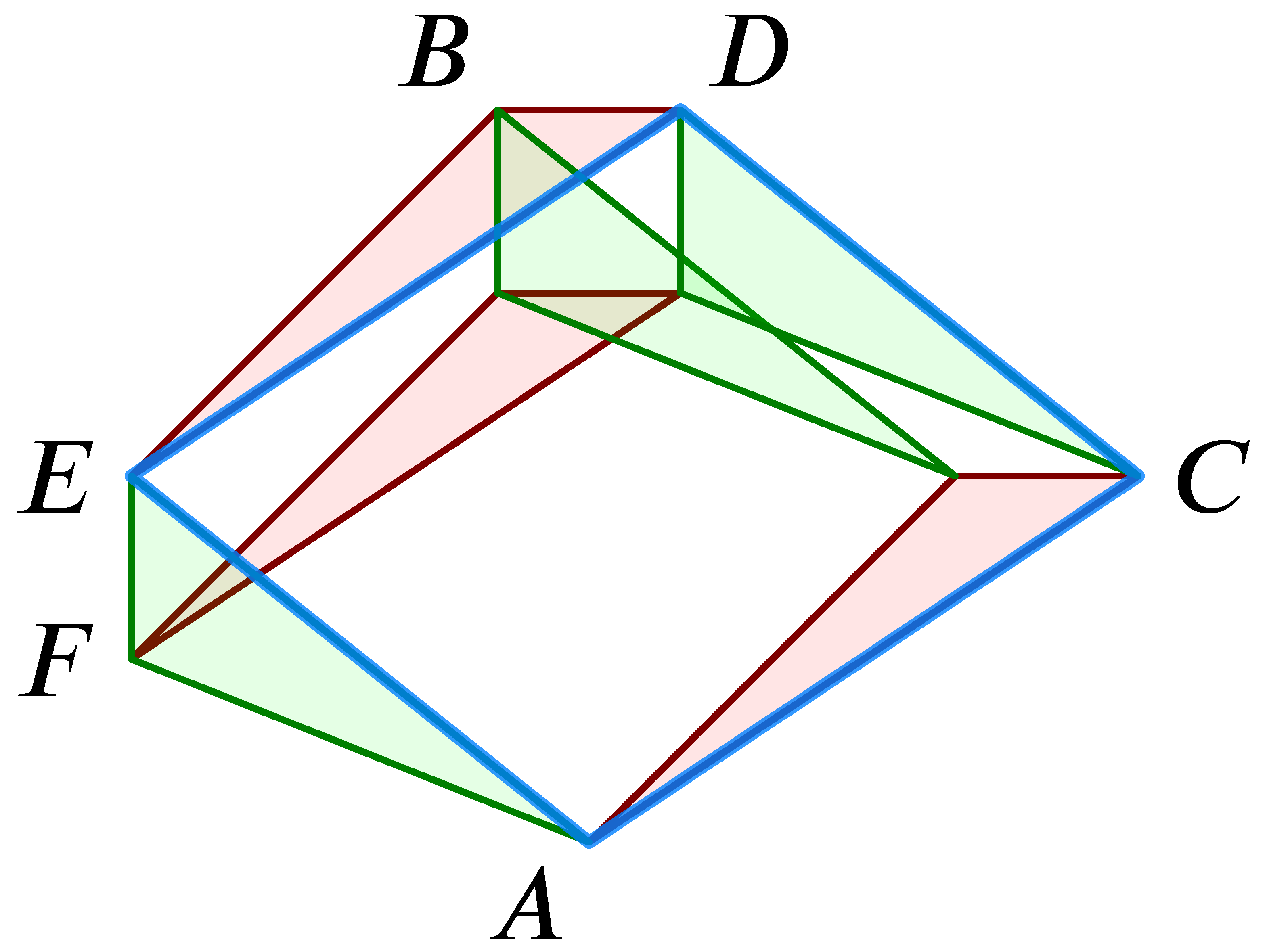}$ &
		$\includegraphics[width=.20\columnwidth]{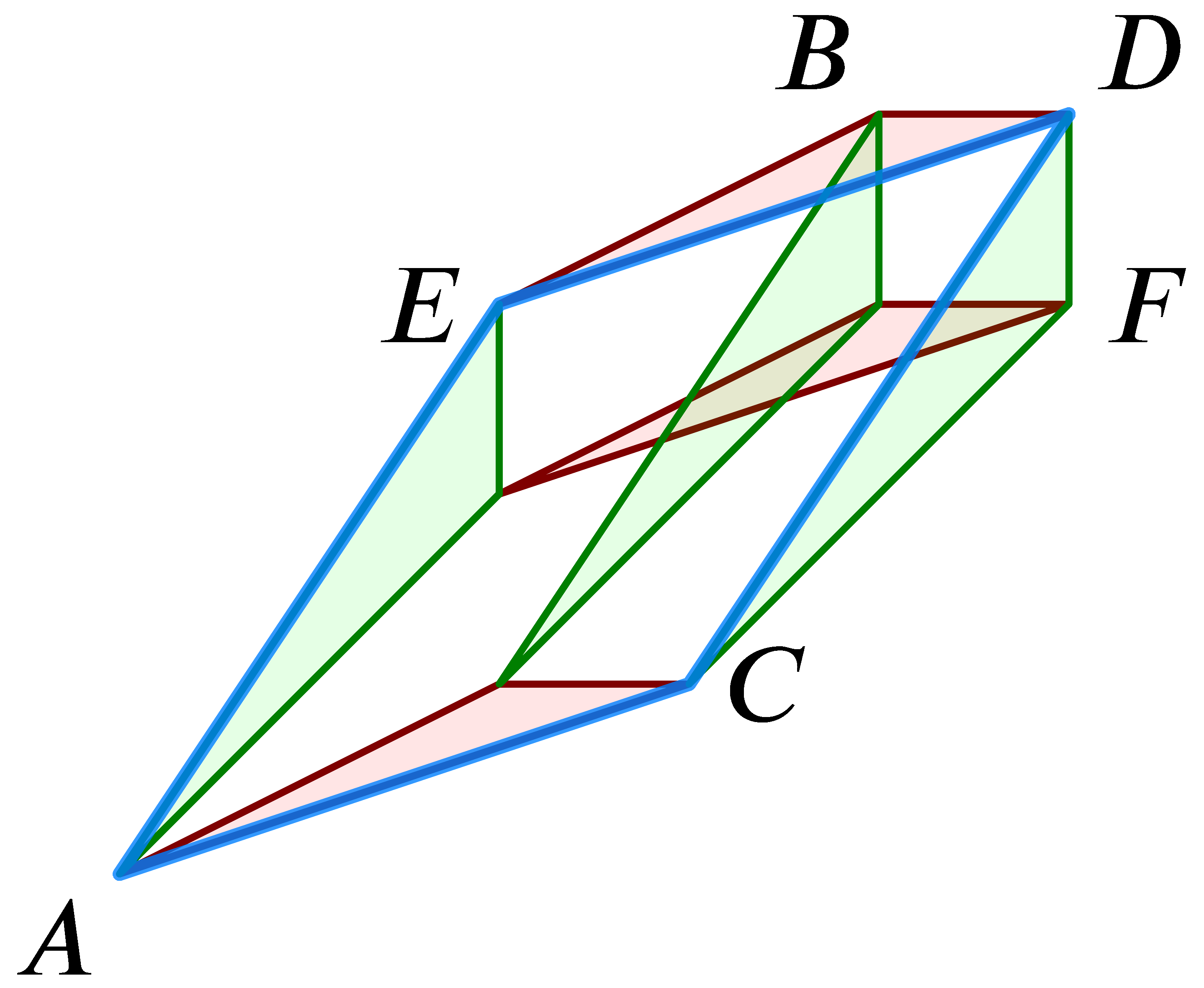}$ \\
		\multicolumn{4}{c}{$(\alpha+1)(\beta+1)-ab$} \\
		\hline\hline
		\multicolumn{4}{c}{$a\leq 0$, $b\leq 0$} \\
		\hline
		$ab>\alpha(\beta+1)$ & $ab=\alpha(\beta+1)$ & $(\alpha+1)\beta<ab<\alpha(\beta+1)$ & $ab=(\alpha+1)\beta$ \\
		$\includegraphics[width=.22\columnwidth]{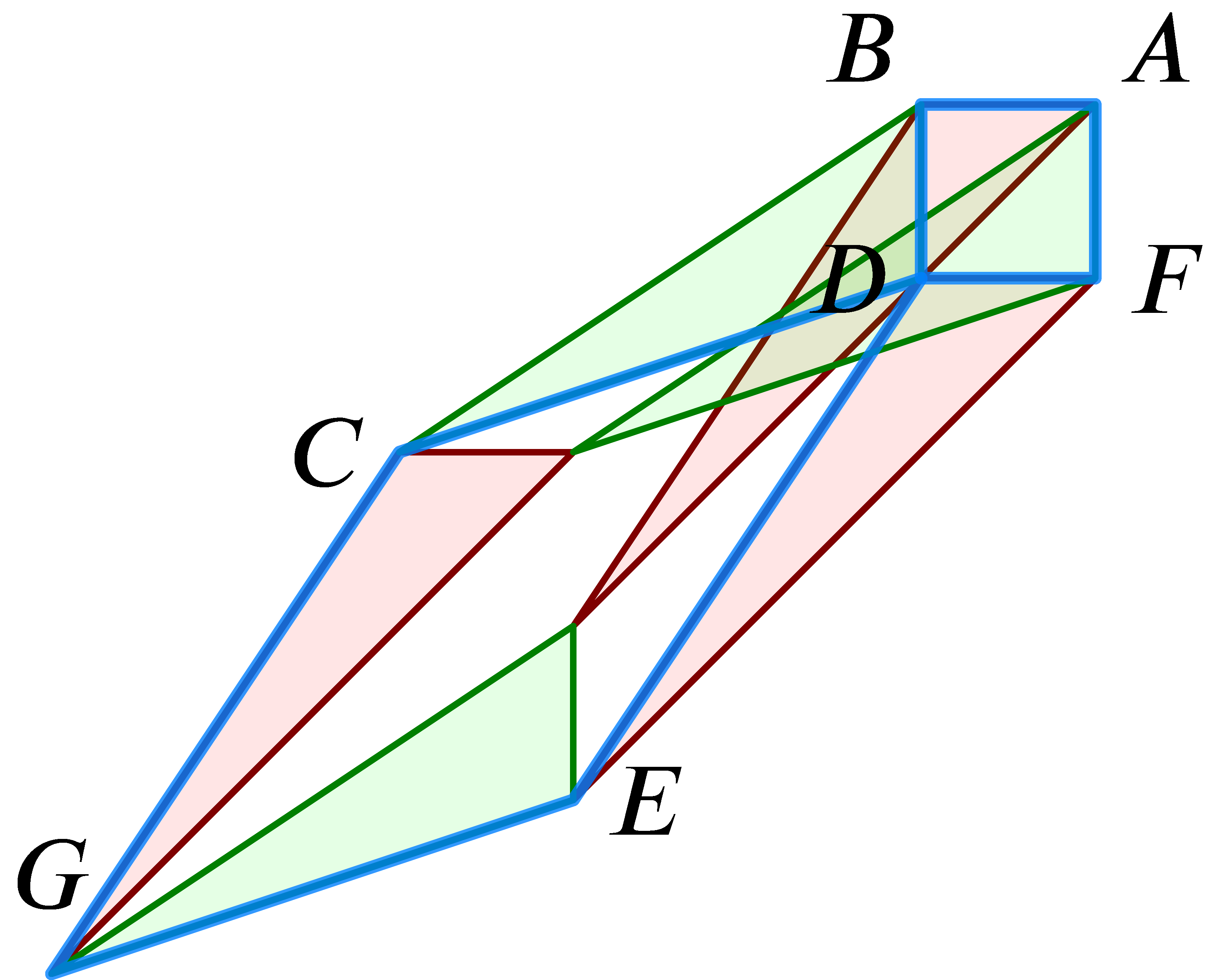}$ &
		$\includegraphics[width=.22\columnwidth]{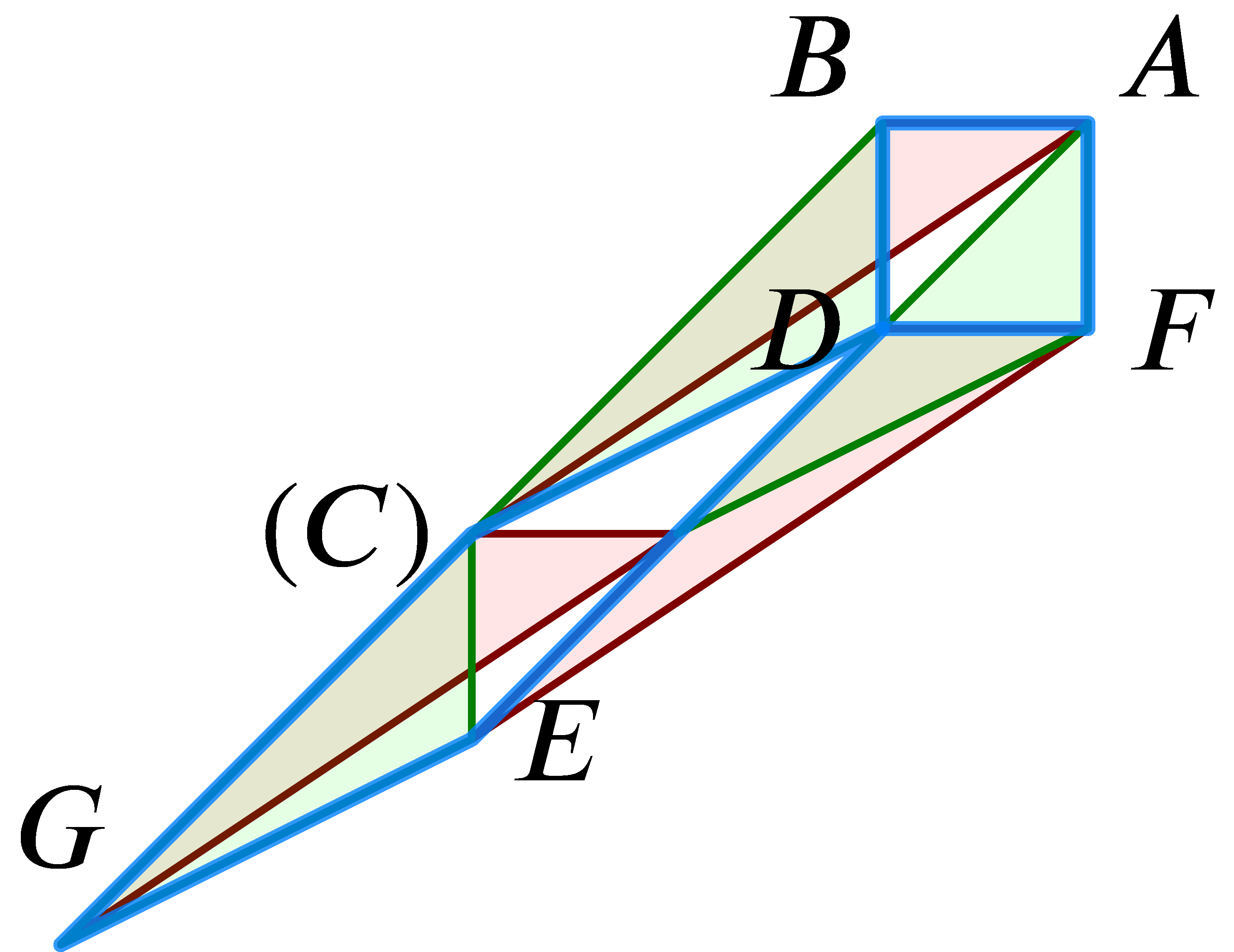}$ &
		$\includegraphics[width=.22\columnwidth]{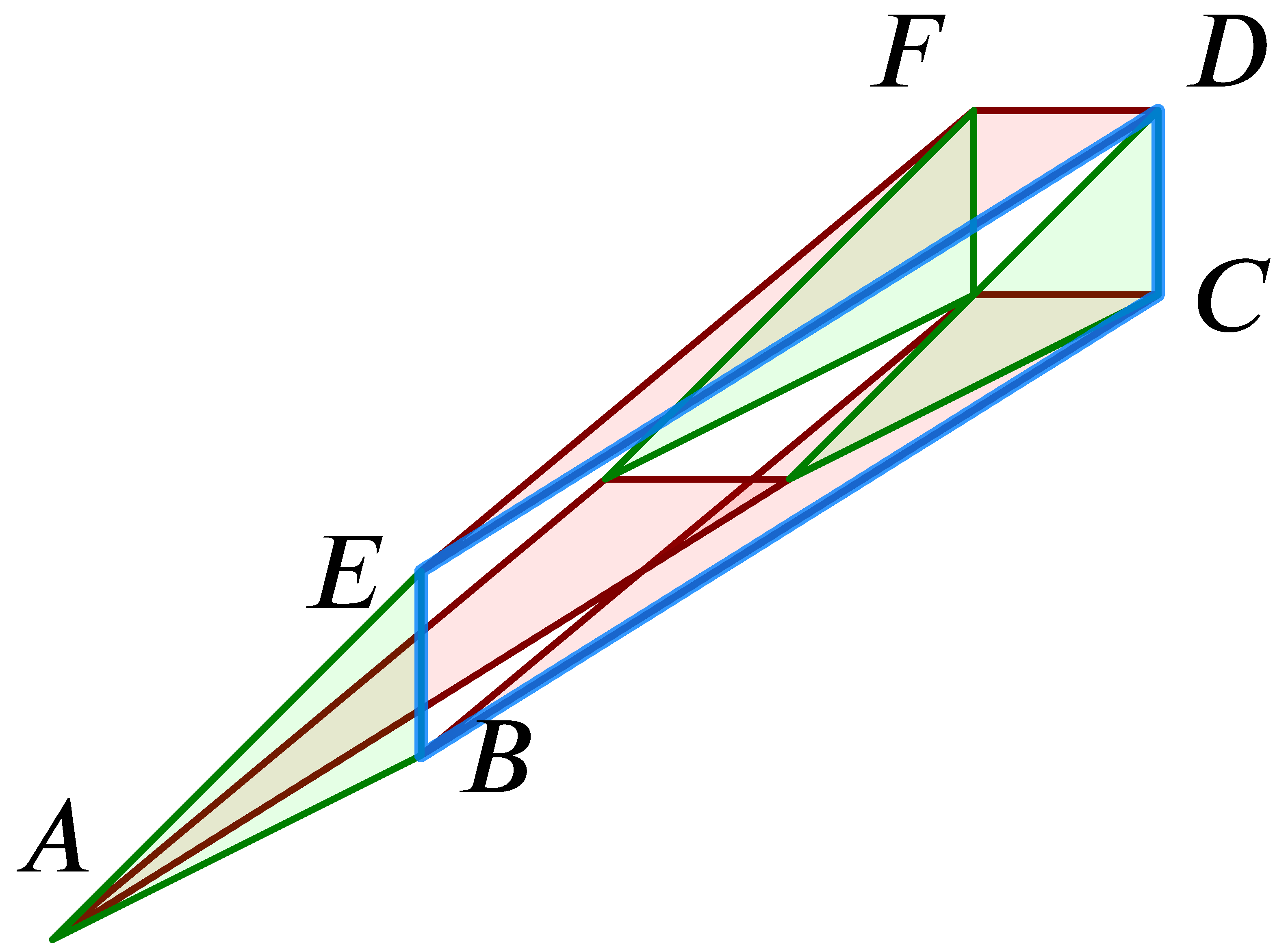}$ &
		$\includegraphics[width=.24\columnwidth]{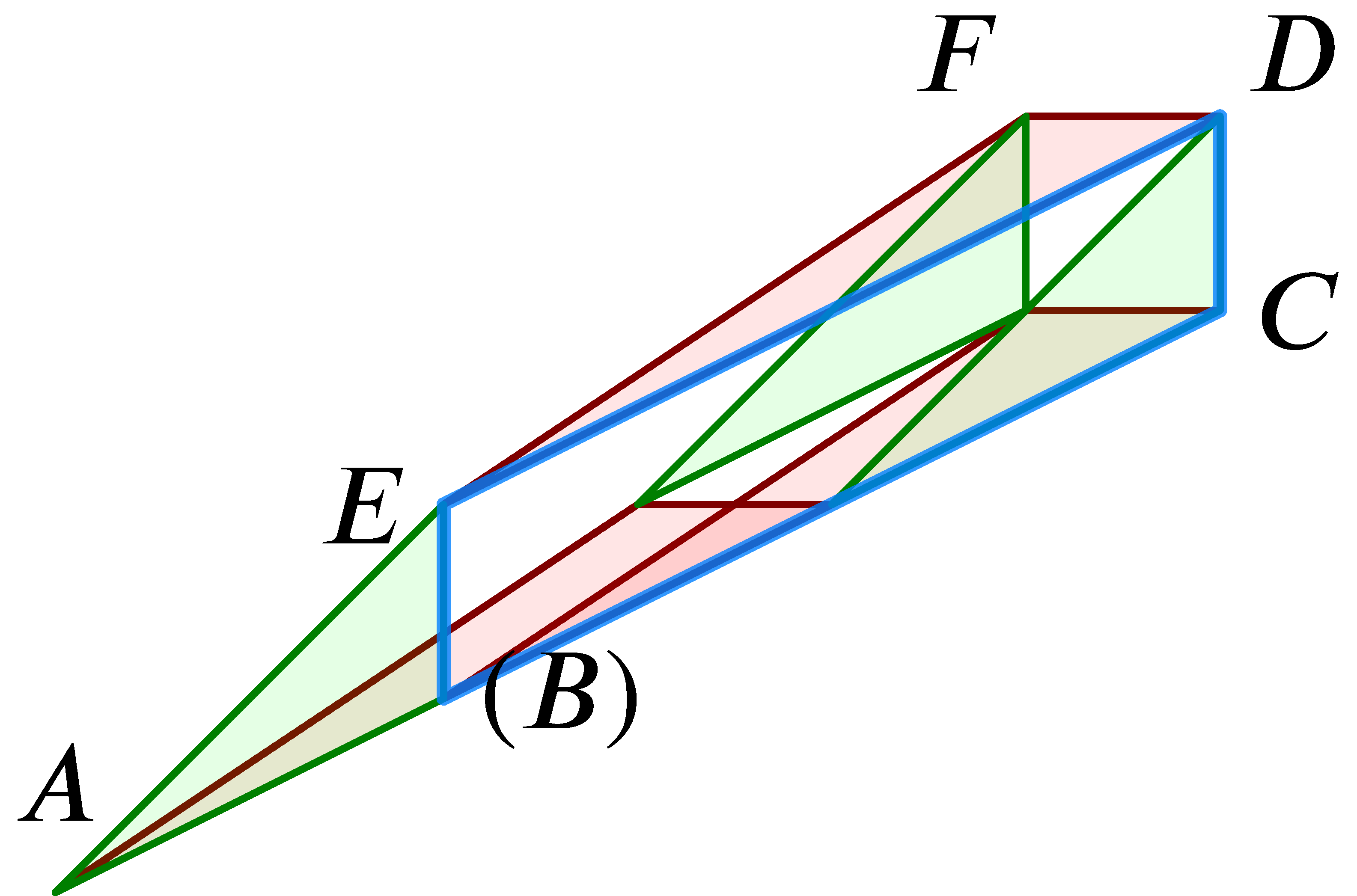}$ \\
		$ab-\alpha\beta+1$ & $\text{Deg}\left\{(1+u)\varphi_{\alpha+1,a}(u)+\varphi_{\beta,b}(u)\right\}$ & $\alpha+1$ & $\text{Deg}\left\{\varphi_{\alpha+1,a}(u)+(1+u)\varphi_{\beta,b}(u)\right\}$ \\
		\toprule[1.5pt]
	\end{tabular}
	\caption{The BKK computation of the topological index $Q$ for the BB code across various parameter choices. The red triangles represent the Newton polytopes of $f$, while the green triangles correspond to those of $g$. The regions enclosed by the blue lines delineate the mixed volume. Letters in parentheses denote points where two boundary lines are collinear, signaling a failure of the BKK condition. Each figure is accompanied by its associated topological index, displayed below it.}\label{table:newton_polytope}
\end{table}

\subsection{Topological Index on Hyperbolas}
We derive explicit expressions for the topological index under the condition that the parameters lie on hyperbolas. 

For demonstration, we focus on the calculation for the hyperbola $ab=(\alpha+1)(\beta+1)$, which implies $\frac{\alpha+1}{b} = \frac{a}{\beta+1}$. This common ratio can be expressed as follows: 
\begin{equation}
	\frac{\alpha+1}{a} = \frac{b}{\beta+1} := \frac{p}{q}
\end{equation}
for some positive integers $p$ and $q$ satisfying $\gcd(p,q) = 1$. Furthermore, it follows that there exist positive integers $r$ and $s$ such that $(\alpha+1,b)=p\cdot(r,s)$ and $(a, \beta+1)=q\cdot(r,s)$. Then
\begin{align}
	Q &= \dim_{\mathbb{F}_2}\frac{\mathbb{F}_2[x^{\pm}, y^{\pm}, z^{\pm}]}{\left(x^{-1}+1+z^{-p}, y^{-1}+1+z^q, z+x^ry^{-s}\right)} \\
	&= \dim_{\mathbb{F}_2}\frac{\mathbb{F}_2[x^{\pm}, y^{\pm}, z^{\pm}]}{\left(x^{-1}+1+z^{-p}, y^{-1}+1+z^q, (1+z^p)^r+z^{pr-1}(1+z^q)^s\right)}\label{eq:hyperbola2} \\
	&= \deg \mathcal{N}_{z^p+1}\mathcal{N}_{z^q+1}\left((1+z^p)^r+z^{pr-1}(1+z^q)^s\right).\label{eq:hyperbola3}
\end{align}
Let $\mathcal{N}_a(b)$ denote the largest factor of $b$ that is coprime to $a$. Since the topological index $Q$ counts the number of zeros over the algebraic closure of $\mathbb{F}_2$, we note from Eq.~\eqref{eq:hyperbola2} that the number of zeros is intrinsically linked to the roots of $(1+z^p)^r+z^{pr-1}(1+z^q)^s$, as the variables $x^\pm$ and $y^\pm$ can be represented in terms of $z$.  However, due to the invertibility of $x$, $y$, and $z$, we must exclude the zeros originating from $z^p+1$, $z^q+1$ as well as $z$, which leads us to the formula in Eq.~\eqref{eq:hyperbola3}. Clearly, $\mathcal{N}_a(b) = \mathcal{N}_{\gcd(a,b)}(b)$ and $\mathcal{N}_{ab}(c)=\mathcal{N}_a\mathcal{N}_b(c)=\mathcal{N}_b\mathcal{N}_a(c)$. To further simplify the expression, we proceed as follows:
\begin{align}
	\mathcal{N}_{z^p+1}\left((1+z^p)^r+z^{pr-1}(1+z^q)^s\right) &= \mathcal{N}_{\gcd(z^p+1, (1+z^p)^r+z^{pr-1}(1+z^q)^s)}\left((1+z^p)^r+z^{pr-1}(1+z^q)^s\right)\\
	&=\mathcal{N}_{z^p+1, (z^q+1)^s}\left((1+z^p)^r+z^{pr-1}(1+z^q)^s\right) \\
	&= \mathcal{N}_{z^{\gcd(p,q)+1}}\left((1+z^p)^r+z^{pr-1}(1+z^q)^s\right) \\
	&= \mathcal{N}_{z+1}\left((1+z^p)^r+z^{pr-1}(1+z^q)^s\right).
\end{align}
We can replace $\mathcal{N}_{z^q+1}$ with $\mathcal{N}_{z+1}(\cdot)$ for the same reason, obtaining
\begin{align}
	Q &= \deg\mathcal{N}_{z+1}\left((1+z^p)^r+z^{pr-1}(1+z^q)^s\right) \\
	&= \text{Deg} \left((1+(1+u)^p)^r + (1+u)^{pr-1}(1+(1+u)^q)^s\right) \quad \text{with}\quad z \mapsto u+1.
\end{align}
We use the notation $\text{Deg}$ to denote the degree difference between the highest and the lowest terms of Laurent polynomials. We introduce an auxiliary function defined as
\begin{equation}
	\varphi_{\gamma,\theta}(u) \coloneqq \left(1+(1+u)^{\frac{\gamma}{\gcd(\gamma, |\theta|)}}\right)^{\gcd(\gamma, |\theta|)}.
\end{equation}
Thus, the final simplified result is:
\begin{equation}
	Q = \text{Deg}\left\{\varphi_{\alpha+1,a}(u)+(1+u)^\alpha\varphi_{\beta+1,b}(u)\right\}.
\end{equation}
As a special case, consider the parameters $(\overline{\alpha}, \overline{\beta}, a, b) = (\overline{c},\overline{c},c+1,c+1)$. We derive that
\begin{equation}
	Q(\overline{c},\overline{c},c+1,c+1)=\text{Deg}\left((u+1)^c+1\right) = c - p_2(c)
\end{equation}
where $p_2(u)$ is the greatest factor of $u$ that is a power of 2; for instance, $p_2(4)=4$ and $p_2(5)=1$. 
Note $Q(\overline{c},\overline{c},c+1,c+1)=0$ if $c$ is a power of 2.

Expressions for $Q$ corresponding to other hyperbolas can be derived similarly and are summarized in Table~\ref{table:newton_polytope}.

\section{Anyon periods and mobility sublattice}\label{app:anyon_periods}

For a topological order with lattice translation symmetry, the translation group $\Lambda=\{x^iy^j | i,j\in\mathbb{Z}\}$ may permute anyon types. We denote the group action by 
\begin{equation}
	\Lambda \times \mathscr{C} \to \mathscr{C}:\  (\lambda, \mathfrak{a}) \mapsto \lambda \mathfrak{a},
\end{equation}
where $\mathscr{C}$ denotes the set of anyon types. 

Here, we are interested in finding the sublattice (a subgroup of $\Lambda$) that preserves all anyon types, namely,
\begin{equation}
	\Lambda_\mathscr{C}=\{\lambda\in\Lambda|\ \lambda \mathfrak{a}=\mathfrak{a}, \forall \mathfrak{a}\in\mathscr{C}\}
\end{equation}
We refer to $\Lambda_{\mathscr{C}}$ as the \emph{mobility sublattice} for $\mathscr{C}$.

For BB codes, since $\mathscr{C}$ is generated by the elementary excitations ${1 \choose 0}$ and ${0 \choose 1}$, we only need ${\lambda \choose 0} \sim {1 \choose 0}$ and ${0 \choose \lambda} \sim {0 \choose 1}$, which requires $\lambda\sim 1$ modulo the ideal $(f,g)_R$. Therefore,
\begin{equation}
	\Lambda_\mathscr{C} =\{\lambda\in\Lambda |\ \lambda-1\in (f,g)_R\}.
\end{equation}
We use $\ell_o$ and $m_o$ to denote the \emph{anyon periods}---namely, the lattice spacings of $\Lambda_{\mathscr{C}}$---in the $x$ and $y$ directions, respectively.

In the following, we introduce an efficient algorithm for computing anyon periods.

For demonstration, we show how to compute $\ell_o$ for $\mathbb{BB}(\overline{1},\overline{1},3,\overline{3})$.  First, compute the Gr\"obner basis of $\mathcal{I}=(1+x+\bar{x}\bar{y}^3,\ 1+y+\bar{y}x^3,\ x\bar{x}+1,\ y\bar{y}+1)$ in the lexicographic order $\bar{y}>\bar{x}>y>x$. The result is:
\begin{equation}
	\text{GB}=\left\{x^{11} + x^{10} + x^9 + x^7 + x^6 + x^4 + 1,\ yx^2 + yx + y + x^8 + x^7 + x^5 + x^4,\ y^2 + y + x^3,\ \bar{x}+y^3x+y^3,\ \bar{y}+\bar{x}^3y+\bar{x}^3\right\}
\end{equation}
Let $h=x^{11} + x^{10} + x^9 + x^7 + x^6 + x^4 + 1$, then 
\begin{equation}
	\ell_o = \text{ord}_{\mathcal{I}}(x) = \text{ord}_h(x).
\end{equation}
Here and below, we use $\text{ord}_\mathcal{I}(x)$ to denote the smallest $\ell>0$ satisfying $x^\ell-1\in \mathcal{I}$, and use $\text{ord}_{h}(x)$ to denote the smallest $\ell>0$ satisfying $h | (x^\ell-1)$. In the literature, $\text{ord}_{h}(x)$ is referred to as the \textit{period} of polynomial $h$ \cite{ber84}.

Factoring $h$ into the product of irreducible polynomials, we have 
\begin{equation}\label{eq:factor}
	h = (x^2+x+1)^2(x^7+x^6+x^4+x^2+1)
\end{equation}
According to the theory of polynomials over finite fields, if a polynomial has degree $n$, its period must be a factor of $2^n-1$ \cite{ber84}. So we can factor the integer $2^n-1$ to its unique production of primes and check each factors. The period of each irreducible polynomial
\begin{align}\label{eq:period}
	\mathrm{ord}_{x^2+x+1}(x) &= 3, \\
	\mathrm{ord}_{x^7+x^6+x^4+x^2+1}(x) &= 127.
\end{align}
\begin{thm}{\rm \cite[Theorem 6.21]{ber84}}\label{thm:period}
	If $f(x)=\prod_i[f_i(x)]^{m_i}$, where $f_i(x)$ are irreducible polynomials of periods $n_i$ over $\mathbb{F}_2$, then the period of $f(x)$ is the least common multiple of $n_i$ times the least power of $2$ which is not less then any of the $m_i$.
\end{thm}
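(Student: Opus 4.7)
The plan is to reduce the general statement to the case of a single irreducible factor raised to a power, then compute the period of each such prime power using the Frobenius identity in characteristic $2$, and finally combine the results via least common multiples. Throughout, the \emph{period} of a polynomial $h(x)$ with $h(0)\neq 0$ means the smallest positive integer $N$ such that $h(x)\mid x^{N}-1$.

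First I would establish a coprimality lemma: if $a(x)$ and $b(x)$ are coprime with periods $N_a$ and $N_b$, then $a(x)b(x)$ has period $\mathrm{lcm}(N_a,N_b)$. Indeed, $ab\mid x^N-1$ iff both $a\mid x^N-1$ and $b\mid x^N-1$, iff $N_a\mid N$ and $N_b\mid N$. Since distinct irreducible factors $f_i$ are pairwise coprime, so are the powers $f_i^{m_i}$, and this reduces the problem to computing the period of each $f_i^{m_i}$ separately.

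Next I would compute the period of $f_i^{m_i}$. Since $f_i$ is irreducible with $f_i(0)\neq 0$, $n_i$ divides $2^{\deg f_i}-1$ and is therefore odd. Writing $N = 2^{t}N'$ with $N'$ odd, the Frobenius identity in characteristic $2$ gives $x^N-1 = (x^{N'}-1)^{2^{t}}$. For odd $N'$, the polynomial $x^{N'}-1$ is separable, so each of its irreducible factors appears with multiplicity exactly $1$; moreover $f_i\mid x^{N'}-1$ iff $n_i\mid N'$ (as $x$ has order $n_i$ in $\mathbb{F}_2[x]/(f_i)^{\times}$). Consequently the multiplicity of $f_i$ in $x^N-1$ is exactly $2^{t}$ when $n_i\mid N'$, and $0$ otherwise. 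Hence $f_i^{m_i}\mid x^N-1$ iff $n_i\mid N'$ and $2^{t}\geq m_i$, and the minimal such $N$ is $n_i\cdot 2^{t_i}$, where $2^{t_i}$ is the smallest power of $2$ not less than $m_i$. Applying the coprimality lemma, the period of $\prod_i f_i^{m_i}$ equals $\mathrm{lcm}_{i}\bigl(n_i\cdot 2^{t_i}\bigr) = \mathrm{lcm}_{i}(n_i)\cdot 2^{\max_i t_i}$, where the factorisation is valid because every $n_i$ is odd. The factor $2^{\max_i t_i}$ is precisely the least power of $2$ that is not less than any $m_i$, matching the claim.

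The main obstacle is the exact multiplicity count of $f_i$ in $x^{N}-1$: one must simultaneously exploit (i) the oddness of $n_i$ to split $N = 2^{t}N'$ cleanly, and (ii) the separability of $x^{N'}-1$ in characteristic $2$ for odd $N'$. Once both ingredients are in place, the combinatorics of the exponents collapses to a tidy lcm-times-power-of-two formula, and the rest is formal. A subtlety to verify carefully is the lower bound argument, ensuring that no $N$ smaller than the product $\mathrm{lcm}(n_i)\cdot 2^{\max_i t_i}$ can work; this follows by running the divisibility chain in reverse and observing that the odd part of $N$ must simultaneously absorb every $n_i$ while the $2$-part must dominate every $m_i$.
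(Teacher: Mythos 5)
The paper does not prove this statement at all; it is cited verbatim as \cite[Theorem~6.21]{ber84} (Berlekamp) and used as a known fact, so there is no in-paper argument to compare against. Your proof is a correct, self-contained derivation of the cited result. The three ingredients you use — the coprimality/lcm reduction to prime powers, the Frobenius factorization $x^{2^{t}N'}-1=(x^{N'}-1)^{2^{t}}$ for odd $N'$, and separability of $x^{N'}-1$ over $\mathbb{F}_2$ when $N'$ is odd — are exactly the right ones, and they do combine cleanly to give multiplicity exactly $2^{t}$ for each irreducible factor $f_i$ with $n_i\mid N'$. The final lcm identity $\mathrm{lcm}_i(n_i\cdot 2^{t_i})=\mathrm{lcm}_i(n_i)\cdot 2^{\max_i t_i}$ indeed depends on the observation that each $n_i\mid 2^{\deg f_i}-1$ is odd, which you state explicitly. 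This is essentially the classical argument found in Lidl--Niederreiter and Berlekamp, and the lower-bound verification you flag at the end is already handled by the if-and-only-if nature of your divisibility criterion, so no further work is needed there.
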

Using the theorem above, we have
\begin{equation}
	\ell_o = 2 \times \mathrm{lcm}(3,127)=762.
\end{equation}
Similarly, the computation of $m_0$ can be performed using an analogous method, with the modification of adopting the lexicographic order $\bar{x}>\bar{y}>x>y$. We get
\begin{equation}
	\text{GB}' = \{y^{11}+y^{10}+y^6+y^4+y^3+y+1, y^2x+yx+x+y^{10}+y^5+y^2+y, x^2+x+y^{10}+y^7+y^5+y^4, \overline{y}+y^{10}+y^9+y^5+y^3+y^2+1, \overline{x}+x+y^5+y^4+y^3+1\}
\end{equation}
Now let $h'=y^{11}+y^{10}+y^6+y^4+y^3+y+1$ and factoring
\begin{equation}
	h' = (y^2+y+1)^2(y^7+y^6+y^5+y^4+y^2+y+1)
\end{equation}
Compute the period of each factor
\begin{align}\label{eq:period}
	\mathrm{ord}_{y^2+y+1}(x) &= 3 \\
	\mathrm{ord}_{y^7+y^6+y^5+y^4+y^2+y+1}(x) &= 127
\end{align}
Using Theorem~\ref{thm:period} again, we have 
\begin{equation}
	m_o = 2 \times \mathrm{lcm}(3,127) = 762
\end{equation}
The factorization of polynomials and the computation of periods for irreducible polynomials are supported by efficient algorithms. 

The pair $(\ell_o,m_o)$ does not inherently generate the entire mobility sublattice since $\langle x^{\ell_o}, y^{m_o}\rangle$ may not be the unit cell. In order to determine the generators for $\Lambda_{\mathscr{C}}$, we adopt the advanced computational algebra methods proposed in \cite{kreuzer24} to find the binomial generators, specifically, polynomials of the form $x^ly^m-1$. The primary methodology involves the decomposition of the quotient ring $R/\mathcal{I}$ into separable and nilpotent components. This is formally expressed as
\begin{equation}
	(R/\mathcal{I})^\times \cong (1+\mathfrak{R})\times \mathfrak{S}^\times 
\end{equation}
Here $\mathfrak{R}$ is the nilradical and $\mathfrak{S}$ is the separate part. This decomposition facilitates the computation of sublattices corresponding to each component individually.  To determine the mobility sublattices associated with the separable components, localize $R/\mathcal{I}$ at each maximum ideal and compute each mobility sublattice on residue fields. To compute the mobility sublattices related to nilpotent parts, it can be demonstrated that $1+\mathfrak{R}$  forms an abelian group. This property allows for the computation of mobility sublattices using the structural properties of abelian groups.  Combining the results, we obtain complete information regarding the mobility sublattices.

We calculate the mobility sublattices for the BB code with $\alpha, \beta$ in $\{0,1,2\}$ and $a, b$ ranging from $-3$ to $3$. Our algorithm enables the computation of billion-sized periods in several seconds on a personal computer. The results are displayed in Table~\ref{tabel:00}$\sim$\ref{tabel:22}.

\begin{table}[h]
	\begin{tabular}{
			|>{\centering\arraybackslash}p{.03\textwidth}|
			>{\centering\arraybackslash}p{.09\textwidth}|
			>{\centering\arraybackslash}p{.09\textwidth}|
			>{\centering\arraybackslash}p{.09\textwidth}|
			>{\centering\arraybackslash}p{.09\textwidth}|
			>{\centering\arraybackslash}p{.09\textwidth}|
			>{\centering\arraybackslash}p{.09\textwidth}|
			>{\centering\arraybackslash}p{.09\textwidth}|
		}
		\hline
		$a \backslash b$ &  -3 & -2 & -1 & 0 & 1 & 2 & 3 \\
		\hline\hline
		\multirow{3}{*}{$3$} & $Q=9$ &  6 & 3& 0& 0& 4&7\\
		& $\langle x^{511},x^{174}y\rangle$ & $\langle x^{63},x^{29}y\rangle$ & $\langle x^{7},x^{5}y\rangle$ & \multirow{2}{*}{*} &\multirow{2}{*}{*} & $\langle x^{5},x^{3}y^3\rangle$ & $\langle x^{105},x^{41}y\rangle$\\ 
		& $\langle y^{511}, xy^{279}\rangle$ & $\langle y^{63},xy^{50}\rangle$ & $\langle y^{7},xy^{3}\rangle$ &  & & $\langle y^{15},xy^{6}\rangle$ & $\langle y^{105},xy^{41}\rangle$\\ 
		\hline
		\multirow{3}{*}{$2$} & 6  & 4 & 2 & 0& 0& 2&\\
		& \multirow{2}{*}{$\langle x^7,y^7 \rangle$} & $\langle x^{15},x^{6}y\rangle$ & $\langle x^{3},x^{2}y\rangle$ & \multirow{2}{*}{*} &  \multirow{2}{*}{*}& $\langle x^{3},x^{2}y\rangle$& \\ 
		& & $\langle y^{5},x^3y^{3}\rangle$ & $\langle y^{3},xy^{2}\rangle$ &  &  & $\langle y^{3},xy^{2}\rangle$ &\\ 
		\hline
		\multirow{3}{*}{$1$}&0  & 2 & 0 & 0& $\infty$& &\\
		&\multirow{2}{*}{*} & $\langle x^{3},xy\rangle$ &\multirow{2}{*}{*}  & \multirow{2}{*}{*} & \multirow{2}{*}{*}& & \\ 
		& & $\langle y^{3},xy\rangle$ &  &  &  &  &\\ 
		\hline
		\multirow{3}{*}{$0$}& 0  & 0 & 0 & 0& & &\\
		&\multirow{2}{*}{*} & \multirow{2}{*}{*}&\multirow{2}{*}{*} & \multirow{2}{*}{*}& & & \\ 
		& & & &  &  &  &\\ 
		\hline
		\multirow{3}{*}{$-1$}& 4  & 3 & 2 & & & &\\
		&$\langle x^{15},x^{4}y\rangle$ & $\langle x^{7},x^{5}y\rangle$ & $\langle x^{3},x^{2}y\rangle$ &  & & & \\ 
		& $\langle y^{15},xy^{4}\rangle$ & $\langle y^{7},xy^{3}\rangle$ & $\langle y^{3},xy^{2}\rangle$ &  & & &\\ 
		\hline
		\multirow{3}{*}{$-2$}& 7  & 5 &  & & & &\\
		& $\langle x^{127},x^{91}y\rangle$ &   $\langle x^{21},x^{13}y\rangle$ & & & & & \\ 
		& $\langle y^{127},xy^{67}\rangle$ &   $\langle y^{21},xy^{13}\rangle$ & & & & &\\ 
		\hline
		\multirow{3}{*}{$-3$}&10  &  &  & & & &\\
		& $\langle x^{315},x^{102}y^3\rangle$ & & & & & & \\ 
		& $\langle y^{315},x^3y^{102}\rangle$ & & & & & &\\ 
		\hline
	\end{tabular}
	\caption{Topological index and mobility sublattices for the codes specified by $f=1+x+y^b$ and $g=1+y+x^a$. 
		Each pair $\left\langle \cdot,\cdot\right\rangle $ provides a set of translation generators for the corresponding mobility sublattice. We omit the value in right-bottom due to the symmetry between $a$ and $b$.}\label{tabel:00}
\end{table}

\begin{table}[h]
	\begin{tabular}{
			|>{\centering\arraybackslash}p{.03\textwidth}|
			>{\centering\arraybackslash}p{.09\textwidth}|
			>{\centering\arraybackslash}p{.09\textwidth}|
			>{\centering\arraybackslash}p{.09\textwidth}|
			>{\centering\arraybackslash}p{.09\textwidth}|
			>{\centering\arraybackslash}p{.09\textwidth}|
			>{\centering\arraybackslash}p{.09\textwidth}|
			>{\centering\arraybackslash}p{.09\textwidth}|
		}
		\hline
		$a \backslash b$ &  -3 & -2 & -1 & 0 & 1 & 2 & 3 \\
		\hline\hline
		\multirow{3}{*}{$3$} & $Q=10$ &  7 & 4& 0& 2& 5&8\\
		& $\langle x^{1023},x^{989}y\rangle$ & $\langle x^{93},x^{22}y\rangle$ & $\langle x^{6},x^{4}y^2\rangle$ & \multirow{2}{*}{*} & $\langle x^{3},xy\rangle$& $\langle x^{21},x^{8}y\rangle$ & $\langle x^{217},x^{192}y\rangle$\\ 
		& $\langle y^{1023}, xy^{692}\rangle$ & $\langle y^{93},xy^{55}\rangle$ & $\langle y^{6},x^2y^{4}\rangle$ &  &$\langle y^{3},xy\rangle$ & $\langle y^{21},xy^{8}\rangle$ & $\langle y^{1217},xy^{26}\rangle$\\ 
		\hline
		\multirow{3}{*}{$2$} & 7  & 5 & 3 & 0& 0& 3&5\\
		& $\langle x^{127},x^{38}y\rangle$ & $\langle x^{31},x^{7}y\rangle$ & $\langle x^{7},x^{3}y\rangle$ &\multirow{2}{*}{*}  & \multirow{2}{*}{*} & $\langle x^{7},x^{2}y\rangle$& $\langle x^{31},x^{25}y\rangle$\\ 
		& $\langle y^{127},xy^{117}\rangle$ & $\langle y^{31},xy^{9}\rangle$ & $\langle y^{7},xy^{3}\rangle$ &  &  & $\langle y^{7},xy^{2}\rangle$ &$\langle y^{31},xy^{5}\rangle$\\ 
		\hline
		\multirow{3}{*}{$1$}&3  & 3 & 0 & 0& 0& 0&0\\
		& $\langle x^{7},xy\rangle$& $\langle x^{7},x^5y\rangle$ &\multirow{2}{*}{*}  &\multirow{2}{*}{*}  & \multirow{2}{*}{*}& \multirow{2}{*}{*}&\multirow{2}{*}{*} \\ 
		& $\langle y^{7},xy\rangle$& $\langle y^{7},xy^3\rangle$ &  &  &  &  &\\ 
		\hline
		\multirow{3}{*}{$0$}& 0  & 2 & 2 & 0& 2& 2&0\\
		&\multirow{2}{*}{*} &$\langle x^{3},xy\rangle$ & $\langle x^{3},x^2y\rangle$& \multirow{2}{*}{*}&$\langle x^{3},xy\rangle$ &$\langle x^{3},x^2y\rangle$ & \multirow{2}{*}{*}\\ 
		& &$\langle y^{3},xy\rangle$ & $\langle y^{3},xy^2\rangle$ &  & $\langle y^{3},xy\rangle$ & $\langle y^{3},xy^2\rangle$ &\\ 
		\hline
		\multirow{3}{*}{$-1$}& 4  & 3 & 0 & 0& 3& 4&5\\
		&$\langle x^{15},x^{9}y\rangle$ & $\langle x^{7},x^{6}y\rangle$ & \multirow{2}{*}{*} & \multirow{2}{*}{*} & $\langle x^{7},x^{2}y\rangle$& $\langle x^{15},x^{9}y\rangle$& $\langle x^{31},x^{14}y\rangle$\\ 
		& $\langle y^{5},x^3y^{2}\rangle$ & $\langle y^{7},xy^{6}\rangle$ &  &  & $\langle y^{7},xy^{4}\rangle$& $\langle y^{5},x^3y^{2}\rangle$&$\langle y^{31},xy^{20}\rangle$\\ 
		\hline
		\multirow{3}{*}{$-2$}& 7  & 5 & 3 & 0& 4&6 &8\\
		& $\langle x^{105},x^{74}y\rangle$ &   $\langle x^{31},x^{9}y\rangle$ & $\langle x^{7},x^{5}y\rangle$&\multirow{2}{*}{*} & $\langle x^{15},x^{3}y\rangle$& $\langle x^{63},x^{34}y\rangle$& $\langle x^{51},x^{46}y^5\rangle$\\ 
		& $\langle y^{105},xy^{44}\rangle$ &   $\langle y^{31},xy^{7}\rangle$ & $\langle y^{7},xy^{3}\rangle$& & $\langle y^{5},x^3y\rangle$& $\langle y^{63},xy^{13}\rangle$&$\langle y^{255},xy^{50}\rangle$\\ 
		\hline
		\multirow{3}{*}{$-3$}&10  & 7 & 4 & 0& 5& 8&11\\
		& $\langle x^{1023},x^{28}y\rangle$ &$\langle x^{93},x^{25}y\rangle$ &$\langle x^{6},x^{4}y^2\rangle$ &\multirow{2}{*}{*}& $\langle x^{21},x^{4}y\rangle$&$\langle x^{63},x^{35}y\rangle$ & $\langle x^{119},x^{6}y^{15}\rangle$\\ 
		& $\langle y^{1023},xy^{475}\rangle$ &$\langle y^{93},xy^{67}\rangle$ &$\langle y^{6},x^2y^{4}\rangle$ & &$\langle y^{121},xy^{16}\rangle$ &$\langle y^{9},x^7y^{2}\rangle$ &$\langle y^{1785},xy^{300}\rangle$\\ 
		\hline
	\end{tabular}
	\caption{Topological index and mobility sublattices for the codes specified by $f=1+x+y^b$ and $g=1+y+y^{-1}x^a$.}\label{tabel:01}
\end{table}

\begin{table}[h]
	\begin{tabular}{
			|>{\centering\arraybackslash}p{.03\textwidth}|
			>{\centering\arraybackslash}p{.09\textwidth}|
			>{\centering\arraybackslash}p{.09\textwidth}|
			>{\centering\arraybackslash}p{.09\textwidth}|
			>{\centering\arraybackslash}p{.09\textwidth}|
			>{\centering\arraybackslash}p{.09\textwidth}|
			>{\centering\arraybackslash}p{.09\textwidth}|
			>{\centering\arraybackslash}p{.09\textwidth}|
		}
		\hline\hline
		$a \backslash b$ &  -3 & -2 & -1 & 0 & 1 & 2 & 3 \\
		\hline
		\multirow{3}{*}{$3$} & $Q=11$ &  8 & 5& 0& 0& 5&8\\
		& $\langle x^{2047},x^{647}y\rangle$ & $\langle x^{217},x^{180}y\rangle$ & $\langle x^{31},x^{20}y\rangle$ & \multirow{2}{*}{*} &\multirow{2}{*}{*} & $\langle x^{31},x^{22}y\rangle$ & $\langle x^{255},x^{178}y\rangle$\\ 
		& $\langle y^{2047}, xy^{560}\rangle$ & $\langle y^{217},xy^{129}\rangle$ & $\langle y^{31},xy^{129}\rangle$ &  & & $\langle y^{31},xy^{24}\rangle$ & $\langle y^{255},xy^{202}\rangle$\\ 
		\hline
		\multirow{3}{*}{$2$} & 8  & 6 & 4 & 0& 2& 3&5\\
		& $\langle x^{255},x^{132}y\rangle$ & $\langle x^{15},x^{6}y^3\rangle$ & $\langle x^{5},x^{4}y^3\rangle$ & \multirow{2}{*}{*} & $\langle x^{3},xy\rangle$ & $\langle x^{7},xy\rangle$&$\langle x^{31},x^4y\rangle$ \\ 
		& $\langle y^{85},x^3y^{29}\rangle$ & $\langle y^{15},x^3y^{9}\rangle$ & $\langle y^{15},xy^{12}\rangle$ &  & $\langle y^{3},xy\rangle$ & $\langle y^{7},xy\rangle$ &$\langle y^{31},xy^8\rangle$\\ 
		\hline
		\multirow{3}{*}{$1$}&3  & 4 & 2 & 0& 0& 2&0\\
		& $\langle x^{7},x^4y\rangle$& $\langle x^{5},x^2y^3\rangle$ & $\langle x^{3},x^2y\rangle$ & \multirow{2}{*}{*} & \multirow{2}{*}{*}& $\langle x^{3},x^2y\rangle$& \multirow{2}{*}{*}\\ 
		& $\langle y^{7},xy^2\rangle$& $\langle y^{15},xy^9\rangle$ & $\langle y^{3},xy^2\rangle$ &  &  & $\langle y^{3},xy^2\rangle$ &\\ 
		\hline
		\multirow{3}{*}{$0$}& 3  & 3 & 3 & 0& 3& 3&3\\
		&$\langle x^{7},xy\rangle$ & $\langle x^{7},x^6y\rangle$&$\langle x^{7},x^5y\rangle$ & \multirow{2}{*}{*}& $\langle x^{7},x^4y\rangle$&$\langle x^{7},x^2y\rangle$ &$\langle x^{7},x^3y\rangle$\\ 
		&$\langle y^{7},xy\rangle$ & $\langle y^{7},xy^6\rangle$ &$\langle y^{7},xy^3\rangle$  &  &$\langle y^{7},xy^2\rangle$ & $\langle y^{7},xy^4\rangle$&$\langle y^{7},xy^5\rangle$\\ 
		\hline
		\multirow{3}{*}{$-1$}& 4  & 2 & 3 &0 & 4& 5&6\\
		&$\langle x^{15},x^{14}y\rangle$ & $\langle x^{3},xy\rangle$ & $\langle x^{7},x^3y\rangle$ & \multirow{2}{*}{*} & $\langle x^{6},x^2y^2\rangle$& $\langle x^{31},x^{24}y\rangle$& $\langle x^{63},x^{19}y\rangle$\\ 
		& $\langle y^{15},xy^{14}\rangle$ & $\langle y^{3},xy\rangle$ & $\langle y^{7},xy^5\rangle$ &  & $\langle y^{6},x^2y^2\rangle$& $\langle y^{31},xy^{22}\rangle$&$\langle y^{63},xy^{10}\rangle$\\ 
		\hline
		\multirow{3}{*}{$-2$}& 7 & 5 & 2 &0 &5 & 7&9\\
		& $\langle x^{127},x^{7}y\rangle$ &   $\langle x^{31},x^{6}y\rangle$ &$\langle x^{3},x^{2}y\rangle$ & \multirow{2}{*}{*}& $\langle x^{31},x^{17}y\rangle$&$\langle x^{93},x^{71}y\rangle$ & $\langle x^{63},x^{28}y^7\rangle$\\ 
		& $\langle y^{127},xy^{109}\rangle$ &   $\langle y^{31},xy^{26}\rangle$ &$\langle y^{3},xy^{2}\rangle$ & & $\langle y^{31},xy^{11}\rangle$&$\langle y^{93},xy^{38}\rangle$ &$\langle y^{63},x^7y^{49}\rangle$\\ 
		\hline
		\multirow{3}{*}{$-3$}&10  & 7 &  4& 0& 6&9 &12\\
		& $\langle x^{1023},x^{499}y\rangle$ & $\langle x^{127},x^{107}y\rangle$& $\langle x^{5},x^{4}y^3\rangle$& \multirow{2}{*}{*}& $\langle x^{14},x^{4}y^2\rangle$& $\langle x^{465},x^{118}y\rangle$& $\langle x^{3937},x^{3199}y\rangle$\\ 
		& $\langle y^{1023},xy^{982}\rangle$ &$\langle y^{127},xy^{19}\rangle$ &$\langle y^{15},xy^{12}\rangle$ & &$\langle y^{14},x^2y^{8}\rangle$ & $\langle y^{465},xy^{67}\rangle$&$\langle y^{3937},xy^{2630}\rangle$\\ 
		\hline
	\end{tabular}
	\caption{Topological index and mobility sublattices for the codes specified by $f=1+x+y^b$ and $g=1+y+y^{-2}x^a$.}\label{tabel:02}
\end{table}

\begin{table}[h]
	\begin{tabular}{
			|>{\centering\arraybackslash}p{.03\textwidth}|
			>{\centering\arraybackslash}p{.09\textwidth}|
			>{\centering\arraybackslash}p{.09\textwidth}|
			>{\centering\arraybackslash}p{.09\textwidth}|
			>{\centering\arraybackslash}p{.09\textwidth}|
			>{\centering\arraybackslash}p{.09\textwidth}|
			>{\centering\arraybackslash}p{.09\textwidth}|
			>{\centering\arraybackslash}p{.09\textwidth}|
		}
		\hline
		$a \backslash b$ &  -3 & -2 & -1 & 0 & 1 & 2 & 3 \\
		\hline\hline
		\multirow{3}{*}{$3$} & $Q=13$ &  10 & 7& 4& 3& 5&8\\
		& $\langle x^{762},x^{216}y^6\rangle$ & $\langle x^{889},x^{45}y\rangle$ & $\langle x^{127},x^{115}y\rangle$ & \multirow{2}{*}{$\langle x^3,y^3 \rangle$} & $\langle x^{7},xy\rangle$ & $\langle x^{31},x^{8}y\rangle$ & \multirow{2}{*}{$\langle x^{12},y^{12} \rangle$}\\ 
		& $\langle y^{762}, x^{6}y^{360}\rangle$ & $\langle y^{889},xy^{810}\rangle$ & $\langle y^{127},xy^{74}\rangle$ &  & $\langle y^{7},xy\rangle$ & $\langle y^{31},xy^4\rangle$ &\\ 
		\hline
		\multirow{3}{*}{$2$} & 10  &  8 & 6& 4& 3& 0&\\
		& $\langle x^{341},x^{41}y^3\rangle$ & $\langle x^{217},x^{129}y\rangle$ & $\langle x^{21},x^{14}y^3\rangle$ & $\langle x^{3},xy^5\rangle$ & $\langle x^{7},x^3y\rangle$ & \multirow{2}{*}{*}& \\ 
		&$\langle y^{1023}, xy^{549}\rangle$ & $\langle y^{217},xy^{67}\rangle$ & $\langle y^{9},x^7y^{6}\rangle$ & $\langle y^{15},xy^{5}\rangle$ & $\langle y^{7},xy^{5}\rangle$ &  &\\ 
		\hline
		\multirow{3}{*}{$1$}&7  & 6 & 5 & 4& 0& &\\
		&$\langle x^{127},x^{92}y\rangle$ & $\langle x^{21},x^{7}y^3\rangle$ & $\langle x^{31},x^{13}y\rangle$ & $\langle x^{3},x^2y^5\rangle$ & \multirow{2}{*}{*}& & \\ 
		& $\langle y^{127}, xy^{29}\rangle$ & $\langle y^{9},x^7y^{3}\rangle$ & $\langle y^{31},xy^{12}\rangle$ & $\langle y^{15},xy^{10}\rangle$ &  &  &\\ 
		\hline
		\multirow{3}{*}{$0$}& 4  & 4 & 4 & 4& & &\\
		& \multirow{2}{*}{$\langle x^3,y^3 \rangle$} & $\langle x^{15},x^{10}y\rangle$ & $\langle x^{15},x^{5}y\rangle$ & \multirow{2}{*}{$\langle x^3,y^3 \rangle$} & & & \\ 
		& & $\langle y^{3},x^5y^{2}\rangle$ & $\langle y^{3},x^5y\rangle$ &  &  &  &\\ 
		\hline
		\multirow{3}{*}{$-1$}& 3  & 0 & 3 & & & &\\
		&$\langle x^{7},x^{2}y\rangle$ & \multirow{2}{*}{*} & $\langle x^{7},x^{6}y\rangle$ &  & & & \\ 
		& $\langle y^{7},xy^{4}\rangle$ &  & $\langle y^{7},xy^{6}\rangle$ &  & & &\\ 
		\hline
		\multirow{3}{*}{$-2$}& 6  & 4 &  & & & &\\
		& $\langle x^{63},x^{24}y\rangle$ &   $\langle x^{15},x^{14}y\rangle$ & & & & & \\ 
		& $\langle y^{21},x^3y^{8}\rangle$ &   $\langle y^{15},xy^{14}\rangle$ & & & & &\\ 
		\hline
		\multirow{3}{*}{$-3$}&9  &  &  & & & &\\
		& $\langle x^{42},x^{36}y^6\rangle$ & & & & & & \\ 
		& $\langle y^{42},x^6y^{36}\rangle$ & & & & & &\\ 
		\hline
	\end{tabular}
	\caption{Topological index and mobility sublattices for the codes specified by $f=1+x+x^{-1}y^b$ and $g=1+y+y^{-1}x^a$. We omit the value in right-bottom due to the symmetry between $a$ and $b$.}\label{tabel:11}
\end{table}

\begin{table}[h]
	\begin{tabular}{
			|>{\centering\arraybackslash}p{.03\textwidth}|
			>{\centering\arraybackslash}p{.1\textwidth}|
			>{\centering\arraybackslash}p{.1\textwidth}|
			>{\centering\arraybackslash}p{.09\textwidth}|
			>{\centering\arraybackslash}p{.09\textwidth}|
			>{\centering\arraybackslash}p{.09\textwidth}|
			>{\centering\arraybackslash}p{.09\textwidth}|
			>{\centering\arraybackslash}p{.11\textwidth}|
		}
		\hline
		$a \backslash b$ &  -3 & -2 & -1 & 0 & 1 & 2 & 3 \\
		\hline\hline
		\multirow{3}{*}{$3$} & $Q=15$ &  12 & 9& 6& 4& 3&7\\
		& $\langle x^{4681},x^{2270}y^7\rangle$ & $\langle x^{1365},x^{764}y^3\rangle$ & $\langle x^{511},x^{186}y\rangle$ & \multirow{2}{*}{$\langle x^{3},y^7\rangle$} &$\langle x^{15},x^{2}y\rangle$ & $\langle x^{7},x^{5}y\rangle$ & $\langle x^{127},x^{97}y\rangle$\\ 
		& $\langle y^{32767}, xy^{30443}\rangle$ & $\langle y^{4095},xy^{402}\rangle$ & $\langle y^{511},xy^{261}\rangle$ &  & $\langle y^{15},xy^{8}\rangle$& $\langle y^{7},xy^{3}\rangle$ & $\langle y^{127},xy^{55}\rangle$\\ 
		\hline
		\multirow{3}{*}{$2$} & 12  & 10 & 8 & 6& 3& 4&2\\
		& $\langle x^{33},x^{10}y^{31}\rangle$ & $\langle x^{341},x^{326}y\rangle$ & $\langle x^{255},x^{116}y\rangle$ & \multirow{2}{*}{$\langle x^{3},y^{15}\rangle$} & $\langle x^{7},xy\rangle$ & $\langle x^{5},x^4y^3\rangle$&$\langle x^{3},xy\rangle$ \\ 
		& $\langle y^{1023},xy^{310}\rangle$ & $\langle y^{341},xy^{250}\rangle$ & $\langle y^{255},xy^{11}\rangle$ &  & $\langle y^{7},xy\rangle$ & $\langle y^{15},xy^{12}\rangle$ &$\langle y^{3},xy\rangle$\\ 
		\hline
		\multirow{3}{*}{$1$}&9  & 8 & 7 & 6& 5& 3&4\\
		& $\langle x^{186},x^{34}y^2\rangle$& $\langle x^{255},x^{58}y\rangle$ & $\langle x^{127},x^{74}y\rangle$ & \multirow{2}{*}{$\langle x^{3},y^{15}\rangle$} & $\langle x^{31},x^{11}y\rangle$& $\langle x^{7},x^{4}y\rangle$& $\langle x^{6},x^{4}y^2\rangle$\\ 
		& $\langle y^{186},x^2y^{22}\rangle$& $\langle y^{255},xy^{11}\rangle$ & $\langle y^{127},xy^{115}\rangle$ &  & $\langle y^{31},xy^{17}\rangle$ & $\langle y^{7},xy^{2}\rangle$ &$\langle y^{6},x^2y^{4}\rangle$\\ 
		\hline
		\multirow{3}{*}{$0$}& 6  & 6 & 6 & 6& 6& 6&6\\
		&$\langle x^{63},x^{45}y\rangle$ & \multirow{2}{*}{$\langle x^{7},y^{7}\rangle$}&\multirow{2}{*}{$\langle x^{7},y^{7}\rangle$} &\multirow{2}{*}{$\langle x^{3},y^{7}\rangle$} &$\langle x^{63},x^{54}y\rangle$ & $\langle x^{63},x^{27}y\rangle$&\multirow{2}{*}{$\langle x^{7},y^{7}\rangle$}\\ 
		&$\langle y^{7},x^9y^3\rangle$ &  &  &  & $\langle y^{7},x^9y^6\rangle$&$\langle y^{7},x^9y^5\rangle$ &\\ 
		\hline
		\multirow{3}{*}{$-1$}& 2  & 4 & 5 & 6&7 & 8&9\\
		&$\langle x^{3},xy\rangle$ & $\langle x^{15},x^{14}y\rangle$ & $\langle x^{31},x^{14}y\rangle$ & \multirow{2}{*}{$\langle x^{3},y^{15}\rangle$} & $\langle x^{35},x^{23}y^3\rangle$&$\langle x^{51},x^{41}y^{5}\rangle$ & $\langle x^{381},x^{205}y\rangle$\\ 
		& $\langle y^{3},xy\rangle$ & $\langle y^{15},xy^{14}\rangle$ & $\langle y^{31},xy^{20}\rangle$ &  &$\langle y^{105},xy^{96}\rangle$ & $\langle y^{255},xy^{25}\rangle$&$\langle y^{381},xy^{184}\rangle$\\ 
		\hline
		\multirow{3}{*}{$-2$}& 5  & 0 & 4 & 6& 8&10 &12\\
		& $\langle x^{21},x^{20}y\rangle$ &\multirow{2}{*}{*} &$\langle x^{15},x^{13}y\rangle$ & \multirow{2}{*}{$\langle x^{3},y^{15}\rangle$}& $\langle x^{51},x^{31}y^5\rangle$& $\langle x^{1023},x^{757}y\rangle$& $\langle x^{93},x^{62}y^{31}\rangle$\\ 
		& $\langle y^{21},xy^{20}\rangle$ & & $\langle y^{15},xy^{7}\rangle$& &$\langle y^{255},xy^{140}\rangle$ & $\langle y^{1023},xy^{973}\rangle$&$\langle y^{93},x^{31}y^{62}\rangle$\\ 
		\hline
		\multirow{3}{*}{$-3$}&8  & 5 & 0 & 6& 9&12 &15\\
		& $\langle x^{217},x^{149}y\rangle$ &$\langle x^{31},x^{10}y\rangle$ &\multirow{2}{*}{*} &\multirow{2}{*}{$\langle x^{3},y^{7}\rangle$} & $\langle x^{511},x^{195}y\rangle$& $\langle x^{255},x^{45}y^{15}\rangle$& $\langle x^{32767},x^{17772}y\rangle$\\ 
		& $\langle y^{217},xy^{67}\rangle$ & $\langle y^{31},xy^{28}\rangle$& & &$\langle y^{511},xy^{76}\rangle$ &$\langle y^{255},x^{15}y^{90}\rangle$ &$\langle y^{32767},xy^{24590}\rangle$\\ 
		\hline
	\end{tabular}
	\caption{Topological index and mobility sublattices for the codes specified by $f=1+x+x^{-1}y^b$ and $g=1+y+y^{-2}x^a$.}\label{tabel:12}
\end{table}

\begin{table}[h]
	\begin{tabular}{
			|>{\centering\arraybackslash}p{.03\textwidth}|
			>{\centering\arraybackslash}p{.11\textwidth}|
			>{\centering\arraybackslash}p{.11\textwidth}|
			>{\centering\arraybackslash}p{.09\textwidth}|
			>{\centering\arraybackslash}p{.09\textwidth}|
			>{\centering\arraybackslash}p{.09\textwidth}|
			>{\centering\arraybackslash}p{.09\textwidth}|
			>{\centering\arraybackslash}p{.09\textwidth}|
		}
		\hline
		$a \backslash b$ &  -3 & -2 & -1 & 0 & 1 & 2 & 3 \\
		\hline\hline
		\multirow{3}{*}{$3$} & $Q=18$ &  15 & 12& 9& 6& 5&0\\
		& $\langle x^{81915},x^{45186}y^3\rangle$ & $\langle x^{32767},x^{297}y\rangle$ & $\langle x^{511},x^{322}y\rangle$ & $\langle x^{7},x^{4}y^{73}\rangle$ &$\langle x^{63},x^{22}y\rangle$ & $\langle x^{31},x^{20}y\rangle$ & \multirow{2}{*}{*}\\ 
		& $\langle y^{81915}, x^3y^{52944}\rangle$ & $\langle y^{32767},xy^{27361}\rangle$ & $\langle y^{511},xy^{189}\rangle$ & $\langle y^{511},xy^{146}\rangle$ & $\langle y^{63},xy^{43}\rangle$& $\langle y^{31},xy^{14}\rangle$ & \\ 
		\hline
		\multirow{3}{*}{$2$} & 15  & 13 & 11 & 9& 7& 2&\\
		& $\langle x^{30705},x^{15992}y\rangle$ & $\langle x^{8191},x^{357}y\rangle$ & $\langle x^{1533},x^{1117}y\rangle$ & \multirow{2}{*}{$\langle x^{7},y^{63}\rangle$} & $\langle x^{127},x^{111}y\rangle$ & $\langle x^{3},xy\rangle$& \\ 
		& $\langle y^{30705},xy^{6818}\rangle$ & $\langle y^{8191},xy^{2868}\rangle$ & $\langle y^{1533},xy^{667}\rangle$ &  & $\langle y^{127},xy^{119}\rangle$ & $\langle y^{3},xy\rangle$ &\\ 
		\hline
		\multirow{3}{*}{$1$}& 12 & 11 & 10 & 9& 8& &\\
		& $\langle x^{3937},x^{2043}y\rangle$& $\langle x^{1533},x^{262}y\rangle$ & $\langle x^{1023},x^{262}y\rangle$ & \multirow{2}{*}{$\langle x^{7},y^{63}\rangle$} & $\langle x^{63},x^{39}y^3\rangle$& & \\ 
		& $\langle y^{3937},xy^{1374}\rangle$& $\langle y^{1533},xy^{1334}\rangle$ & $\langle y^{1023},xy^{82}\rangle$ &  & $\langle y^{63},x^3y^{39}\rangle$ &  &\\ 
		\hline
		\multirow{3}{*}{$0$}& 9  & 9 & 9 & 9& & &\\
		&\multirow{2}{*}{$\langle x^{63},y^{7}\rangle$} & $\langle x^{511},x^{292}y\rangle$&$\langle x^{511},x^{73}y\rangle$ &\multirow{2}{*}{$\langle x^{7},y^{7}\rangle$} & & &\\ 
		& & $\langle y^{7},x^{73}y^{2}\rangle$ & $\langle y^{7},x^{73}y\rangle$ &  & & &\\ 
		\hline
		\multirow{3}{*}{$-1$}& 6  & 7 & 8 & & & &\\
		&$\langle x^{63},x^{30}y\rangle$ & $\langle x^{127},x^{119}y\rangle$ & $\langle x^{30},x^{24}y^6\rangle$ &  & & & \\ 
		& $\langle y^{21},x^3y^{19}\rangle$ & $\langle y^{127},xy^{111}\rangle$ & $\langle y^{30},x^6y^{24}\rangle$ &  & & &\\ 
		\hline
		\multirow{3}{*}{$-2$}& 0  & 5 &  & & & &\\
		& \multirow{2}{*}{*}& $\langle x^{21},x^{20}y\rangle$ & & & & & \\ 
		& & $\langle y^{21},xy^{20}\rangle$ & & & & &\\ 
		\hline
		\multirow{3}{*}{$-3$}&6  &  &  & & & &\\
		& $\langle x^{63},x^{62}y\rangle$ & & & & & & \\ 
		& $\langle y^{63},xy^{62}\rangle$ & & & & & &\\ 
		\hline
	\end{tabular}
	\caption{Topological index and mobility sublattices for the codes specified by $f=1+x+x^{-2}y^b$ and $g=1+y+y^{-2}x^a$. We omit the value in right-bottom due to the symmetry between $a$ and $b$.}\label{tabel:22}
\end{table}

\section{Logical qubit count and anyon periods}\label{app:logical_qubit}

Here, we prove the identity presented in Eq.~\eqref{eq:k_relation} of the main text,
which is restated below as Eq.~\eqref{eq:klm}. This identity
asserts that $k(\ell,m)$ is fully determined by the greatest common
divisors $\gcd\left(\ell,\ell_{o}\right)$ and $\gcd\left(m,m_{o}\right)$,
where $\ell$ and $m$ are the linear sizes of the system, and $\ell_{o}$
and $m_{o}$ denote the anyon periods.

This result highlights the
interplay between system sizes and anyon periodicity.

\begin{thm}
	Let $R\coloneqq\mathbb{F}_{2}\left[x^{\pm},y^{\pm}\right]$, and $f,g\in R$.
	Suppose that $\left|R/\left(f,g\right)\right|<\infty$. Let $\ell_{o}$
	and $m_{o}$ be the periods for all anyons along the $x$ and $y$
	directions, respectively. Let 
	\begin{equation}
		k\left(\ell,m\right)\coloneqq\dim_{\mathbb{F}_{2}}\frac{R}{\left(f,g,x^{\ell}-1,y^{m}-1\right)}.
	\end{equation}
	Then we have the identity 
	\begin{equation}
		k\left(\ell,m\right)=k\left(\gcd\left(\ell,\ell_{0}\right),\gcd\left(m,m_{o}\right)\right).\label{eq:klm}
	\end{equation}
\end{thm}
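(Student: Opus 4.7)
The plan is to reduce the identity to the classical one-variable observation that, in the principal ideal domain $\mathbb{F}_{2}[x]$,
\begin{equation}
\left(x^{\ell}-1,\ x^{\ell_{o}}-1\right)=\left(x^{\gcd(\ell,\ell_{o})}-1\right),
\end{equation}
which follows from the Euclidean algorithm on exponents (using $x^{\ell}-1=x^{\ell-\ell_{o}}(x^{\ell_{o}}-1)+(x^{\ell-\ell_{o}}-1)$ when $\ell>\ell_{o}$). Since $R=\mathbb{F}_{2}[x^{\pm},y^{\pm}]$ is a localization of, and free module over, $\mathbb{F}_{2}[x]$, the same equality of ideals persists inside $R$. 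The defining property of the anyon periods, namely that $x^{\ell_{o}}-1$ and $y^{m_{o}}-1$ already lie in $(f,g)_{R}$ (they are guaranteed to exist because $|R/(f,g)|<\infty$ forces $x$ and $y$ to be torsion in the unit group of the quotient), is precisely what will let me invoke this $\gcd$-identity and collapse the exponents $\ell\mapsto\gcd(\ell,\ell_{o})$ and $m\mapsto\gcd(m,m_{o})$.

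Concretely, I would carry out the reduction in two nearly identical steps. First, because $x^{\ell_{o}}-1\in(f,g)$, I can enlarge the ideal without changing it, and then apply the $\gcd$-identity to the two $x$-only relations:
\begin{equation}
(f,g,x^{\ell}-1,y^{m}-1)=(f,g,x^{\ell}-1,x^{\ell_{o}}-1,y^{m}-1)=(f,g,x^{\gcd(\ell,\ell_{o})}-1,y^{m}-1).
\end{equation}
Second, because $y^{m_{o}}-1\in(f,g)\subseteq(f,g,x^{\gcd(\ell,\ell_{o})}-1)$, the same argument applied now in the $y$-variable gives
\begin{equation}
(f,g,x^{\gcd(\ell,\ell_{o})}-1,y^{m}-1)=(f,g,x^{\gcd(\ell,\ell_{o})}-1,y^{\gcd(m,m_{o})}-1).
\end{equation}
Taking $\dim_{\mathbb{F}_{2}}$ of $R$ modulo either side yields $k(\ell,m)=k(\gcd(\ell,\ell_{o}),\gcd(m,m_{o}))$, which is the identity (\ref{eq:klm}).

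The main obstacle is essentially notational rather than substantive: one must verify that the single-variable $\gcd$-identity carries over unchanged from $\mathbb{F}_{2}[x]$ to the two-variable Laurent ring $R$. This is a short observation, since both generators $x^{\ell}-1$ and $x^{\ell_{o}}-1$ already live in $\mathbb{F}_{2}[x]$, and passing to ideals along the flat inclusion $\mathbb{F}_{2}[x]\hookrightarrow R$ preserves generation; the Bezout coefficients produced in $\mathbb{F}_{2}[x]$ remain valid in $R$, and divisibility of $x^{\ell}-1$ and $x^{\ell_{o}}-1$ by $x^{\gcd(\ell,\ell_{o})}-1$ is already visible in $\mathbb{F}_{2}[x]$. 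Once this is noted, the proof is purely formal ideal manipulation, with no further case analysis required.
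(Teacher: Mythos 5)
Your proof is correct and takes essentially the same route as the paper: both hinge on the observation that the anyon periods force $x^{\ell_o}-1, y^{m_o}-1 \in (f,g)_R$, so these relations can be adjoined to the ideal for free, after which the exponent-$\gcd$ identity $\left(x^{\ell}-1, x^{\ell_o}-1\right) = \left(x^{\gcd(\ell,\ell_o)}-1\right)$ (via the Euclidean algorithm on exponents) collapses the torus dimensions. Your extra remark that the $\gcd$-identity, being witnessed by Bezout coefficients in $\mathbb{F}_2[x]$, transports verbatim to the Laurent ring $R$ is a small point the paper leaves implicit; it is accurate (flatness is not even needed, since ideal generation is preserved under any ring homomorphism).
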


\begin{proof}
	For clarity, the ideal of $R$ generated by $f_{1},f_{2},\cdots,f_{n}$
	is denoted $\left(f_{1},f_{2},\cdots,f_{n}\right)_{R}$. 
	
	By the definition of anyon periods, we have $x^{\ell_{o}}-1,y^{m_{o}}-1\in\left(f,g\right)_{R}$.
	Consequently, $x^{\ell_{o}}-1,y^{m_{o}}-1\in(f,g,x^{\ell}-1,y^{m}-1)_{R}$
	as well. It follows that
	\begin{equation}
		k\left(\ell,m\right)=\dim\frac{R}{\left(f,g,x^{\ell}-1,y^{m}-1,x^{\ell_{o}}-1,y^{m_{o}}-1\right)}=\dim\frac{R}{\left(f,g\right)+\left(x^{\ell}-1,x^{\ell_{o}}-1\right)+\left(y^{m}-1,y^{m_{o}}-1\right)}.
	\end{equation}
	Using Euclidean algorithm, we have $(x^{\ell}-1,x^{\ell_{o}}-1)_{R}=(x^{\gcd\left(\ell,\ell_{o}\right)}-1)_{R}$
	and $(y^{m}-1,y^{m_{o}}-1)_{R}=(y^{\gcd(m,m_{o})}-1)_{R}$. Therefore,
	\begin{equation}
		k\left(\ell,m\right)=\dim_{\mathbb{F}_{2}}\frac{R}{\left(f,g\right)+(x^{\gcd\left(\ell,\ell_{o}\right)}-1)+(y^{\gcd(m,m_{o})}-1)}=k\left(\gcd\left(\ell,\ell_{o}\right),\gcd\left(m,m_{o}\right)\right).
	\end{equation}
	This completes the proof.
\end{proof}

\section{Koszul complex and anyon-logic duality}

The relation between anyons and logical operators in the BB codes
can be understood as a direct consequence of the following duality
property of Koszul complexes (Theorem~\ref{thm:Koszul}).

\begin{thm} \label{thm:Koszul}
	Let $R$ be a commutative ring, and let $a,b,c,d\in R$. Suppose that
	the Koszul complexes
	\begin{equation}
		R\xrightarrow[\phantom{(d\ c)}]{\partial_{2}={b \choose -a}}R\oplus R\xrightarrow{\partial_{1}=(a\ b)}R\qquad\text{and}\qquad R\xrightarrow[\phantom{(d\ c)}]{\sigma_{2}={d \choose -c}}R\oplus R\xrightarrow{\sigma_{1}=(c\ d)}R
	\end{equation}
	are exact. Consider the modified Koszul complexes obtained by replacing
	$R$ with the quotients $\frac{R}{\left(c,d\right)}$ and $\frac{R}{\left(a,b\right)}$,
	respectively:
	\begin{equation}
		\frac{R}{\left(c,d\right)}\xrightarrow[\phantom{(d\ c)}]{\tilde{\partial}_{2}={b \choose -a}}\frac{R}{\left(c,d\right)}\oplus\frac{R}{\left(c,d\right)}\xrightarrow{\tilde{\partial}_{1}=(a\ b)}\frac{R}{\left(c,d\right)}\qquad\text{and}\qquad\frac{R}{\left(a,b\right)}\xrightarrow[\phantom{(d\ c)}]{\tilde{\sigma}_{2}={d \choose -c}}\frac{R}{\left(a,b\right)}\oplus\frac{R}{\left(a,b\right)}\xrightarrow{\tilde{\sigma}_{1}=(c\ d)}\frac{R}{\left(a,b\right)}.
	\end{equation}
	Then, the homologies of these modified complexes, namely,
	\begin{equation}
		H\left(a,b;c,d\right)\coloneqq\ker\tilde{\partial}_{1}/\mathrm{im}\,\tilde{\partial}_{2}\quad\text{and}\quad H\left(c,d;a,b\right)\coloneqq\ker\tilde{\sigma}_{1}/\mathrm{im}\,\tilde{\sigma}_{2}
	\end{equation}
	are isomorphic as $R$-modules.
\end{thm}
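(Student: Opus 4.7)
The plan is to identify both homologies with the same $\mathrm{Tor}$ group by recognizing the two Koszul complexes as free resolutions. By hypothesis, $R \xrightarrow{\partial_2} R^2 \xrightarrow{\partial_1} R$ is exact, meaning $\partial_2$ is injective and $\ker\partial_1 = \mathrm{im}\,\partial_2$; together with the quotient surjection $R \twoheadrightarrow R/(a,b)$, this yields the length-two free resolution
\[
0 \to R \xrightarrow{\partial_2} R^2 \xrightarrow{\partial_1} R \to R/(a,b) \to 0,
\]
and, symmetrically, a length-two free resolution of $R/(c,d)$ obtained from $K(c,d)$. This reinterpretation is the crux: the exactness hypothesis is exactly what is needed to bring Tor machinery to bear.

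The next step is to tensor the resolution $K(a,b)$ over $R$ with $R/(c,d)$. Since the matrix entries of $\partial_1$ and $\partial_2$ are $a$ and $b$, the induced maps on $(R/(c,d))^{\bullet}$ are precisely $\tilde\partial_1$ and $\tilde\partial_2$, so the tensored complex coincides with the first modified Koszul complex in the statement. By the definition of $\mathrm{Tor}$ via a free resolution of the first argument, its $p$th homology is $\mathrm{Tor}_p^R(R/(a,b), R/(c,d))$; at $p=1$ this is exactly $H(a,b;c,d)$. Running the same argument with the roles of $(a,b)$ and $(c,d)$ exchanged identifies $H(c,d;a,b)$ with $\mathrm{Tor}_1^R(R/(c,d), R/(a,b))$.

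The theorem then reduces to the balanced property $\mathrm{Tor}_p^R(M,N) \cong \mathrm{Tor}_p^R(N,M)$, which I would simply cite from standard homological algebra: its textbook proof uses the double complex $P_\bullet \otimes Q_\bullet$ formed from projective resolutions of $M$ and $N$, whose two spectral sequences both collapse at $E_2$, one onto each $\mathrm{Tor}$. I do not anticipate a deep obstacle; the main conceptual step is the reinterpretation of the hypothesis as a free-resolution statement, and the one place requiring care is the reading of ``exact'' for the three-term sequence (injectivity on the left plus middle exactness, not surjectivity on the right, which would force $R/(a,b)=0$ and trivialize the conclusion). With that in place, the symmetry of $\mathrm{Tor}$ delivers the $R$-module isomorphism $H(a,b;c,d) \cong H(c,d;a,b)$ immediately.
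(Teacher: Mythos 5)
Your proposal is correct and takes a genuinely different route from the paper. The paper gives a direct, elementary computation: it introduces $W\coloneqq\mathrm{im}\,\sigma_1\cap\mathrm{im}\,\partial_1=(a,b)\cap(c,d)$ and $W_0\coloneqq(ac,ad,bc,bd)$, then shows by hand that each homology is isomorphic to $W/W_0$, deriving the duality by explicitly constructing both isomorphisms. Your proof instead recognizes that the modified Koszul complex is precisely the length-one-truncated free resolution of $R/(a,b)$ tensored with $R/(c,d)$, so $H(a,b;c,d)\cong\mathrm{Tor}_1^R\bigl(R/(a,b),R/(c,d)\bigr)$, and the theorem reduces to the balanced (symmetry) property of $\mathrm{Tor}$. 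These are really the same result seen from two altitudes: the paper's $W/W_0=(I\cap J)/IJ$ is the classical explicit formula for $\mathrm{Tor}_1^R(R/I,R/J)$, so the paper is effectively re-deriving, in a self-contained way, exactly the content your citation of balanced $\mathrm{Tor}$ supplies. Your approach buys conceptual clarity and immediately places the result in a standard homological framework (and shows the isomorphism is natural), at the cost of invoking the spectral-sequence or double-complex proof of balanced $\mathrm{Tor}$ as a black box; the paper's approach is entirely self-contained and produces an explicit description $H\cong W/W_0$ that may be useful computationally.

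One small point of precision worth flagging: the hypothesis only needs exactness at the middle term, i.e.\ $\mathrm{im}\,\partial_2=\ker\partial_1$ (and likewise for $\sigma$); you do not actually need $\partial_2$ injective. A partial free resolution $R\to R^2\to R\to R/(a,b)\to 0$ exact at the last three spots already computes $\mathrm{Tor}_0$ and $\mathrm{Tor}_1$ correctly, since the degree-$1$ homology of the tensored complex depends only on $\bar\partial_1$ and $\bar\partial_2$. So your argument is robust under the weaker reading of the exactness hypothesis; if you include injectivity you are assuming slightly more than is needed, but this is harmless. Otherwise the proposal is sound.
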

\begin{rem}
	The maps $\tilde{\partial}_{1}$, $\tilde{\partial}_{2}$, $\tilde{\sigma}_{1}$,
	and $\tilde{\sigma}_{2}$ have the same matrix representations as
	$\partial_{1}$, $\partial_{2}$, $\sigma_{1}$, and $\sigma_{2}$,
	respectively, but are defined over different domains.
\end{rem}

\begin{proof}
	Straightforwardly from the definitions of the maps, $\ker\tilde{\partial}_{1}=\frac{\partial_{1}^{-1}\left(\text{im }\sigma_{1}\right)}{\text{im}\,\sigma_{1}\oplus\text{im}\,\sigma_{1}}$
	and $\text{im}\,\tilde{\partial}_{2}=\frac{\left(\text{im }\sigma_{1}\oplus\text{im}\,\sigma_{1}\right)+\text{im}\,\partial_{2}}{\text{im}\,\sigma_{1}\oplus\text{im}\,\sigma_{1}}$.
	Thus, 
	\begin{align}
		H\left(a,b;c,d\right) & =\frac{\ker\tilde{\partial}_{1}}{\text{im}\,\tilde{\partial}_{2}}\cong\frac{\partial_{1}^{-1}\left(\text{im }\sigma_{1}\right)}{\left(\text{im}\,\sigma_{1}\oplus\text{im}\,\sigma_{1}\right)+\text{im}\,\partial_{2}}=\frac{\sigma_{1}^{-1}\left(\text{im}\,\sigma_{1}\right)}{\left(\text{im}\,\sigma_{1}\oplus\text{im}\,\sigma_{1}\right)+\ker\partial_{1}},\label{eq:K1-1}
	\end{align}
	where in the last equality we use $\text{im}\,\partial_{2}=\ker\partial_{1}$
	due to the exactness assumption. Similarly, we have 
	\begin{equation}
		H\left(c,d;a,b\right)\cong\frac{\sigma_{1}^{-1}\left(\text{im}\,\partial_{1}\right)}{\left(\text{im}\,\partial_{1}\oplus\text{im}\,\partial_{1}\right)+\ker\sigma_{1}}.
	\end{equation}
	
	Let $W\coloneqq\text{im }\sigma_{1}\cap\text{im }\partial_{1}$. Then
	$\partial_{1}^{-1}\left(\text{im}\,\sigma_{1}\right)=\sigma_{1}^{-1}\left(W\right)$
	and $\sigma_{1}^{-1}\left(\text{im}\,\partial_{1}\right)=\sigma_{1}^{-1}\left(W\right)$.
	Therefore,
	\begin{equation}
		H\left(a,b;c,d\right)\cong\frac{\partial_{1}^{-1}\left(W\right)}{\left(\text{im}\,\sigma_{1}\oplus\text{im}\,\sigma_{1}\right)+\ker\partial_{1}},\qquad H\left(c,d;a,b\right)\cong\frac{\sigma_{1}^{-1}\left(W\right)}{\left(\text{im}\,\partial_{1}\oplus\text{im}\,\partial_{1}\right)+\ker\sigma_{1}}.
	\end{equation}
	
	Let $W_{0}\coloneqq\partial_{1}\left(\left(\text{im}\,\sigma_{1}\oplus\text{im}\,\sigma_{1}\right)+\ker\partial_{1}\right)=\partial_{1}\left(\text{im}\,\sigma_{1}\oplus\text{im}\,\sigma_{1}\right)=\left(ac,ad,bc,bd\right)_{R}$.
	Let $\partial_{W}:\partial_{1}^{-1}\left(W\right)\rightarrow W/W_{0}$
	be the composition of the $R$-module homomorphisms
	\begin{equation}
		\partial_{W}:\quad\partial_{1}^{-1}\left(W\right)\xrightarrow{\partial_{1}}W\rightarrow\frac{W}{W_{0}}.
	\end{equation}
	By construction, $\partial_{W}$ is surjective and $\ker\partial_{W}=\left(\text{im}\,\sigma_{1}\oplus\text{im}\,\sigma_{1}\right)+\ker\partial_{1}$.
	Thus, 
	\begin{equation}
		H\left(a,b;c,d\right)\cong\frac{W}{W_{0}}.
	\end{equation}
	
	Note $\sigma_{1}\left(\left(\text{im }\partial_{1}\oplus\text{im }\partial_{1}\right)+\ker\sigma_{1}\right)=W_{0}$.
	Thus, analogously, 
	\begin{equation}
		\sigma_{W}:\quad\sigma_{1}^{-1}\left(W\right)\xrightarrow{\sigma_{1}}W\rightarrow\frac{W}{W_{0}}
	\end{equation}
	induces an $R$-module isomorphism
	\begin{equation}
		H\left(c,d;a,b\right)\cong\frac{W}{W_{0}}. 
	\end{equation}
	
	Therefore, $H\left(a,b;c,d\right)\cong H\left(c,d;a,b\right)$ as
	$R$-modules. 
\end{proof}

\begin{rem}
	We emphasize that this is an $R$-module isomorphism. Physically, this implies that not only are the vector dimensions preserved, but also the structure of the translation symmetry is respected.
\end{rem}

The duality property of Koszul complexes proves the generalized connection
between anyons and logical operators claimed in the main text based
on physical arguments. In the discussion of BB codes, $\left(a,b\right)=(x^{\ell}-1,y^{m}-1)$
corresponds to periodic boundary conditions and $\left(c,d\right)=\left(g,f\right)$
describes the check operators.

\section{product construction}

Product constructions are instrumental tools for discovering good QLDPC codes and exploring exotic quantum phases of matter~\cite{Panteleev22,Panteleev22b,tan2023fracton}. We first review the hypergraph product (HGP) and balanced product (BP) constructions of quantum CSS
codes in the language of chain complexes. Furthermore, we will demonstrate the construction of GB and BB codes using these product approaches.

A classical error-correcting code can be described as a two-term chain complex:
\begin{equation}\label{classical_chain}
	V_1 \overset{h^\dagger}{\longrightarrow} V_0.
\end{equation}
In this formulation, $V_1=\mathbb{F}_2^m$ and $V_0=\mathbb{F}_2^n$ denote vector spaces over the finite field $\mathbb{F}_2$ corresponding to checks and bits, respectively. The map $h\in \mathbb{F}_2^{n\times m}$ represents the parity check matrix, where an element $h_{ij}=1$ indicates that the $j$-th bit is supported by the $i$-th check. To connect to quantum CSS codes, we extend to the three-term chain complex and associate the $X$-stabilizers, qubits, and $Z$-stabilizers to $\mathbb{F}_2$-vector spaces $V_2, V_1, V_0$ respectively.
\begin{equation}\label{CSS_chain}
	V_2 \overset{h_X}{\longrightarrow} V_1 \overset{h_Z^\dagger}{\longrightarrow} V_0
\end{equation}  
The requirement $h_Z^T\circ h_X=0$ ensures that $X$-stabilizers and $Z$-stabilizers mutually intersect an even number of qubits, thereby maintaining the commutation relations between the $Z$ and $X$ stabilizers.

The hypergraph product and balanced product of two classical codes, within the chain complex framework, can be formulated as tensor products of two chain complexes.
Consider two classical codes $\mathcal{C}$ and $\mathcal{D}$, each exhibiting a symmetry under the action of a group $G$. We assume $G$ to be abelian,  although this can be generalized to non-abelian cases.
The representations of $\mathcal{C}$ and $\mathcal{D}$ are given as follows:
\begin{align}
	\mathcal{C} &: C_1 \overset{f}{\longrightarrow} C_0 \\
	\mathcal{D} &: D_1 \overset{g}{\longrightarrow} D_0.
\end{align}
The tensor product of two chain complexes over $G$ is given by:
\begin{equation}\label{BP_chain}
	\mathcal{C}\otimes_G\mathcal{D}: C_1\otimes_G D_1 \xrightarrow{h_X} (C_0\otimes_G D_1)\oplus
	(C_1\otimes_G D_0)\xrightarrow{h_Z^\dagger} C_0\otimes_G D_0.
\end{equation}
where the boundary maps are defined as:
\begin{align}
	h_X &= f \otimes_G \text{id} + \text{id} \otimes_G g,\\
	h_Z^\dagger &=\text{id} \otimes_G g - f \otimes_G \text{id},
\end{align}
the notation $\otimes_G$ is equivalent to $\otimes_R$, where $R=\mathbb{F}_2G$ is a group algebra. The CSS code described by the above chain complex is referred to as the balanced product of $\mathcal{C}$ and $\mathcal{D}$. The hypergraph product is a particular instance of the balanced product when the group $G$ is trivial. Moreover, if $G$ acts freely on each of $C_i$ and $D_i$, the balanced product reduces to the lifted product~\cite{Breuckmann21a}.
\begin{ex}
	When $f$ is a univariate polynomial in $x$ and $g$ is a univariate polynomial in $y$. The BB code is the HGP of two classical cyclic codes of lengths $\ell$ and $m$, characterized by generating polynomials $f(x)$ and $g(y)$, respectively. This relation is described as follows
	\begin{equation}
		R_{\ell,m}\xrightarrow{\; h_{X}={f \choose g}\;}R_{\ell,m}^{2}\xrightarrow{h_{Z}^{\dagger}=\left(g\ f\right)}R_{\ell,m},
	\end{equation} 
	where $R_{\ell,m} \coloneqq \frac{\mathbb{F}_2[x]}{(x^\ell-1)} \otimes \frac{\mathbb{F}_2[y]}{(y^m-1)} = \frac{\mathbb{F}_2[x,y]}{(x^{\ell}-1, y^m-1)}$.
	Here we use the polynomial formalism to represent the codes. With $x=S_\ell$ and $y=S_m$, where $S_n$ is a cyclic shift matrix of size $n\times n$, this provides a binary representation equivalent to what is discussed in Eq.~\eqref{classical_chain} and Eq.~\eqref{CSS_chain}.
\end{ex}
\begin{ex}\cite{Eberhardt24}
	Consider $G=\mathbb{Z}_\ell \otimes \mathbb{Z}_m$ and $R_{\ell,m} = \mathbb{F}_2 G \cong \mathbb{F}_2[x,y]/(x^\ell -1, y^m-1)$. Two classical codes with $G$-symmetry are given by
	\begin{align}
		\mathcal{C}&: R_{\ell,m} \xrightarrow{f(x,y)} R_{\ell,m}, \\
		\mathcal{D}&: R_{\ell,m} \xrightarrow{g(x,y)} R_{\ell,m}.
	\end{align}
	Note that $R_{\ell,m}\otimes_G R_{\ell,m} \cong R_{\ell,m}$. By modding out the diagonal $G$ action, we obtain BB($f$, $g$), which corresponds to $\mathcal{C} \otimes_G \mathcal{D}$. Since $G$ acts freely on $R_{\ell,m}$, BB codes are also lifted product codes.
\end{ex}
\begin{rem}
	A generalized bicycle (GB) code of length $2 n$ is equivalent to a rotated HGP code with periodicity vectors $\vec{L}_1$ and $\vec{L}_2$ such that $|\vec{L}_1 \times \vec{L}_2|=n$ \cite{Wang22}. A BB code becomes a GB code if $\gcd(\ell, m)=1$.
\end{rem}

\section{Subsystem symmetry structure of BB codes}

Fracton (or fracton-like) physics can often be theoretically connected to subsystem symmetries~\cite{Vijay16} or broader classes of spatially modulated symmetries~\cite{Sala22,Delfino2024}. Here, we investigate this perspective for BB codes, motivated by their quasi-fractonic features. Specifically, we show that BB codes can be interpreted as lattice gauge theories associated with subsystem symmetries.

We explicitly ``ungauge'' BB codes to specify the subsystem symmetry
structure. This is done by employing the algebraic formulation developed
in Ref.~\cite{Vijay16}, which generalizes the Ising--toric code
correspondence~\cite{Vijay16,Canossa24}. The ungauging result for BB code $\mathbb{BB}\left(f,g\right)$
is a generalized Ising model specified by the algebraic map
\[
R^{2}\xrightarrow{h_{Z}^{\dagger}=(f,\,g)}R,
\]
where spins are assigned to vertices of the lattice, and the polynomials
$f$ and $g$ define two types of spin interactions. The dual map
$h_{Z}$ governs how spin-flip operations generate excitations in
this model.

The kernel \( \ker h_Z \) thus encodes the space of subsystem spin-flip symmetries. Each element
\[
\xi = \sum_{i,j} \varphi_{i,j} x^i y^j \in \ker h_Z
\]
selects a (possibly infinite) set of lattice sites whose simultaneous flip creates no excitation. The coefficients \( \varphi_{i,j} \) can be viewed as a lattice function: \(\varphi_{i,j}=1\) marks a spin to flip, while \(\varphi_{i,j}=0\) leaves it untouched.

\end{document}